\providecommand{\U}[1]{\protect\rule{.1in}{.1in}}
\newtheorem{theorem}{Theorem}
\newtheorem{remark}[theorem]{Remark}
\newenvironment{proof}[1][Proof]{\noindent\textbf{#1.} }{\ \rule{0.5em}{0.5em}}
\begin{document}

\title{\textbf{General properties of the expansion methods of Lie algebras}}
\author{L. Andrianopoli$^{1,2}$, N. Merino$^{3}$, F. Nadal$^{1,4}$
and M. Trigiante$^{1,2}$\\$^{1}${\small Dipartimento di Scienza Applicata e Tecnologia (DISAT),
Politecnico di Torino, }\\{\small Corso Duca degli Abruzzi, 24, I-10129 Torino, Italia}\\$^{2}${\small Instituto Nazionale di Fisica Nucleare (INFN) Sezione di Torino,}\\{\small Via Pietro Giuria, 1  10125 Torino, Italia}\\$^{3}${\small Instituto de F\'{\i}sica, Pontificia Universidad Cat\'{o}lica de
Valpara\'{\i}so (PUCV), }\\{\small Av. Brasil 2950, Valpara\'{\i}so, Chile}\\$^{4}${\small Instituto de F\'{\i}sica Corpuscular (IFIC), Edificio Institutos
de Investigaci\'{o}n. }\\{\small c/ Catedr\'{a}tico Jos\'{e} Beltr\'{a}n, 2. E-46980 Paterna.
Espa\~{n}a.}}
\maketitle

\begin{abstract}

The study of the relation between Lie algebras and groups, and especially the derivation of new algebras from them, is a problem of great interest in mathematics and physics, because finding a new Lie group from an already known one also means that a new physical theory can be obtained from a known one. One of the procedures that allow to do so is called expansion of Lie algebras, and has been recently used in different physical applications - particularly in gauge theories of gravity. Here we report on further developments of this method, required to understand in a deeper way their consequences in physical theories. We have found theorems related to the preservation of some properties of the algebras under expansions that can be used as criteria and, more specifically, as necessary conditions to know if two arbitrary Lie algebras can be related by the some expansion mechanism. Formal aspects, such as the Cartan decomposition of the expanded algebras, are also discussed. Finally, an instructive example that allows to check explicitly all our theoretical results is also provided.
\end{abstract}

\section{Introduction}

Global and local symmetries of a physical system play an essential role in
modern theoretical physics and its physical applications. Apart from a
well-known relation between symmetries and conserved charges via the
Noether theorem, the knowledge of a symmetry group of a certain theory is
deeply built-in in its theoretical description and may lead to restrictive
`no-go' theorems. Introduction of new algebras is, in that sense, required by
possible solutions to these theoretical problems in the hope that they could
circumvent the `no-go' theorems. A beautiful example is the introduction of Lie
superalgebras that unify in a nontrivial way spacetime and internal symmetries
of the microscopic world, not allowed in a purely bosonic context by the
Coleman-Mandula theorem. In this way, the study of the relation between Lie algebras and groups, and especially the derivation of new algebras from them, is a problem of great
interest in mathematics and physics, because finding a new Lie group from an
already known one also means that a new physical theory can be obtained from a
known one. This is particularly useful, for example, in gauge theories (like
Yang-Mills and Chern-Simons theories) which have the symmetry group as a
fundamental ingredient.

Thus, setting aside the trivial problem of finding whether a Lie algebra is a
subalgebra of another one, there are, essentially, three different ways of
relating and/or obtaining new algebras from given ones. In fact, it was during
the second half of the XX century that certain mechanisms were developed to
obtain \textit{non-trivial}\footnote{By \textit{non-trivial relations} we mean
that these mechanisms allow us to obtain some Lie algebras starting from other
algebras that have completely different properties. Also, the original algebra
is not necessarily (though it could be in specific cases) contained as a
subalgebra of the algebra obtained by these processes.} relations between
different Lie groups and algebras. These mechanisms are known as \textit{contractions}
\cite{Segal,IW,WW,Saletan}, \textit{deformations} \cite{def1,def2,def4} and \textit{extensions}
\cite{AzLibro}, which all share the property of maintaining the dimension of the
original group or algebra. As we are going to see now, this work is focused on a generalization of the contraction procedure called \textit{expansion} that,
starting from a given algebra, permits us to generate algebras of a higher
dimension than the original one.

Expansions of Lie algebras are generalizations of the Weimar-Woods (WW)
contraction method \cite{WW} and were introduced some years ago in refs.
\cite{hs,aipv1,aipv2,aipv3}. While in a contraction a suitable
rescaling of some generators of the Lie algebra is done, in the expansion
method the starting point is to consider an algebra $\mathcal{G}$, with the
basis of generators $X_{i}$, as described by the Maurer-Cartan (MC) forms
$\omega^{i}(\phi)$ on the manifold $\mathcal{M}(\phi)$ of its associated group $G$. As it is known, the local
structure of the Lie group is encoded in the so called Maurer-Cartan
equations:
\[
d\omega^{k}=-\frac{1}{2}\,c_{ij}^{k}\,\omega^{i}\wedge\omega^{j}\,,
\]
which are just an equivalent description to the one given in terms of Lie
brackets among Lie algebra generators, $\left[  X_{i},X_{j}\right]  =c_{ij}^{k}\,X_{k}$. When a rescaling by
a parameter $\lambda$ is performed on some of the group coordinates $\phi^{i}%
$, the forms $\omega^{i}\left(  \phi,\lambda\right)  $ can be expanded as
power series in $\lambda$. Inserting these expansions back in the original MC
equations for $\mathcal{G}$, one obtains the MC equations of a new
finite-dimensional expanded Lie algebra. As shown in \cite{aipv1,aipv2,aipv3}
this method reproduces the In\"{o}n\"{u}-Wigner contractions \cite{IW} and its
generalizations in the sense of Weimar-Woods \cite{WW}. In these cases the
dimension of the algebra is preserved but, in general, this expansion method
also permits to obtain higher dimensional Lie algebras.

A generalization of the above method is the $S$-expansion\ \cite{irs} that
combines the structure constants of the algebra $\mathcal{G}$ with the inner
law of an Abelian semigroup $S$ in order to define the Lie bracket of a new
$S$-expanded algebra. When certain conditions are met, the method permits the
extraction of smaller algebras which are called \textit{resonant subalgebras}
and \textit{reduced algebras}. This method reproduces the results of the first
expansion method described above (that uses a parameter $\lambda$) for a
particular choice of the semigroup: one of the semigroups denoted as
$S_{E}^{\left(  n\right)  }$\ and its definition is given in \cite{irs}.
Since it reproduces the method above, it also reproduces all the generalized
In\"{o}n\"{u}-Wigner (IW) contractions, as is shown in Ref. \cite{irs}
explicitly\footnote{However, it is not clear if $S$-expansions exhaust all
possible contractions since there exist contractions that are not realized by
generalized IW-contractions \cite{Nesterenko1,Nesterenko6,Nesterenko7}.
Besides, there exist $S$-expansions which are not equivalent to contractions
as it was recently shown in \cite{Nesterenko}.}. The present article is
dedicated to a further development of the $S$-expansion method and its
application to the theory of Lie algebras.

Interesting physical applications of these methods, particularly of the $S$-expansion, appeared recently in the literature. One of the advantages, used in those applications, is that if we know the invariant tensors of a certain Lie algebra then the mechanism gives the invariant tensors for the expanded algebras, even if the last ones are not semisimple. This feature is especially useful in the construction of transgressions and Chern-Simons (CS) gauge theories of (super)gravity \cite{cham,zan,olea,olea2,olea3,olea4,mora,irs1} where the invariant tensors of the symmetry group under which the theory is invariant are fundamental ingredients of the theory. On one side, the problem of finding all the invariant tensors for a non-semisimple Lie algebra remains as an open problem until know. But this is not only an important mathematical problem but also one with physical relevance, because given an algebra the choice of the invariant tensor fixes the classes of interactions that may occur between the different fields in the theory. In fact, the standard procedure to obtain an invariant tensor of range $r$ is to use the symmetrized (super)trace for the product of $r$ generators in some matrix representation of the algebra. However this procedure has some limitations in the case of non-semisimple Lie algebras. A concrete example is provided in ref. \cite{irs1} where an eleven-dimensional gauge theory for the M Algebra (the maximal supersymmetric extension of the Poincar\'{e} algebra in $11$-dimensions) is constructed. It is shown that the supertrace is not a good choice to construct the theory, because it has so many vanish components that a transgression or a CS Lagrangian constructed from it depends only on the spin connection and it represents therefore a kind of \textit{exotic gravity}. Therefore with the supertrace as an invariant tensor it is not possible to reproduce general relativity neither to include fermionic fields or fields associated to central charges. However, by using the $S$-expansion procedure it is possible to consider the M algebra as an expansion of the $\mathfrak{osp}(32/1)$ algebra providing an invariant tensor, different from the supertrace, which leads to a theory with a richer structure. It is worthy to note here that while the $S$-expansion method does not solve the problem of classifying all invariant tensors for non semisimple algebras, at least it gives invariant tensors different from the supertrace that are useful for the construction of the mentioned gauge theories of gravity.

A dual formulation of the S-expansion
procedure for Lie algebra was also constructed in \cite{irs2}, which permits
to understand this procedure at the level of the Lagrangians. Another
interesting application of the expansion methods consists in establishing a
relation between General Relativity and CS gravity \cite{edelstein,K15}. As shown in \cite{K15},
General Relativity in five-dimensional spacetime may emerge at a special
critical point from a CS action. To achieve this result, both the Lie
algebra (called the $\mathfrak{B}$ algebra) and the symmetric invariant tensor that defines the CS Lagrangian are
constructed by means of the Lie algebra S-expansion method with the semigroup
$S_{E}^{\left(  3\right)  }$. It is also interesting to mention that a black hole \cite{quinz} and a cosmological solution \cite{gomez} have also been constructed for the CS theory constructed in terms of this expanded $\mathfrak{B}$ algebra.

On the other hand, in Refs. \cite{K11,K12} the $S$-expansion method was extended to the case of \textit{higher-order Lie
algebras} \cite{K10}. These kinds of algebras, defined in terms of a multilinear
antisymmetric product, are a particular case of the so-called \textit{strongly
homotopy Lie algebras} \cite{Spin3}, which in turn are related to the
structure of higher spin interactions \cite{Spin1,Spin2}. In ref.
\cite{K14}, the $S$-expansion method was also extended to study the expansion
of loop algebras, which are infinite dimensional and can be useful in
applications related to Higher Spin (HS) theories as, for example, in the
Vasiliev formulation thereof \cite{V0,V1,V2,V3,V4}. Finally, a procedure that permit us to contruct Casimir operators for the expanded algebra was recently found in \cite{Diaz}.

Therefore, as many physical applications appearing in the literature depend on
the possibility of relating two given algebras by some contraction or expansion
procedure, an exhaustive study of the general properties of this method is
needed to deeply understand their consequences in physical theories. In fact,
a feature that almost all those physical applications have in common is that
they depend on the following question: \textit{given two symmetry algebras,
can they be related by means of some contraction or expansion procedure?} Here we attempt to give the first steps to answer the above question\footnote{It
is worth remarking here that we use the $S$-expansion method because it
reproduces the other expansion methods for a particular choice of the
semigroup (the semigroups $S_{E}^{\left(  n\right)  }$\ mentioned before).}.
By studying the preserved properties of the algebras under expansions we find certain theorems that can be used as criteria and, more specifically, as necessary conditions to solve this problem. Then the relation might still exist or do not. The second part of the process, which consists in finding the relation explicitly with help of computer programs, is not presented here but mainly developed in \cite{Bianchi,Nadal}. In order to check our theoretical results,
examples are provided by performing all the possible expansions that can be made
for $\mathfrak{sl}\left(  2,\mathbb{R}\right)  $ with semigroups up to and
including order $6$.

This paper is organized as follows: In section \ref{desc}, we present a brief
technical description of the basic ingredients that we are going to use. In
section \ref{preserved}, we study under which conditions properties like solvability, nilpotency,
semisimplicity and compactness are preseved under $S$-expansions on each level of the procedure. Particularly, in section \ref{Cartandec}, we review
some aspects of the Cartan decomposition of the expanded algebra when
compactness is preserved. In section \ref{sl2} we make an exhaustive study of
the possible expansions that can be made for the algebra $\mathfrak{sl}\left(
2,\mathbb{R}\right)  $ and check our theoretical results. In section
\ref{sclassification} it is shown that a classification of semigroups in terms
of the eigenvalues of certain matrices, that are related to intrinsic
properties of the semigroup, is a useful tool to predict semisimplicity and
compactness properties for the expanded algebras. Section \ref{resum} is a
summary of the results and finally, in section \ref{Com}, some comments about
future applications are given.

\section{Preliminars}

\label{desc}

\subsection{The $S$-Expansion Procedure}

\label{sexpp}

In this section we briefly describe the general abelian semigroup expansion
procedure ($S$-expansion for short). We refer the interested reader to
ref.~\cite{irs} for further details.

Consider a Lie algebra $\mathcal{G}$ and a finite abelian semigroup
$S=\left\{  \lambda_{\alpha}\right\}  $. According to Theorem~3.1 from
ref.~\cite{irs}, the direct product
\begin{equation}
\mathcal{G}_{S}\text{\ }\mathcal{=}\text{\ }S\otimes\mathcal{G} \label{z1}%
\end{equation}
is also a Lie algebra. The elements of this expanded algebra are denoted by%
\begin{equation}
X_{\left(  i,\alpha\right)  }=\lambda_{\alpha}\otimes X_{i} \label{z2}%
\end{equation}
where the product is understood as a direct product of the matricial
representations of the generators $X_{i}$ of $\mathcal{G}$\ and the elements
$\lambda_{a}$ of the semigroup\footnote{The structure of the semigroup is
encoded in a quantity $K_{\alpha\beta}^{\gamma}$ which defines the composition
law: $\lambda_{\alpha}\cdot\lambda_{\beta}=K_{\alpha\beta}^{\gamma}%
\lambda_{\gamma}$.} $S$. The Lie product in $\mathcal{G}_{S}$ is defined as%

\begin{equation}
\left[  T_{\left(  i,\alpha\right)  },T_{\left(  j,\beta\right)  }\right]
=\lambda_{\alpha}\cdot\lambda_{\beta}\otimes\left[  T_{i},T_{j}\right]
=K_{\alpha\beta}^{\gamma}C_{ij}^{k}\left(  \lambda_{\lambda}\otimes
T_{k}\right)  =C_{\left(  i,\alpha\right)  \left(  j,\beta\right)  }^{\left(
k,\gamma\right)  }T_{\left(  k,\gamma\right)  } \label{z3}%
\end{equation}

\begin{equation}
C_{\left(  i,\alpha\right)  \left(  j,\beta\right)  }^{\left(  k,\gamma
\right)  }=K_{\alpha\beta}^{\gamma}C_{ij}^{k}%
\end{equation}
where $C_{ij}^{k}$ are the structure constants of $\mathcal{G}$ and
$K_{\alpha\beta}^{\gamma}$, which stores information
about the multiplication law of the semigroup, is called 2-selector. The set (\ref{z1}) with the
composition law (\ref{z3}) is called a $S$-expanded Lie algebra.

Interestingly, there are cases where it is possible to systematically extract
subalgebras from $S\otimes\mathcal{G}$ like, for example, if we start by
decomposing $\mathcal{G}$ in a direct sum of subspaces, as in $\mathcal{G}%
=\bigoplus_{p\in I}V_{p}$, where $I$ is a set of indices. The internal
structure of $\mathcal{G}$ can be codified through the mapping\footnote{Here
$2^{I}$ stands for the set of all subsets of $I$.} $i:I\otimes I\rightarrow
2^{I}$ according to%
\begin{equation}
\left[  V_{p},V_{q}\right]  \subset%
%TCIMACRO{\dbigoplus \limits_{r\in i\left(  p,q\right)  }}%
%BeginExpansion
{\displaystyle\bigoplus\limits_{r\in i\left(  p,q\right)  }}
%EndExpansion
V_{r} \label{CDres1}%
\end{equation}
When the semigroup $S$ can be decomposed in subsets $S_{p}$, $S=\bigcup_{p\in
I}S_{p}$, such that they satisfy the \textit{resonant condition}\footnote{Here
$S_{p}\cdot S_{q}$ denotes the set of all the products of all elements from
$S_{p}$ with all elements from $S_{q}$.}%
\begin{equation}
S_{p}\cdot S_{q}\subset%
%TCIMACRO{\dbigcap \limits_{r\in i\left(  p,q\right)  }}%
%BeginExpansion
{\displaystyle\bigcap\limits_{r\in i\left(  p,q\right)  }}
%EndExpansion
S_{r} \label{CDres2}%
\end{equation}
\ then we have that
\begin{equation}
\mathcal{G}_{S,R}=\bigoplus_{p\in I}S_{p}\otimes V_{p} \label{gen_sub}%
\end{equation}
is a subalgebra of $\mathcal{G}_{S}$ which is called \textit{resonant
subalgebra} (see Theorem~4.2 from ref.~\cite{irs}). The commutation relations
of this subalgebra are given by:%
\begin{align}
\left[  T_{\left(  i_{p},\alpha_{p}\right)  },T_{\left(  j_{q},\beta
_{q}\right)  }\right]   &  =C_{\left(  i_{p},\alpha_{p}\right)  \left(
j_{q},\beta_{q}\right)  }^{\left(  k_{r},\gamma_{r}\right)  }T_{\left(
k_{r},\gamma_{r}\right)  }\label{lres1}\\
&  =K_{\alpha_{p}\beta_{q}}^{\gamma_{r}}C_{i_{p}j_{q}}^{k_{r}}T_{\left(
k_{r},\gamma_{r}\right)  }\nonumber
\end{align}
where a\ sum on the indexes $r\in I$ and $k_{r},\gamma_{r}$ on the
respective subspaces $V_{r}$, $S_{r}$ it is assumed. The indexes $\alpha
_{p}$, $\beta_{q}$ are fixed for each $p,q\in I$. In this work we write
$\mathcal{G}_{S,R}$ to denote this resonant subalgebra of the expanded algebra
$\mathcal{G}_{S}$.

An even smaller algebra can be obtained when there is a zero element in the
semigroup, i.e., an element $0_{S}\in S$ such that, for all $\lambda_{\alpha
}\in S$, $0_{S} \cdot\lambda_{\alpha}=0_{S}$. When this is the case, the whole
$0_{S}\otimes\mathcal{G}$ sector can be removed from the resonant subalgebra
by imposing $0_{S}\otimes\mathcal{G}=0$ (see Definition 3.3 from
ref.~\cite{irs}). The resulting algebra\ continues to be a Lie algebra and
here it will be denoted by $\mathcal{G}_{S,R}^{\text{red}}$.

\subsection{History of finite semigroups programs}

\label{history}

The number of finite non-isomorphic semigroups of order $n$ are given in the
following table:%

\begin{equation}%
\begin{tabular}
[c]{|l|l|l}\cline{1-2}%
order & $Q=\ $\# semigroups & \\\cline{1-2}%
1 & 1 & \\\cline{1-2}%
2 & 4 & \\\cline{1-2}%
3 & 18 & \\\cline{1-2}%
4 & 126 & [Forsythe '54]\\\cline{1-2}%
5 & 1,160 & [Motzkin, Selfridge '55]\\\cline{1-2}%
6 & 15,973 & [Plemmons '66]\\\cline{1-2}%
7 & 836,021 & [Jurgensen, Wick '76]\\\cline{1-2}%
8 & 1,843,120,128 & [Satoh, Yama, Tokizawa '94]\\\cline{1-2}%
9 & \textbf{52,989,400,714,478} & [Distler, Kelsey, Mitchell '09]\\\cline{1-2}%
\end{tabular}
\ \ \ \ \ \label{hist}%
\end{equation}
All the semigroups of order 4 have been classified by Forsythe in Ref.
\cite{n4}, of order 5 by Motzkin and Selfridge in Ref. \cite{n5}, of order 6
by Plemmons in Ref. \cite{n6-1,n6-2,n6-3}, of order 7 by J\"{u}rgensen and
Wick in Ref. \cite{n7}, and of order 8 by Satoh, Yama and Tokizawa in Ref.
\cite{n8}, and monoids and semigroups of order 9 by Distler and Kelsey in Ref.
\cite{n9-1,n9-2} and by Distler and Mitchell in Ref. \cite{n9-3}. Also, for
semigroups of order 9 the result can be found in Ref. \cite{n9-4}.

As shown in the table the problem of enumerating all the non-isomorphic finite
semigroups of a certain order is a non-trivial task. In fact, the number
$Q$ of semigroups increases very quickly with the order of the semigroup. 

In ref. \cite{Plemmons} a set of algorithms was given that allow us to make
certain calculations with finite semigroups. The first program, \textit{gen.f}%
, gives all the non-isomorphic semigroups of order $n$ for $n=1,2,...,8$.
\footnote{The order $n=9$ is non trivial and the algorithms in the mentionated
reference fail. As mentioned in the table, this non trivial problem was solved
in 2009 by Andreas Distler, Tom Kelsey \& James Mitchell. However in this
paper we are going to consider calculations with semigroups of at most order
$6$.} The input is the order, $n$, of the semigroups we want to obtain and the
\textit{output} is a list of all the non-isomorphic semigroups that exist at
this order. In this work, the elements of the semigroup are labeled by
$\lambda_{\alpha}$ with $\alpha=1,...,n$ and each semigroup will be denoted by
$S_{\left(  n\right)  }^{a}$ where the supra-index $a=1,...,Q$ identifies the
specific semigroup of order $n$.

Note also that this history includes the classification of all abelian semigroups (which are the ones used in the $S$-expansion mechanism introduced in last section) as a particular case. In fact, the second program in \cite{Plemmons}, \textit{com.f}, takes as \textit{input}
one of the mentioned lists for a certain order, picks up just the symmetric
tables and generates another list with all the abelian semigroups. For example
for $n=2$ the elements are labeled by $\left\{  \lambda_{1},\lambda
_{2}\right\}  $ and the program \textit{com.f} gives the following list of semigroups:%

\begin{equation}%
\begin{tabular}
[c]{l|ll}%
$S_{\left(  2\right)  }^{1}$ & $\lambda_{1}$ & $\lambda_{2}$\\\hline
$\lambda_{1}$ & $\lambda_{1}$ & $\lambda_{1}$\\
$\lambda_{2}$ & $\lambda_{1}$ & $\lambda_{1}$%
\end{tabular}
\ \ \ \text{, }%
\begin{tabular}
[c]{l|ll}%
$S_{\left(  2\right)  }^{2}$ & $\lambda_{1}$ & $\lambda_{2}$\\\hline
$\lambda_{1}$ & $\lambda_{1}$ & $\lambda_{1}$\\
$\lambda_{2}$ & $\lambda_{1}$ & $\lambda_{2}$%
\end{tabular}
\ \ \ \text{, }%
\begin{tabular}
[c]{l|ll}%
$S_{\left(  2\right)  }^{4}$ & $\lambda_{1}$ & $\lambda_{2}$\\\hline
$\lambda_{1}$ & $\lambda_{1}$ & $\lambda_{2}$\\
$\lambda_{2}$ & $\lambda_{2}$ & $\lambda_{1}$%
\end{tabular}
\ \ \label{list1}%
\end{equation}
Note that the semigroup $S_{\left(2\right)  }^{3}$ is not given in the list (\ref{list1}) because it is not abelian.

So in general the program \textit{com.f} of \cite{Plemmons} gives a list of
tables of all the abelian non-isomorphic semigroups of a certain
order\footnote{As the number of non-isomorphic semigroups increases very
quickly with the order $n$ (see table \ref{hist}), the mentioned lists are
very large for higher orders.} (up to order $8$). 

In section \ref{sl2}\ we are going to use those lists to perform all the
possible expansions of the algebra $\mathfrak{sl}\left(  2\right)  $ that can
be made (with semigroups up to the order $6$) and check explicitly the
theoretical properties we are going to find in the next section.

\section{Properties preserved under the $S$-expansion procedure}

\label{preserved}

\subsection{Expansion of solvable and nilpotent Lie algebras}

\label{solv}

It is known that a solvable algebra $\mathcal{G}$ is one for which the
sequence%
\begin{equation}
\mathcal{G}^{\left(  0\right)  }=\mathcal{G},\ \ \mathcal{G}^{\left(
1\right)  }=\left[  \mathcal{G}^{\left(  0\right)  },\mathcal{G}^{\left(
0\right)  }\right]  ,\ \ ...,\ \ \mathcal{G}^{\left(  n\right)  }=\left[
\mathcal{G}^{\left(  n-1\right)  },\mathcal{G}^{\left(  n-1\right)  }\right]
\ \label{sol1}%
\end{equation}
terminates, i.e., such that $\mathcal{G}^{\left(  n\right)  }=0$ for some $n$.
On the other hand, a nilpotent algebra $\mathcal{G}$ is one for which the
sequence%
\[
\mathcal{G}_{\left(  0\right)  }=\mathcal{G},\ \ \mathcal{G}_{\left(
1\right)  }=\left[  \mathcal{G}_{\left(  0\right)  },\mathcal{G}\right]
,\ \ ...,\ \ \mathcal{G}_{\left(  n\right)  }=\left[  \mathcal{G}_{\left(
n-1\right)  },\mathcal{G}\right]
\]
terminates, i.e., for which $\mathcal{G}_{\left(  n\right)  }=0$ for some $n$ (see
\cite{L2} for details).

In order to study the expansions of solvable and nilpotent Lie algebras, it
will be useful to have an expression of the solvability and nilpotency
condition in terms of the structure constants. We make this explicitly for the
solvable case.

Let $\left\{  X_{i}\right\}  $ be a basis of an algebra $\mathcal{G}$. Since
$\mathcal{G=G}^{\left(  0\right)  }$\ is solvable one can find, from the
commutation relations%
\begin{equation}
\left[  X_{i},X_{j}\right]  =C_{ij}^{k}X_{k} \label{sol2}%
\end{equation}
with $i,j=1,...\dim\mathcal{G}$, that there exist at least one value of $k$
such that $C_{ij}^{k}=0$. Let $k^{\left(  1\right)  }$ represent the set of
values such that $C_{ij}^{k^{\left(  1\right)  }}\neq0$. Then $k^{\left(
1\right)  }$ runs over all values of the basis elements $\left\{
X_{k}\right\}  $ except for those values for which
\begin{equation}
C_{ij}^{k\neq k^{\left(  1\right)  }}=0 \label{sol3}%
\end{equation}
It is clear then that the set $\left\{  X_{k^{\left(  1\right)  }}\right\}  $
is smaller than $\left\{  X_{k}\right\}  $ and $\left\{  X_{k^{\left(
1\right)  }}\right\}  \subset\left\{  X_{k}\right\}  $. Let us consider now
$\mathcal{G}^{\left(  2\right)  }=\left[  \mathcal{G}^{\left(  1\right)
},\mathcal{G}^{\left(  1\right)  }\right]  $ where $\mathcal{G}^{\left(
1\right)  }=\left[  \mathcal{G}^{\left(  0\right)  },\mathcal{G}^{\left(
0\right)  }\right]  =\left\{  X_{k^{\left(  1\right)  }}\right\}  $ according
to the notation established above. We have%
\begin{equation}
\left[  X_{i^{\left(  1\right)  }},X_{j^{\left(  1\right)  }}\right]
=C_{i^{\left(  1\right)  }j^{\left(  1\right)  }}^{k^{\left(  2\right)  }%
}X_{k^{\left(  2\right)  }} \label{sol4}%
\end{equation}
and again as $\mathcal{G}$ is solvable the index $k^{\left(  2\right)  }$ must
run in a smaller subset with respect to that of $k^{\left(  1\right)  }$.
Therefore, if the algebra is solvable, there must exists some $n$ for which
$\mathcal{G}^{\left(  n\right)  }=0$ and $\mathcal{G}^{\left(  n-1\right)  }$
is abelian, i.e.,
\begin{equation}
\left[  X_{i^{\left(  n-1\right)  }},X_{j^{\left(  n-1\right)  }}\right]
=C_{i^{\left(  n-1\right)  }j^{\left(  n-1\right)  }}^{k^{\left(  n\right)  }%
}X_{k^{\left(  n\right)  }}=0\text{.} \label{sol5}%
\end{equation}

Then, for solvable algebras there exist some $n$ such that the range of values of $k^{(n)}$ is empty. Equivalently, the solvability of an algebra can be expressed in terms of its
structure constants as the condition that there is some $n$ for which
$C_{i^{\left(  n-1\right)  }j^{\left(  n-1\right)  }}^{k^{\left(  n\right)  }%
}=0$. In a similar way it can be shown that an algebra is nilpotent if the
structure constants $C_{i^{\left(  n-1\right)  }j}^{k^{\left(  n\right)  }}$
vanish for some $n$.

\begin{theorem}
Let $\left\{  X_{i}\right\}  $ be a basis of a solvable Lie algebra
$\mathcal{G}$, $S=\left\{  \lambda_{\alpha}\right\}  $ a finite abelian
semigroup and
\begin{equation}
\mathcal{G}_{S}=S\otimes\mathcal{G}=\left\{  \lambda_{\alpha}\otimes
X_{i}\right\}  =\left\{  X_{\left(  i,\alpha\right)  }\right\}  \label{sol6}%
\end{equation}
the $S$-expanded algebra, which satisfies%
\begin{equation}
\left[  X_{\left(  i,\alpha\right)  },X_{\left(  j,\beta\right)  }\right]
=C_{\left(  i,\alpha\right)  \left(  j,\beta\right)  }^{\left(  k,\gamma
\right)  }X_{\left(  k,\gamma\right)  }=C_{ij}^{k}K_{\alpha\beta}^{\gamma
}X_{\left(  k,\gamma\right)  }\text{.}\label{sol7}%
\end{equation}
Then the expanded algebra $\mathcal{G}_{S}=S\otimes\mathcal{G}$ is solvable.
\end{theorem}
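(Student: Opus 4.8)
The plan is to exploit the factorized form of the expanded structure constants, $C_{(i,\alpha)(j,\beta)}^{(k,\gamma)}=C_{ij}^{k}K_{\alpha\beta}^{\gamma}$, and to show that the derived series of $\mathcal{G}_{S}$ is governed entirely by the Lie-algebra index $k$, with the semigroup index $\gamma$ playing only a passive role. Concretely, I would write $S\otimes\mathcal{G}^{(m)}$ for the subspace of $\mathcal{G}_{S}$ spanned by all $X_{(k,\gamma)}$ with $X_{k}\in\mathcal{G}^{(m)}$ and $\lambda_{\gamma}\in S$ arbitrary (in the notation of the preceding paragraph, those $X_{(k^{(m)},\gamma)}$ whose Lie label survives to the $m$-th term of the derived series), and then prove by induction on $m$ the containment $\mathcal{G}_{S}^{(m)}\subseteq S\otimes\mathcal{G}^{(m)}$.

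For the base case $m=0$ the containment is an equality by the definition \eqref{sol6} of $\mathcal{G}_{S}$. For the inductive step, assume $\mathcal{G}_{S}^{(m)}\subseteq S\otimes\mathcal{G}^{(m)}$. Every element of $\mathcal{G}_{S}^{(m+1)}=[\mathcal{G}_{S}^{(m)},\mathcal{G}_{S}^{(m)}]$ is a linear combination of brackets $[X_{(i,\alpha)},X_{(j,\beta)}]=C_{ij}^{k}K_{\alpha\beta}^{\gamma}X_{(k,\gamma)}$ with $X_{i},X_{j}\in\mathcal{G}^{(m)}$. The only Lie indices $k$ that can contribute are those with $C_{ij}^{k}\neq0$ for such $i,j$, and these are precisely the indices labelling $[\mathcal{G}^{(m)},\mathcal{G}^{(m)}]=\mathcal{G}^{(m+1)}$; the factor $K_{\alpha\beta}^{\gamma}$ merely reshuffles the semigroup label and can never enlarge the admissible set of $k$. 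Hence each such bracket lies in $S\otimes\mathcal{G}^{(m+1)}$, which closes the induction.

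Since $\mathcal{G}$ is solvable, there is an $n$ with $\mathcal{G}^{(n)}=0$, i.e.\ the range of $k^{(n)}$ is empty. The containment just established then forces $\mathcal{G}_{S}^{(n)}\subseteq S\otimes\mathcal{G}^{(n)}=0$, so the derived series of $\mathcal{G}_{S}$ also terminates and $\mathcal{G}_{S}$ is solvable, with derived length bounded by that of $\mathcal{G}$. The identical argument applied to the lower central series $\mathcal{G}_{(m)}$ gives the analogous statement for nilpotency, matching the remark in the paragraph preceding the theorem.

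The computation is routine once the containment lemma is in place; the single point deserving care, and the only place where a naive argument might hesitate, is the behaviour of the semigroup factor $K_{\alpha\beta}^{\gamma}$. The decisive observation is that however the product $\lambda_{\alpha}\cdot\lambda_{\beta}$ distributes over $S$, it cannot introduce a generator $X_{k}$ absent from $\mathcal{G}^{(m+1)}$, so the semigroup never obstructs the collapse of the derived series. This is exactly why solvability is preserved for \emph{every} finite abelian semigroup $S$, with no further condition imposed.
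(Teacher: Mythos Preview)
Your proof is correct and follows essentially the same route as the paper: both arguments use the factorized form $C_{(i,\alpha)(j,\beta)}^{(k,\gamma)}=C_{ij}^{k}K_{\alpha\beta}^{\gamma}$ to see that the derived series of $\mathcal{G}_{S}$ is controlled by the Lie index alone, so that $\mathcal{G}^{(n)}=0$ forces $\mathcal{G}_{S}^{(n)}=0$. Your formulation is slightly more explicit, packaging the idea as the containment $\mathcal{G}_{S}^{(m)}\subseteq S\otimes\mathcal{G}^{(m)}$ proved by induction, whereas the paper states the same fact more tersely via the $k^{(n)}$ index bookkeeping set up just before the theorem; the content is the same.
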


\begin{proof}
Let's consider the following sequence for the expanded algebra%
\begin{equation}
\mathcal{G}_{S}^{\left(  0\right)  }=\mathcal{G}_{S},\ \ \mathcal{G}%
_{S}^{\left(  1\right)  }=\left[  \mathcal{G}_{S}^{\left(  0\right)
},\mathcal{G}_{S}^{\left(  0\right)  }\right]  ,\ \ ...,\ \ \mathcal{G}%
_{S}^{\left(  n\right)  }=\left[  \mathcal{G}_{S}^{\left(  n-1\right)
},\mathcal{G}_{S}^{\left(  n-1\right)  }\right]  \ \text{.} \label{sol8}%
\end{equation}
For $\mathcal{G}_{S}^{\left(  n\right)  }$ we have%
\begin{equation}
\left[  X_{\left(  i^{\left(  n-1\right)  },\alpha^{\left(  n-1\right)
}\right)  },X_{\left(  j^{\left(  n-1\right)  },\beta^{\left(  n-1\right)
}\right)  }\right]  =C_{i^{\left(  n-1\right)  }j^{\left(  n-1\right)  }%
}^{k^{\left(  n\right)  }}K_{\alpha^{\left(  n-1\right)  }\beta^{\left(
n-1\right)  }}^{\gamma^{\left(  n\right)  }}X_{\left(  k^{\left(  n\right)
},\gamma^{\left(  n\right)  }\right)  }\text{.} \label{sol9}%
\end{equation}
So as $\mathcal{G}$ is solvable by hypothesis, then there exists some $n$ for
which the sequence (\ref{sol8}) terminates, i.e., for which $\mathcal{G}%
_{S}^{\left(  n\right)  }=0$ (or in terms of the structure constants, for
which $C_{\left(  i^{\left(  n-1\right)  },\alpha^{\left(  n-1\right)
}\right)  \left(  j^{\left(  n-1\right)  },\beta^{\left(  n-1\right)
}\right)  }^{\left(  k^{\left(  n\right)  },\gamma^{\left(  n\right)
}\right)  }=C_{i^{\left(  n-1\right)  }j^{\left(  n-1\right)  }}^{k^{\left(
n\right)  }}K_{\alpha^{\left(  n-1\right)  }\beta^{\left(  n-1\right)  }%
}^{\gamma^{\left(  n\right)  }}=0$).
\end{proof}

The result that the expansion of a solvable Lie algebra is solvable too was
also found, in a different way, in Ref. \cite{Nesterenko} but only at the
first level of the expanded algebra, $\mathcal{G}_{S}=\mathcal{G}\otimes S$.
Here we prove that this result holds also for the resonant and the reduced algebra.

\begin{theorem}
The resonant subalgebra $\mathcal{G}_{S,R}$ (defined in (\ref{gen_sub})) of
the expanded algebra $\mathcal{G}_{S}$ is always solvable if the original
algebra $\mathcal{G}$ is solvable.
\end{theorem}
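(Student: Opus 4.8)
The plan is to reduce the statement about the resonant subalgebra $\mathcal{G}_{S,R}$ to the theorem just proved for the full expanded algebra $\mathcal{G}_{S}$. The key observation is that $\mathcal{G}_{S,R}$ is a \emph{subalgebra} of $\mathcal{G}_{S}$, and since $\mathcal{G}$ is solvable, the previous theorem already guarantees that $\mathcal{G}_{S}=S\otimes\mathcal{G}$ is solvable. Solvability is inherited by subalgebras, so the conclusion should follow almost immediately. I would therefore first recall the standard fact that any subalgebra $\mathcal{H}$ of a solvable Lie algebra $\mathcal{K}$ is itself solvable, which is seen by comparing derived series: since $\mathcal{H}\subset\mathcal{K}$, one has $\mathcal{H}^{(m)}\subset\mathcal{K}^{(m)}$ for all $m$ by an easy induction, and hence $\mathcal{K}^{(n)}=0$ forces $\mathcal{H}^{(n)}=0$.

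Concretely, I would carry out the argument as follows. First I would invoke the preceding theorem to conclude that $\mathcal{G}_{S}$ is solvable, so that its derived series satisfies $\mathcal{G}_{S}^{(n)}=0$ for some $n$. Next I would use that $\mathcal{G}_{S,R}=\bigoplus_{p\in I}S_{p}\otimes V_{p}$ is, by the resonant condition \eqref{CDres2} and Theorem~4.2 of \cite{irs}, a genuine Lie subalgebra of $\mathcal{G}_{S}$, i.e. it is closed under the bracket \eqref{lres1} and sits inside $\mathcal{G}_{S}$ as a linear subspace. Then I would establish the inclusion $\mathcal{G}_{S,R}^{(m)}\subset\mathcal{G}_{S}^{(m)}$ by induction on $m$: the base case $\mathcal{G}_{S,R}^{(0)}=\mathcal{G}_{S,R}\subset\mathcal{G}_{S}=\mathcal{G}_{S}^{(0)}$ is the subspace inclusion, and the inductive step uses that the bracket of $\mathcal{G}_{S,R}$ is the restriction of the bracket of $\mathcal{G}_{S}$, so $\mathcal{G}_{S,R}^{(m)}=[\mathcal{G}_{S,R}^{(m-1)},\mathcal{G}_{S,R}^{(m-1)}]\subset[\mathcal{G}_{S}^{(m-1)},\mathcal{G}_{S}^{(m-1)}]=\mathcal{G}_{S}^{(m)}$. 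Taking $m=n$ gives $\mathcal{G}_{S,R}^{(n)}\subset\mathcal{G}_{S}^{(n)}=0$, so $\mathcal{G}_{S,R}$ is solvable.

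I do not expect any serious obstacle here, since the result is essentially the general-algebra fact that solvability passes to subalgebras, specialized to the situation at hand. The only point requiring care is to confirm that $\mathcal{G}_{S,R}$ really is a subalgebra in the full Lie-algebraic sense — that its bracket is the genuine restriction of the bracket on $\mathcal{G}_{S}$ rather than some independently defined operation — but this is precisely what \eqref{lres1} and the cited Theorem~4.2 of \cite{irs} provide. A valid alternative, more in the computational spirit of the preceding proof, would be to argue directly with the structure constants $C_{(i_{p},\alpha_{p})(j_{q},\beta_{q})}^{(k_{r},\gamma_{r})}=K_{\alpha_{p}\beta_{q}}^{\gamma_{r}}C_{i_{p}j_{q}}^{k_{r}}$: since the index $k$ on the original structure constants $C_{ij}^{k}$ already exhausts its range after $n$ steps of the derived series of the solvable $\mathcal{G}$, the same vanishing propagates to the resonant structure constants regardless of the semigroup factor $K$, so the derived series of $\mathcal{G}_{S,R}$ likewise terminates. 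I would present the subalgebra-inheritance argument as the primary proof for its brevity, optionally remarking that the structure-constant version mirrors the computation in the previous theorem.
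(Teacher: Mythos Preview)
Your proposal is correct and takes essentially the same approach as the paper: the paper's proof is the single sentence ``By the theory of classification of Lie algebras, any subalgebra of a solvable algebra must be solvable,'' which is exactly your primary argument. You have in fact supplied more detail than the paper does, by spelling out the induction $\mathcal{G}_{S,R}^{(m)}\subset\mathcal{G}_{S}^{(m)}$ and by noting the alternative structure-constant computation.
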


\begin{proof}
By the theory of classification of Lie algebras, any subalgebra of a solvable
algebra must be solvable.
%\footnote{Note that, on the contrary, a solvable
%algebra can be a subalgebra of a non solvable algebra. For example, the
%Iwasawa decomposition of a semi-simple algebra has a solvable part as a
%subalgebra, i. e., the Iwasawa decomposition is given by:
%\textit{semi-simple=compact subalgebra}$\oplus$\textit{solvable}}

\end{proof}

\begin{theorem}
Consider the expansion of a solvable Lie algebra $\mathcal{G}$ with a zero
element. Then the reduced algebra $\mathcal{G}_{S}^{\text{red}}$ is always solvable.
\end{theorem}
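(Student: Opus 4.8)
The plan is to realize the reduced algebra as a homomorphic image of an algebra that is already known to be solvable, and then to invoke the standard fact that any quotient of a solvable Lie algebra is again solvable. This keeps the argument parallel in spirit to the proof of the previous theorem, where solvability was inherited by passing to a subalgebra; here it will be inherited by passing to a quotient.

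First I would recall that the reduction consists in imposing $0_{S}\otimes\mathcal{G}=0$, that is, in discarding the whole sector spanned by $\left\{0_{S}\otimes X_{i}\right\}$. The natural object to quotient by is therefore $\mathcal{I}:=\left(0_{S}\otimes\mathcal{G}\right)\cap\mathcal{G}_{S,R}$, and the key step is to check that $\mathcal{I}$ is an ideal of the resonant subalgebra $\mathcal{G}_{S,R}$. For any generator $\lambda_{\beta}\otimes X_{j}$ of $\mathcal{G}_{S,R}$ and any $0_{S}\otimes X_{i}\in\mathcal{I}$ one computes
\[
\left[\lambda_{\beta}\otimes X_{j},\,0_{S}\otimes X_{i}\right]=\left(\lambda_{\beta}\cdot 0_{S}\right)\otimes\left[X_{j},X_{i}\right]=0_{S}\otimes C_{ji}^{k}X_{k},
\]
which again lies in the $0_{S}$-sector; the resonant condition (\ref{CDres2}) guarantees that this element still belongs to $\mathcal{G}_{S,R}$, so $\mathcal{I}$ is indeed an ideal. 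Consequently the reduced algebra is the quotient $\mathcal{G}_{S,R}^{\text{red}}\cong\mathcal{G}_{S,R}/\mathcal{I}$, with the projection $\pi:\mathcal{G}_{S,R}\to\mathcal{G}_{S,R}/\mathcal{I}$ annihilating the $0_{S}$-sector. Since $\pi$ is a surjective Lie algebra homomorphism, the derived series of the image is the image of the derived series, $\left(\mathcal{G}_{S,R}/\mathcal{I}\right)^{(n)}=\pi\!\left(\mathcal{G}_{S,R}^{(n)}\right)$, so the series of the quotient terminates whenever that of $\mathcal{G}_{S,R}$ does. By the previous theorem $\mathcal{G}_{S,R}$ is solvable, and the conclusion follows at once. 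Equivalently, in the structure-constant language of the first theorem, the reduction only sets to zero those constants $C_{\left(i,\alpha\right)\left(j,\beta\right)}^{\left(k,\gamma\right)}$ whose output generator carries $\lambda_{\gamma}=0_{S}$, so each term of the derived series of the reduced algebra is contained in the image of the corresponding term for $\mathcal{G}_{S,R}$, and solvability is preserved.

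The only delicate point is the well-definedness of the reduction as a Lie algebra, namely that the bracket descends consistently to the quotient and continues to satisfy the Jacobi identity. This, however, is precisely the content of the reduction procedure recalled above (Definition 3.3 of \cite{irs}), so I may take it as given rather than re-derive it; the verification that $\mathcal{I}$ is an ideal above is exactly what makes the quotient construction legitimate. Once that is granted, the proof is essentially a one-line appeal to the homomorphic-image theorem for solvable Lie algebras. I should also note that the same reasoning applies verbatim if the $0_{S}$-reduction is carried out directly on $\mathcal{G}_{S}$ rather than on the resonant subalgebra, since by the first theorem $\mathcal{G}_{S}$ is itself solvable and the $0_{S}$-sector is likewise an ideal of it.
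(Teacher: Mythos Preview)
Your argument is correct, but it differs from the paper's proof in a meaningful way. The paper argues directly with structure constants, exactly as in Theorem~1: it writes the reduced bracket as $[X_{(i,\alpha)},X_{(j,\beta)}]=C_{ij}^{k}K_{\alpha\beta}^{\gamma}X_{(k,\gamma)}$ with the zero-element index removed, and then observes that the same $n$ for which $C_{i^{(n-1)}j^{(n-1)}}^{k^{(n)}}=0$ in $\mathcal{G}$ kills the derived series of $\mathcal{G}_{S}^{\text{red}}$, since the reduced structure constants factor as $C_{ij}^{k}K_{\alpha\beta}^{\gamma}$. Your route instead identifies the $0_{S}$-sector as an ideal and realizes the reduction as a quotient, after which solvability follows from the standard fact that homomorphic images of solvable algebras are solvable. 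This is more conceptual and makes the parallel with Theorem~2 (subalgebra) and the present theorem (quotient) transparent; the paper's approach has the virtue of being completely explicit and reusing the index bookkeeping already set up for Theorem~1.

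One small presentational point: the statement concerns $\mathcal{G}_{S}^{\text{red}}$, the reduction of the full expansion, not the reduction of the resonant subalgebra. You carry out the argument first for $\mathcal{G}_{S,R}$ and only at the end remark that it works verbatim for $\mathcal{G}_{S}$. It would be cleaner to swap the emphasis: take $\mathcal{I}=0_{S}\otimes\mathcal{G}\subset\mathcal{G}_{S}$, check it is an ideal (your same one-line computation), conclude $\mathcal{G}_{S}^{\text{red}}\cong\mathcal{G}_{S}/\mathcal{I}$, and invoke Theorem~1. The resonant case can then be mentioned as a corollary. Mathematically nothing changes, but the write-up would match the statement more directly.
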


\begin{proof}
According with the $S$-expansion procedure (of ref. \cite{irs}) when the
semigroup has a zero element $0_{S}$, $S=\left\{  \lambda_{\alpha}%
,0_{S}\right\}  $ the commutation relations of the expanded algebra
$\mathcal{G}_{S}$ are given by%
\begin{align*}
\left[  X_{\left(  i,\alpha\right)  },X_{\left(  j,\beta\right)  }\right]   &
=C_{ij}^{k}K_{\alpha\beta}^{\gamma}X_{\left(  k,\gamma\right)  }+C_{ij}%
^{k}K_{\alpha\beta}^{0}X_{\left(  k,0\right)  }\\
\left[  X_{\left(  i,0\right)  },X_{\left(  j,\beta\right)  }\right]   &
=C_{ij}^{k}X_{\left(  k,0\right)  }\\
\left[  X_{\left(  i,0\right)  },X_{\left(  j,0\right)  }\right]   &
=C_{ij}^{k}X_{\left(  k,0\right)  }%
\end{align*}
and the $0_{S}$-reduced algebra $\mathcal{G}_{S}^{\text{red}}$ is given by%
\[
\left[  X_{\left(  i,\alpha\right)  },X_{\left(  j,\beta\right)  }\right]
=C_{ij}^{k}K_{\alpha\beta}^{\gamma}X_{\left(  k,\gamma\right)  }%
\]
So if $\mathcal{G}$ is solvable then there exists some $n$ for which
$C_{i^{\left(  n-1\right)  }j^{\left(  n-1\right)  }}^{k^{\left(  n\right)  }%
}=0$. Therefore, for the same $n$%
\[
C_{\left(  i^{\left(  n-1\right)  },\alpha^{\left(  n-1\right)  }\right)
\left(  j^{\left(  n-1\right)  },\beta^{\left(  n-1\right)  }\right)
}^{\left(  k^{\left(  n\right)  },\gamma^{\left(  n\right)  }\right)
}=C_{i^{\left(  n-1\right)  }j^{\left(  n-1\right)  }}^{k^{\left(  n\right)
}}K_{\alpha^{\left(  n-1\right)  }\beta^{\left(  n-1\right)  }}^{\gamma
^{\left(  n\right)  }}=0
\]
and we can conclude that if $\mathcal{G}$ is solvable then the reduced algebra
$\mathcal{G}_{S}^{\text{red}}$ is solvable, too.
\end{proof}

In a similar way it can be directly shown that if $\mathcal{G}$ is nilpotent,
then $\mathcal{G}_{S}$, $\mathcal{G}_{S,R}$ and $\mathcal{G}_{S}^{\text{red}}$
are nilpotent too.

\subsection{Expansion of semisimple and compact Lie algebras}

\label{semis}

An algebra $\mathcal{G}$ is \textit{semi-simple} if its Killing-Cartan metric,
defined in terms of the structure constants by%
\[
g_{ij}=C_{ik}^{l}C_{jl}^{k}\text{,}%
\]
is non degenerate, i.e., if $\det\left(  g_{ij}\right)  \neq0$. On the other
hand, $g_{ij}$ is diagonalizable so if we denote by $\left(  \mu_{i}\right)  $
its spectra of eigenvalues then a semi-simple Lie algebra $\mathcal{G}$ is
\textit{compact} if and only if $\mu_{i}<0$\ (see \cite{L2} for details).

If we now perform the expansion of a semi-simple algebra $\mathcal{G}$, the
Killing-Cartan metric of the expanded algebra $\mathcal{G}_{S}$ will be given
by:%
\begin{align}
g_{\left(  i,\alpha\right)  \left(  j,\beta\right)  }  &  =C_{\left(
i,\alpha\right)  \left(  k,\gamma\right)  }^{\left(  l,\lambda\right)
}C_{\left(  j,\beta\right)  \left(  l,\lambda\right)  }^{\left(
k,\gamma\right)  }\label{mario}\\
&  =K_{\alpha\gamma}^{\lambda}K_{\beta\lambda}^{\gamma}C_{ik}^{l}C_{jl}%
^{k}\nonumber
\end{align}
Let us define the following matrices:%
\begin{equation}
\mathbf{g}^{E}\equiv\left(  g_{\left(  i,\alpha\right)  \left(  j,\beta
\right)  }\right)  \ ;\ \ \mathbf{g}^{S}\equiv\left(  g_{\alpha\beta}%
^{S}\right)  =\left(  K_{\alpha\gamma}^{\lambda}K_{\beta\lambda}^{\gamma
}\right)  \ ;\ \ \mathbf{g}\equiv\left(  g_{ij}\right)  =\left(  C_{ik}%
^{l}C_{jl}^{k}\right)  \ . \label{mario2}%
\end{equation}
From eq. (\ref{mario}) it follows that the first matrix is the Kronecker
product of the last two:%
\begin{equation}
\mathbf{g}^{E}=\mathbf{g}^{S}\otimes\mathbf{g\ .} \label{mario3}%
\end{equation}
Both $\mathbf{g}^{S}$ and $\mathbf{g}$ are diagonalizable. Let us denote by
$\left(  \xi_{\alpha}\right)  $ and $\left(  \mu_{i}\right)  $ their spectra
of eigenvalues respectively. From the general theory of Kronecker products we
know that:

\begin{itemize}
\item The eigenvalues of $\mathbf{g}^{E}$\ are $\left(  \xi_{\alpha}\mu
_{i}\right)  $;

\item $\det\left(  \mathbf{g}^{E}\right)  =\det\left(  \mathbf{g}^{S}\right)
^{\dim\left(  \mathcal{G}\right)  }\det\left(  \mathbf{g}\right)  ^{\left\vert
S\right\vert }$ where $\left\vert S\right\vert $ is the order of the semigroup.
\end{itemize}

Thus if $\mathcal{G}$ is semisimple ($\det\left(  \mathbf{g}\right)  \neq0$),
$\mathcal{G}_{S}$ is semisimple if and only if $\det\left(  \mathbf{g}%
^{S}\right)  \neq0$. If $\mathcal{G}$ is compact ($\mu_{i}<0$), $\mathcal{G}%
_{S}$ is compact only if $\xi_{\alpha}>0$. Thus the real form of
$\mathcal{G}_{S}$ strongly depends on the signs of $\xi_{\alpha}$.

When considering a decomposition $\mathcal{G}=\bigoplus_{p\in I}V_{p}\,$and
$S=\bigcup_{p\in I}S_{p}$, where $I$ is a set of indices, the Killing-Cartan metric
of the expanded algebra is given by:%

\begin{align}
\mathbf{g}^{E}  &  =\mathbf{g}^{S}\otimes\mathbf{g}\label{lres5}\\
&  =\left(
\begin{array}
[c]{ccc}%
g_{\alpha_{0}\beta_{0}}^{S} & g_{\alpha_{0}\beta_{1}}^{S} & \cdots\\
g_{\alpha_{1}\beta_{0}}^{S} & g_{\alpha_{1}\beta_{1}}^{S} & \\
\vdots &  & \ddots
\end{array}
\right)  \otimes\left(
\begin{array}
[c]{ccc}%
g_{i_{0}j_{0}} & g_{i_{0}j_{1}} & \cdots\\
g_{i_{1}j_{0}} & g_{i_{1}j_{1}} & \\
\vdots &  & \ddots
\end{array}
\right) \nonumber
\end{align}
where the sets $S_{p}$ and subspaces $V_{q}$ are mixed all together. If
$\mathcal{G}$ and $S$ have a structure given by (\ref{CDres1}-\ref{CDres2}),
then the Killing-Cartan metric of the resonant subalgebra $\mathcal{G}_{S,R}$
of eq. (\ref{gen_sub}) is given by%
\begin{align}
g_{\left(  i_{p},\alpha_{p}\right)  \left(  j_{q},\beta_{q}\right)  }^{E,R}
&  =C_{\left(  i_{p},\alpha_{p}\right)  \left(  k_{r},\gamma_{r}\right)
}^{\left(  l_{s},\rho_{s}\right)  }C_{\left(  j_{q},\beta_{q}\right)  \left(
l_{s},\rho_{s}\right)  }^{\left(  k_{r},\gamma_{r}\right)  }
\label{genKKf_res}\\
&  =K_{\alpha_{p}\gamma_{r}}^{\rho_{s}}K_{\beta_{q}\rho_{s}}^{\gamma_{r}%
}C_{i_{p}k_{r}}^{l_{s}}C_{j_{q}l_{s}}^{k_{r}}\nonumber
\end{align}
and from its matrix form,%

\begin{align}
\mathbf{g}^{E,R}  &  =\left(  g_{\left(  i_{p},\alpha_{p}\right)  \left(
j_{q},\beta_{q}\right)  }^{E,R}\right) \label{lres6}\\
&  =\left(
\begin{array}
[c]{ccc}%
g_{\left(  i_{0},\alpha_{0}\right)  \left(  j_{0},\beta_{0}\right)  }^{E,R} &
g_{\left(  i_{0},\alpha_{0}\right)  \left(  j_{1},\beta_{1}\right)  }^{E,R} &
\cdots\\
g_{\left(  i_{1},\alpha_{1}\right)  \left(  j_{0},\beta_{0}\right)  }^{E,R} &
g_{\left(  i_{1},\alpha_{1}\right)  \left(  j_{1},\beta_{1}\right)  }^{E,R} &
\\
\vdots &  & \ddots
\end{array}
\right)  \text{,}\nonumber
\end{align}
it can be seen that sets $S_{p}$ and subspaces $V_{q}$ are mixed in a
\textit{resonant} form.

From the representation theory of Lie algebras (see Appendix A) it turns out
that if $\mathcal{G}_{S}$ and $\mathcal{G}_{S,R}$ are semisimple then
$\mathcal{G}_{S}$ is completely reducible with respect to the
action of $\mathcal{G}_{S,R}\ $and therefore%
\begin{equation}
\left[  \mathcal{G}_{S,R},\frac{\mathcal{G}_{S}}{\mathcal{G}_{S,R}}\right]
\subset\frac{\mathcal{G}_{S}}{\mathcal{G}_{S,R}}\text{.} \label{teor}%
\end{equation}
Then it is possible to obtain the Killing-Cartan metric for $\mathcal{G}%
_{S,R}$ in terms of the one for $\mathcal{G}_{S}$
\begin{equation}
g_{\left(  i_{p},\alpha_{p}\right)  \left(  j_{q},\beta_{q}\right)  }%
^{E,R}=\alpha g_{\left(  i_{p},\alpha_{p}\right)  \left(  j_{q},\beta
_{q}\right)  }^{E} \label{sup}%
\end{equation}
where
\begin{equation}
\alpha=\left(  1+\frac{d_{\mathcal{G}_{S}/\mathcal{G}_{S,R}}c_{\mathcal{G}%
_{S}/\mathcal{G}_{S,R}}}{d_{\mathcal{G}_{S,R}}}\right)  ^{-1} \label{alfaa}%
\end{equation}
is a positive number, defined in the Appendix A, characterizing $\mathcal{G}%
_{S}/\mathcal{G}_{S,R}$ and $\mathcal{G}_{S,R}$. On the other hand, if
$\mathcal{G}_{S}$ and $\mathcal{G}_{S,R}$ are not semisimple, then the general
expresion for the Killing-Cartan form will be given just by equation
(\ref{genKKf_res}).

An interesting case, which often appears in physical applications, is to take
$I=\left\{  0,1\right\}  $, i.e., $\mathcal{G}=V_{0}\oplus V_{1}$ with
$\mathcal{G}$ having a subspace structure given by\footnote{As an example we
will perform, in section \ref{sl2}, all the possible expansions of an algebra
that satisfy this particular kind of decomposition.}%
\begin{align}
\left[  V_{0},V_{0}\right]   &  \subset V_{0}\label{lres8}\\
\left[  V_{0},V_{1}\right]   &  \subset V_{1}\nonumber\\
\left[  V_{1},V_{1}\right]   &  \subset V_{0}\text{.}\nonumber
\end{align}
If the semigroup has a resonant decomposition $S=S_{0}\cup S_{1}$ satisfying%
\begin{align}
S_{0}\times S_{0} &  \subset S_{0}\label{lres9}\\
S_{0}\times S_{1} &  \subset S_{1}\nonumber\\
S_{1}\times S_{1} &  \subset S_{0}\text{,}\nonumber
\end{align}
with $S_{0}=\left\{  \lambda_{\alpha_{0}}\right\}  $ and $S_{1}=\left\{
\lambda_{\alpha_{1}}\right\}  $, then the resonant subalgebra (\ref{gen_sub})
takes the form%
\begin{equation}
\mathcal{G}_{S,R}=\left(  S_{0}\otimes V_{0}\right)  \oplus\left(
S_{1}\otimes V_{1}\right)  \label{lres9_2}%
\end{equation}
with a Killing-Cartan metric given by%
\begin{equation}
\mathbf{g}^{E,R}=\left(  g_{\left(  i_{p},\alpha_{p}\right)  \left(
j_{q},\beta_{q}\right)  }^{E,R}\right)  =\left(
\begin{array}
[c]{cc}%
g_{\left(  i_{0},\alpha_{0}\right)  \left(  j_{0},\beta_{0}\right)  }^{E,R} &
g_{\left(  i_{0},\alpha_{0}\right)  \left(  j_{1},\beta_{1}\right)  }^{E,R}\\
g_{\left(  i_{1},\alpha_{1}\right)  \left(  j_{0},\beta_{0}\right)  }^{E,R} &
g_{\left(  i_{1},\alpha_{1}\right)  \left(  j_{1},\beta_{1}\right)  }^{E,R}%
\end{array}
\right)  \nonumber
\end{equation}
and where%
\begin{align*}
g_{\left(  i_{0},\alpha_{0}\right)  \left(  j_{0},\beta_{0}\right)  }^{E,R} &
=K_{\alpha_{0}\gamma_{0}}^{\rho_{0}}K_{\beta_{0}\rho_{0}}^{\gamma_{0}}%
C_{i_{0}k_{0}}^{l_{0}}C_{j_{0}l_{0}}^{k_{0}}+K_{\alpha_{0}\gamma_{1}}%
^{\rho_{1}}K_{\beta_{0}\rho_{1}}^{\gamma_{1}}C_{i_{0}k_{1}}^{l_{1}}%
C_{j_{0}l_{1}}^{k_{1}}\\
g_{\left(  i_{0},\alpha_{0}\right)  \left(  j_{1},\beta_{1}\right)  }^{E,R} &
=0\ ;\ g_{\left(  i_{1},\alpha_{1}\right)  \left(  j_{0},\beta_{0}\right)
}^{E,R}=0\\
g_{\left(  i_{1},\alpha_{1}\right)  \left(  j_{1},\beta_{1}\right)  }^{E,R} &
=K_{\alpha_{1}\gamma_{0}}^{\rho_{1}}K_{\beta_{1}\rho_{1}}^{\gamma_{0}}%
C_{i_{1}k_{0}}^{l_{1}}C_{j_{1}l_{1}}^{k_{0}}+K_{\alpha_{1}\gamma_{1}}%
^{\rho_{0}}K_{\beta_{1}\rho_{0}}^{\gamma_{1}}C_{i_{1}k_{1}}^{l_{0}}%
C_{j_{1}l_{0}}^{k_{1}}~\text{.}%
\end{align*}
Moreover $\mathcal{G}_{S}$ and $\mathcal{G}_{S,R}$ are semisimple then we
can use (\ref{sup}) to write%
\begin{align*}
g_{\left(  i_{0},\alpha_{0}\right)  \left(  j_{0},\beta_{0}\right)  }^{E,R} &
=\alpha g_{\left(  i_{0},\alpha_{0}\right)  \left(  j_{0},\beta_{0}\right)
}^{E}=\alpha g_{\alpha_{0}\beta_{0}}^{S}g_{i_{0}j_{0}}\\
g_{\left(  i_{1},\alpha_{1}\right)  \left(  j_{1},\beta_{1}\right)  }^{E,R} &
=\alpha g_{\left(  i_{1},\alpha_{1}\right)  \left(  j_{1},\beta_{1}\right)
}^{E}=\alpha g_{\alpha_{1}\beta_{1}}^{S}g_{i_{1}j_{1}}%
\end{align*}
or equivalently%
\begin{equation}
\mathbf{g}^{E,R}=\alpha\left(
\begin{array}
[c]{cc}%
\mathbf{g}\left(  S_{0}\right)  \otimes\mathbf{g}\left(  V_{0}\right)   & 0\\
0 & \mathbf{g}\left(  S_{1}\right)  \otimes\mathbf{g}\left(  V_{1}\right)
\end{array}
\right)  \label{res_final}%
\end{equation}
where%
\begin{align}
\mathbf{g}^{S}\left(  S_{0}\right)   &  =\left(  g_{\alpha_{0}\beta_{0}}%
^{S}\right)  =\left(  K_{\alpha_{0}\gamma_{0}}^{\rho_{0}}K_{\beta_{0}\rho_{0}%
}^{\gamma_{0}}+K_{\alpha_{0}\gamma_{1}}^{\rho_{1}}K_{\beta_{0}\rho_{1}%
}^{\gamma_{1}}\right)  \label{zdef1}\\
\mathbf{g}^{S}\left(  S_{1}\right)   &  =\left(  g_{\alpha_{1}\beta_{1}}%
^{S}\right)  =\left(  K_{\alpha_{1}\gamma_{0}}^{\rho_{1}}K_{\beta_{1}\rho_{1}%
}^{\gamma_{0}}+K_{\alpha_{1}\gamma_{1}}^{\rho_{0}}K_{\beta_{1}\rho_{0}%
}^{\gamma_{1}}\right)  \label{zdef1b}%
\end{align}%
\begin{align}
\mathbf{g}\left(  V_{0}\right)   &  =\left(  g_{i_{0}j_{0}}\right)  =\left(
C_{i_{0}k_{0}}^{l_{0}}C_{j_{0}l_{0}}^{k_{0}}+C_{i_{0}k_{1}}^{l_{1}}%
C_{j_{0}l_{1}}^{k_{1}}\right)  \label{zdef2}\\
\mathbf{g}\left(  V_{1}\right)   &  =\left(  g_{i_{1}j_{1}}\right)  =\left(
C_{i_{1}k_{0}}^{l_{1}}C_{j_{1}l_{1}}^{k_{0}}+C_{i_{1}k_{1}}^{l_{0}}%
C_{j_{1}l_{0}}^{k_{1}}\right)  \label{zdef2b}%
\end{align}
From the fact that $\alpha$ is a positive number the eigenvalues of
$\mathbf{g}^{E,R}$, $\left(  \alpha\mu_{i_{0}}\xi_{\alpha_{0}},\alpha
\mu_{i_{1}}\xi_{\alpha_{1}}\right)  $, allow to analize the compactness of
$\mathcal{G}_{S,R}$ in terms of the eigenvalues of the matrices $\mathbf{g}%
^{S}\left(  S_{0}\right)  $ and $\mathbf{g}^{S}\left(  S_{1}\right)  $, i.e.,
in terms of $\xi_{\alpha_{0}}$ and $\xi_{\alpha_{1}}$.

Finally, if the semigroup has a zero element a reduction can be performed and
the result is similar\footnote{Here we consider the case when a reduction of
this specific resonant subalgebra is performed. Note however that the $0_{S}%
$-reduction procedure is a step that can be performed independently from the
extraction of the resonant subalgebra. Explicit examples of this fact are given
in section 4, figure 1.}:%
\begin{equation}
\mathbf{g}^{E,R,\text{red}}=\alpha\left(
\begin{array}
[c]{cc}%
\mathbf{g}^{\text{red}}\left(  S_{0}\right)  \otimes\mathbf{g}\left(
V_{0}\right)  & 0\\
0 & \mathbf{g}^{\text{red}}\left(  S_{1}\right)  \otimes\mathbf{g}\left(
V_{1}\right)
\end{array}
\right)  \label{res_red_final}%
\end{equation}
where $\mathbf{g}^{\text{red}}\left(  S_{0}\right)  $ and $\mathbf{g}%
^{\text{red}}\left(  S_{1}\right)  $ are calculated by using (\ref{zdef1}%
-\ref{zdef1b}) but considering that their indices do not take value on 
the zero element $0_{S}$ and considering that $K_{\alpha\beta}^{\gamma}=0$ if one of
the indices $\alpha,\beta,\gamma$ is evaluated on the zero element.

\subsection{Expansion of a general Lie algebra}

\label{gener}

By the Levi-Macev theorem (see \cite{L2} for details), an arbitrary Lie
algebra $\mathcal{G}$ always has a decomposition%
\begin{equation}
\mathcal{G=}N\uplus S \label{arb1}%
\end{equation}
where $N$ is the radical (maximal solvable ideal) and $S$ is semisimple.

In this section the semigroup used to perform the expansions will be denoted
by the symbol $\mathcal{S}$, because in this section we are using $S$ to
denote the semisimple part of the Levi-Malcev decomposition of the arbitrary
algebra (\ref{arb1}).

\bigskip

\textbf{The expanded algebra:}

Let's call $X_{i_{N}}$ and $X_{i_{S}}$ the generators of $N$ and $S$
respectively. And let $\mathcal{S}=\left\{  \lambda_{\alpha}\right\}  $ be a
finite abelian semigroup. Then the expanded algebra of $\mathcal{G}$
(\ref{arb1}) is given by%
\[
\mathcal{G}_{\mathcal{S}}=\left\{  X_{\left(  i_{N},\alpha\right)
},X_{\left(  i_{S},\beta\right)  }\right\}
\]
and we have the following commutation relations:%
\begin{align*}
\left[  X_{\left(  i_{N},\alpha\right)  },X_{\left(  j_{N},\beta\right)
}\right]   &  =C_{\left(  i_{N},\alpha\right)  \left(  j_{N},\beta\right)
}^{\left(  k_{N},\gamma\right)  }X_{\left(  k_{N},\gamma\right)  }\\
\left[  X_{\left(  i_{N},\alpha\right)  },X_{\left(  j_{S},\beta\right)
}\right]   &  =C_{\left(  i_{N},\alpha\right)  \left(  j_{S},\beta\right)
}^{\left(  k_{N},\gamma\right)  }X_{\left(  k_{N},\gamma\right)  }\\
\left[  X_{\left(  i_{S},\alpha\right)  },X_{\left(  j_{S},\beta\right)
}\right]   &  =C_{\left(  i_{S},\alpha\right)  \left(  j_{S},\beta\right)
}^{\left(  k_{S},\gamma\right)  }X_{\left(  k_{S},\gamma\right)  }%
\end{align*}
where we have used the fact tha $\mathcal{G}$ is a semi-direct sum,
(\ref{arb1}). So the expanded algebra has also a semi-direct structure:%
\[
\mathcal{G}_{\mathcal{S}}=E_{N}\uplus E_{S}%
\]
where $E_{N}=\left\{  X_{\left(  i_{N},\alpha\right)  }\right\}  $ is an ideal (which is also nilpotent being an expansion of a nilpotent algebra) and $E_{S}=\left\{  X_{\left(  i_{S},\alpha\right)  }\right\}  $ is a
subalgebra. The important issue to mention here is that the subalgebra
$E_{S}=\left\{  X_{\left(  i_{S},\alpha\right)  }\right\}  $ is not
necessarily semi-simple, because it is the expansion of a semi-simple algebra
$S$ which, as we found in the last section, is arbitrary, i.e., semisimplicity
can be broken. Thus it must have again a Levi-Malcev decomposition:
\[
E_{S}=N^{\prime}\uplus S_{\text{exp}}%
\]
where $N^{\prime}=\left\{  X_{\left(  i_{S},\alpha\right)  _{N^{\prime}}%
}\right\}  $ and $S_{\text{exp}}=\left\{  X_{\left(  i_{S},\alpha\right)
_{S}}\right\}  $ and
\[
\left(  i_{S},\alpha\right)  =\left\{  \left(  i_{S},\alpha\right)
_{N^{\prime}},\left(  i_{S},\alpha\right)  _{S}\right\}
\]
expresses the fact that the expansion of the semi-simple part is arbitrary and
will have a Levi-Malcev decomposition. Then, the expanded algebra is also
arbitrary and must have a Levi-Malcev decomposition:
\begin{align*}
\mathcal{G}_{\mathcal{S}}  &  =E_{N}\uplus\left(  N^{\prime}\uplus
S_{\text{exp}}\right) \\
&  =\left(  E_{N}\uplus N^{\prime}\right)  \uplus S_{\text{exp}}\\
&  =N_{\text{exp}}\uplus S_{\text{exp}}%
\end{align*}
where $N_{\text{exp}}=E_{N}\uplus N^{\prime}$ \ is the radical of the expanded
algebra $\mathcal{G}_{\mathcal{S}}$. In fact, the last statement can be easily
proved by showing that $N_{\text{exp}}$ is the maximal solvable ideal, which
is made in Appendix B. In this way we have the following:

\begin{theorem}
If \textit{$\mathcal{G}$} is an arbitrary Lie algebra with a Levi-Malcev
decomposition \textit{$\mathcal{G=}N\uplus S=\left\{  X_{i_{N}}\right\}
\uplus\left\{  X_{i_{S}}\right\}  $}, then the expanded algebra is also
arbitrary and has a Levi-Malcev decomposition given by%
\[
\mathcal{G}_{\mathcal{S}}\mathcal{=}N_{\text{exp}}\uplus S_{\text{exp}}%
\]
where \textit{$N_{\text{exp}}=E_{N}\uplus N^{\prime}$} is the radical of
\textit{$\mathcal{G}_{\mathcal{S}}$}, \textit{$E_{S}=N^{\prime}\uplus
S_{\text{exp}}$} is the expansion of \textit{$S$}, \textit{$N^{\prime}$} is
the radical of \textit{$E_{S}$}, \textit{$S_{\text{exp}}$} is the semisimple
part of \textit{$E_{S}$} and \textit{$\mathcal{G}_{\mathcal{S}}$} with%
\begin{align*}
E_{N}  &  =\left\{  X_{\left(  i_{N},\alpha\right)  }\right\} \\
E_{S}  &  =\left\{  X_{\left(  i_{S},\alpha\right)  }\right\} \\
N^{\prime}  &  =\left\{  X_{\left(  i_{S},\alpha\right)  _{N^{\prime}}%
}\right\} \\
S_{\text{exp}}  &  =\left\{  X_{\left(  i_{S},\alpha\right)  _{S}}\right\}
\end{align*}

\end{theorem}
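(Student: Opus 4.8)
The plan is to build the Levi-Malcev decomposition of $\mathcal{G}_{\mathcal{S}}$ out of the three building blocks already isolated in the discussion above, and then to verify the one genuinely nontrivial point, namely that $N_{\text{exp}}$ is the \emph{maximal} solvable ideal and not merely a solvable one.

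First I would fix the semidirect structure. Setting $E_N=\mathcal{S}\otimes N$ and $E_S=\mathcal{S}\otimes S$, the three families of commutation relations displayed before the statement give $[E_N,E_N]\subset E_N$, $[E_N,E_S]\subset E_N$ and $[E_S,E_S]\subset E_S$; the middle inclusion is the expanded image of $[N,S]\subset N$, which holds because $N$ is an ideal of $\mathcal{G}$. Hence $E_N$ is an ideal and $E_S$ a subalgebra, so $\mathcal{G}_{\mathcal{S}}=E_N\uplus E_S$. By the solvability theorem proved above, $E_N$ (the expansion of the solvable radical $N$) is itself solvable. Since the expansion of the semisimple factor $S$ need not be semisimple, as shown in Section \ref{semis}, the algebra $E_S$ is a general Lie algebra, to which I apply the Levi-Malcev theorem, writing $E_S=N'\uplus S_{\text{exp}}$ with $N'$ its radical and $S_{\text{exp}}$ semisimple. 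Reassociating the nested semidirect sums then gives $\mathcal{G}_{\mathcal{S}}=(E_N\uplus N')\uplus S_{\text{exp}}=N_{\text{exp}}\uplus S_{\text{exp}}$.

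Next I would confirm that $N_{\text{exp}}=E_N\uplus N'$ is a solvable ideal of $\mathcal{G}_{\mathcal{S}}$. It is a subalgebra since $[E_N,E_N]\subset E_N$, $[E_N,N']\subset E_N$ (because $E_N$ is an ideal of $\mathcal{G}_{\mathcal{S}}$), and $[N',N']\subset N'$; it is solvable because it is an extension of the solvable algebra $E_N$ by the solvable quotient $N_{\text{exp}}/E_N\cong N'$. For the ideal property, the only bracket not already controlled by the fact that $E_N$ is an ideal of $\mathcal{G}_{\mathcal{S}}$ is $[N',S_{\text{exp}}]$, and this lands in $N'\subset N_{\text{exp}}$ because $N'$ is the radical, hence an ideal, of $E_S$, while $S_{\text{exp}}\subset E_S$.

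The crux, deferred to Appendix B, is maximality. Here I would argue through the quotient: from $\mathcal{G}_{\mathcal{S}}=N_{\text{exp}}\uplus S_{\text{exp}}$ one obtains $\mathcal{G}_{\mathcal{S}}/N_{\text{exp}}\cong S_{\text{exp}}$, which is semisimple and therefore contains no nonzero solvable ideal. Any solvable ideal of $\mathcal{G}_{\mathcal{S}}$ has a solvable homomorphic image in $S_{\text{exp}}$, which must be the zero ideal, so every solvable ideal is contained in the kernel $N_{\text{exp}}$; this forces $N_{\text{exp}}$ to coincide with the radical. I expect the main obstacle to be structural rather than computational: one must check that the reassociation of the nested semidirect sums is legitimate—precisely what the ideal verification of the previous paragraph secures—and that $S_{\text{exp}}$ maps $N_{\text{exp}}$ into itself under the bracket, so that the quotient is genuinely isomorphic to $S_{\text{exp}}$ and the semisimplicity argument applies.
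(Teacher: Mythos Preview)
Your proof is correct and follows the same architecture as the paper: establish the semidirect structure $\mathcal{G}_{\mathcal{S}}=E_N\uplus E_S$, apply Levi--Malcev to $E_S$, reassociate, and then verify that $N_{\text{exp}}=E_N\uplus N'$ is the radical. The tactical arguments you give in the last two steps, however, differ from those in Appendix B. For solvability, the paper tracks the derived series $N_{\text{exp}}^{(n)}$ through the commutator relations explicitly until it lands inside $E_N$ and then invokes solvability of $E_N$; your one-line observation that $N_{\text{exp}}$ is an extension of the solvable ideal $E_N$ by the solvable quotient $N_{\text{exp}}/E_N\cong N'$ is the standard shortcut and avoids that bookkeeping entirely. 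For maximality, the paper argues by contradiction, adjoining a hypothetical generator $X\in S_{\text{exp}}$ to $N_{\text{exp}}$ and showing that $N'\oplus\{X\}$ would then be a solvable ideal of $E_S$ strictly larger than its radical $N'$; your quotient argument---$\mathcal{G}_{\mathcal{S}}/N_{\text{exp}}\cong S_{\text{exp}}$ is semisimple, hence has trivial radical, so every solvable ideal is contained in the kernel---is the textbook characterization of the radical and is both shorter and more robust. The paper's route is more hands-on and stays closer to the generator-level description used throughout; yours buys brevity and makes clearer why the construction \emph{must} yield the Levi factor rather than merely a candidate for it.
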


The resonant subalgebra $\mathcal{G}_{\mathcal{S},R}$ of this abitrary Lie
algebra $\mathcal{G}_{\mathcal{S}}$ must also have a Levi-Malcev decomposition
and its specific form can be studied in each case with a similar procedure.

\bigskip

\textbf{The $0_{S}$-reduced algebra:}

Let%
%TCIMACRO{\U{b4}}%
%BeginExpansion
\'{}%
%EndExpansion
s consider the case when the semigroup has a zero element, so $\mathcal{S}$
$=\left\{  \lambda_{\Gamma}\right\}  =\left\{  \lambda_{\alpha}\right\}
\cup\left\{  0_{\mathcal{S}}\right\}  $. The reduced algebra is given by%
\[
\left[  X_{\left(  i,\alpha\right)  },X_{\left(  j,\beta\right)  }\right]
=C_{ij}^{k}K_{\alpha\beta}^{\gamma}X_{\left(  k,\gamma\right)  }%
\]
where the index $\alpha,\beta...$ runs over all the non-zero elements of the semigroup.

In the last section we obtained that for the algebra $\mathcal{G=}N\uplus
S=\left\{  X_{i_{N}}\right\}  \uplus\left\{  X_{i_{S}}\right\}  $, the
complete expanded algebra is given by
\[
\mathcal{G}_{\mathcal{S}}=E_{N}\uplus N^{\prime}\uplus S_{\text{exp}}%
\]
where $E_{N}=\left\{  X_{\left(  i_{N},\Gamma\right)  }\right\}  $,
$N^{\prime}=\left\{  X_{\left(  i_{S},\Gamma\right)  _{N^{\prime}}}\right\}  $
and $S_{\text{exp}}=\left\{  X_{\left(  i_{S},\Gamma\right)  _{S}}\right\}  $.
Then the reduced algebra is then given by%
\[
\mathcal{G}_{\mathcal{S}}^{\text{red}}=\left\{  X_{\left(  i_{N}%
,\alpha\right)  },X_{\left(  i_{S},\alpha\right)  _{N^{\prime}}},X_{\left(
i_{S},\alpha\right)  _{S}}\right\}
\]
with commutation relations:%
\begin{align*}
\left[  X_{\left(  i_{N},\alpha\right)  },X_{\left(  j_{N},\beta\right)
}\right]   &  =C_{\left(  i_{N},\alpha\right)  \left(  j_{N},\beta\right)
}^{\left(  k_{N},\gamma\right)  }X_{\left(  k_{N},\gamma\right)  }\\
\left[  X_{\left(  i_{N},\alpha\right)  },X_{\left(  j_{S},\beta\right)
_{N^{\prime}}}\right]   &  =C_{\left(  i_{N},\alpha\right)  \left(
j_{S},\beta\right)  _{N^{\prime}}}^{\left(  k_{N},\gamma\right)  }X_{\left(
k_{N},\gamma\right)  }\\
\left[  X_{\left(  i_{N},\alpha\right)  },X_{\left(  j_{S},\beta\right)  _{S}%
}\right]   &  =C_{\left(  i_{N},\alpha\right)  \left(  j_{S},\beta\right)
_{S}}^{\left(  k_{N},\gamma\right)  }X_{\left(  k_{N},\gamma\right)  }\\
\left[  X_{\left(  i_{S},\alpha\right)  _{N^{\prime}}},X_{\left(  j_{S}%
,\beta\right)  _{N^{\prime}}}\right]   &  =C_{\left(  i_{S},\alpha\right)
_{N^{\prime}}\left(  j_{S},\beta\right)  _{N^{\prime}}}^{\left(  k_{S}%
,\gamma\right)  _{N^{\prime}}}X_{\left(  k_{S},\gamma\right)  _{N^{\prime}}}\\
\left[  X_{\left(  i_{S},\alpha\right)  _{N^{\prime}}},X_{\left(  j_{S}%
,\beta\right)  _{S}}\right]   &  =C_{\left(  i_{S},\alpha\right)  _{N^{\prime
}}\left(  j_{S},\beta\right)  _{S}}^{\left(  k_{S},\gamma\right)  _{N^{\prime
}}}X_{\left(  k_{S},\gamma\right)  _{N^{\prime}}}\\
\left[  X_{\left(  i_{S},\alpha\right)  _{S}},X_{\left(  j_{S},\beta\right)
_{S}}\right]   &  =C_{\left(  i_{S},\alpha\right)  _{S}\left(  j_{S}%
,\beta\right)  _{S}}^{\left(  k_{S},\gamma\right)  _{S}}X_{\left(
k_{S},\gamma\right)  _{S}}%
\end{align*}
so that
\[%
\begin{tabular}
[c]{l}%
$E_{N}^{\text{red}}=\left\{  X_{\left(  i_{N},\alpha\right)  }\right\}  $ is
an ideal of $\mathcal{G}_{\mathcal{S}}^{\text{red}}$\\
$N^{\prime\text{red}}=\left\{  X_{\left(  i_{S},\alpha\right)  _{N^{\prime}}%
}\right\}  $ is a subalgebra of $\mathcal{G}_{\mathcal{S}}^{\text{red}}$ and
an ideal of $E_{S}^{\prime}=N^{\prime\text{red}}\uplus S_{\text{exp}%
}^{\text{red}}$\\
$S_{\text{exp}}^{\text{red}}=\left\{  X_{\left(  i_{S},\alpha\right)  _{S}%
}\right\}  $ is a subalgebra of $\mathcal{G}_{\mathcal{S}}^{\text{red}}$,
\end{tabular}
\ \ \
\]
therefore, the semidirect structure of $\mathcal{G}_{\mathcal{S}}^{\text{red}%
}$ is confirmed,%
\[
\mathcal{G}_{\mathcal{S}}^{\text{red}}=E_{N}^{\text{red}}\uplus N^{\prime
\text{red}}\uplus S_{\text{exp}}^{\text{red}}%
\]
Now it only remains to prove that $S_{\text{exp}}^{\text{red}}$ is
semi-simple. We can write $S_{\text{exp}}=\left\{  X_{\left(  i_{S}%
,\Gamma\right)  _{S}}\right\}  =\left\{  X_{\left(  i_{S},\alpha\right)  _{S}%
},X_{\left(  i_{S},0_{S}\right)  _{S}}\right\}  $. Since $S_{\text{exp}%
}^{\left(  0\right)  }\equiv\left\{  X_{\left(  i_{S},0_{\mathcal{S}}\right)
_{S}}\right\}  $ is an ideal of the semi-simple Lie algebra $S_{\text{exp}}$,
by a Corollary of the \textit{Structure theorem} for semisimple Lie algebras,
$S_{\text{exp}}^{\left(  0\right)  }$ has to be semisimple itself. In other
words, if we express $S_{\text{exp}}$ as a direct sum of commuting simple Lie
algebras:%
\[
S_{\text{exp}}=%
%TCIMACRO{\dbigoplus \limits_{k=0}^{l}}%
%BeginExpansion
{\displaystyle\bigoplus\limits_{k=0}^{l}}
%EndExpansion
S_{\text{exp},k}\ ;\ \ \left[  S_{\text{exp},k},S_{\text{exp},k^{\prime}%
}\right]  =0\text{ for }k\neq k^{\prime}\text{,}%
\]
then $S_{\text{exp}}^{\left(  0\right)  }$ coincides with the direct sum of a
subset of these simple algebras:%
\[
S_{\text{exp}}^{\left(  0\right)  }=%
%TCIMACRO{\dbigoplus \limits_{k\in I}}%
%BeginExpansion
{\displaystyle\bigoplus\limits_{k\in I}}
%EndExpansion
S_{\text{exp},k}%
\]
with $I$ a subset of $\left\{  0,1,...,l\right\}  $. Thus $S_{\text{exp}%
}^{\text{red}}=S_{\text{exp}}\ominus S_{\text{exp}}^{\left(  0\right)  }$ is
semisimple. Then we have that%
\[
\mathcal{G}_{\mathcal{S}}^{\text{red}}=N_{\text{exp}}^{\text{red}}\uplus
S_{\text{exp}}^{\text{red}}%
\]
is the Levi-Malcev decomposition for the reduced algebra where $N_{\text{exp}%
}^{\text{red}}=E_{N}^{\text{red}}\uplus N^{\prime\text{red}}$ is the radical
and $S_{\text{exp}}^{\text{red}}=S_{\text{exp}}\ominus S_{\text{exp}}^{\left(
0\right)  }$ the semi-simple part.

\subsection{The Cartan decomposition under the S-expansion}

\label{Cartandec}

%\subsection{Basics concepts}

As a brief explanation of what the Cartan decomposition is, let us cite some
theorems (theorem 6.3 and 7.1) and definitions from \cite{L4}:

\textbf{Theorem:} \textit{Every semi-simple Lie algebra over $%
%TCIMACRO{\U{2102} }%
%BeginExpansion
\mathbb{C}
%EndExpansion
$ has a compact real form, $\mathcal{G}_{k}$.}

In fact if $\left\{  H_{\alpha},E_{\alpha},E_{-\alpha}\right\}  $ is the Cartan-Weyl
basis, then \ the compact real form $\mathcal{G}_{k}$ is given by
%\begin{subequations}
%\label{0}%
\begin{equation}
\mathcal{G}_{k}\mathcal{=}\sum%
%TCIMACRO{\U{211d} }%
%BeginExpansion
\mathbb{R}
%EndExpansion
\left(  iH_{\alpha}\right)  +\sum%
%TCIMACRO{\U{211d} }%
%BeginExpansion
\mathbb{R}
%EndExpansion
\left(  E_{\alpha}-E_{-\alpha}\right)  +\sum%
%TCIMACRO{\U{211d} }%
%BeginExpansion
\mathbb{R}
%EndExpansion
\left(  i\left(  E_{\alpha}+E_{-\alpha}\right)  \right)  \label{BCD1}%
\end{equation}
The proof is given in ref. \cite{L4}.

\textbf{Theorem:} \textit{Let $\mathcal{G}_{0}$ be a real semisimple Lie
algebra, $\mathcal{G}$ its complex form and $\mathcal{U}$ any compact real
form of $\mathcal{G}$. Let $\sigma$ and $\tau$ be conjugations of
$\mathcal{G}$ with respect to $\mathcal{G}_{0}$ and $\mathcal{U}%
$\ respectively. Then there exists an automorphism $\varphi$ of $\mathcal{G}$
such that the compact real form $\varphi\left(  \mathcal{U}\right)  $ is
invariant under $\sigma$.}

So having these results, it is possible to give (see ref. \cite{L4}) the
definition of a Cartan decomposition of a semisimple algebra:

\textbf{Definition:} \textit{Let $\mathcal{G}_{0}$ be a semisimple real Lie
algebra, $\mathcal{G}$ its complexification and $\sigma$ a conjugation of
$\mathcal{G}$ with respect to $\mathcal{G}_{0}$. Then}
%\end{subequations}%
\begin{equation}
\mathcal{G}_{0}=\mathcal{T}_{0}+\mathcal{P}_{0}\text{,} \label{BCD2}%
\end{equation}
\textit{where $\mathcal{T}_{0}$ is a subalgebra, is called a Cartan
decomposition if there exists a compact real form, $\mathcal{G}_{k}$, of
$\mathcal{G}$ such that}
\begin{equation}
\sigma\left(  \mathcal{G}_{k}\right)  \subset\mathcal{G}_{k}\text{ \ and \
\begin{tabular}
[c]{l}%
$\mathcal{T}_{0}=\mathcal{G}_{0}\cap\mathcal{G}_{k}$\\
$\mathcal{P}_{0}=\mathcal{G}_{0}\cap\left(  i\mathcal{G}_{k}\right)  $%
\end{tabular}
} \label{BCD3}%
\end{equation}

The first two cited theorems imply that each real semi-simple Lie algebra
$\mathcal{G}_{0}$ has a Cartan decomposition. It can be demonstrated also that
$\mathcal{T}_{0}$ is the maximal compactly embedded subalgebra of
$\mathcal{G}_{0}$. From now on, we are going to use the symbol "$0$" to
characterize structures related to real and semisimple Lie
algebras\footnote{The only exception to this convention will be when we
consider an algebra $\mathcal{G}$ and a semigroup $S$ having a decomposition:
\begin{align*}
\mathcal{G}  &  =V_{0}+V_{1}\\
S  &  =S_{0}\cup S_{1}%
\end{align*}
}.

Note also that the compact real form $\mathcal{G}_{k}$ can be constructed in
terms of $\mathcal{T}_{0}$\ and $\mathcal{P}_{0}$ as:%
\begin{equation}
\mathcal{G}_{k}=\mathcal{T}_{0}+i\mathcal{P}_{0}\text{.} \label{mario_res0}%
\end{equation}
which satisfies%
\begin{align}
\left[  \mathcal{T}_{0},\mathcal{T}_{0}\right]   &  \subset\mathcal{T}%
_{0}\label{mario_res}\\
\left[  \mathcal{T}_{0},i\mathcal{P}_{0}\right]   &  \subset i\mathcal{P}%
_{0}\nonumber\\
\left[  i\mathcal{P}_{0},i\mathcal{P}_{0}\right]   &  \subset\mathcal{T}%
_{0}\nonumber
\end{align}

\bigskip

\textbf{The expanded semisimple algebra:}

Let $\mathcal{G}_{0}$ be a real semi-simple Lie algebra, $\mathcal{G}$ its
complexification and $\mathcal{G}_{k}$ its compact real form. If we perform an
expansion of the compact real form $\mathcal{G}_{k}$ with a semigroup that
preserves the compactness, then the expansion with this semigroup of
$\mathcal{G}_{0}$ will also preserve the semisimplicity. This is true because
the first condition implies that the eigenvalues $\xi_{\alpha}$ are all
positive, so that $\det\mathbf{g}^{S}\neq0$. Therefore it must have a Cartan
decomposition, whose explicit form is given by the following

\begin{theorem}
Let $\mathcal{G}_{0}=\mathcal{T}_{0}+\mathcal{P}_{0}$ be the Cartan
decomposition of the real semi-simple algebra $\mathcal{G}_{0}$, $\mathcal{G}$
its complexification and $\mathcal{G}_{k}$ its compact real form. Let
$\left\{  X_{i^{\left(  0\right)  }}\right\}  _{i^{\left(  0\right)  }%
=1}^{\dim\mathcal{G}_{0}}$, $\left\{  X_{i_{0}^{\left(  0\right)  }}\right\}
_{i_{0}^{\left(  0\right)  }=1}^{\dim\mathcal{T}_{0}}$, $\left\{
X_{i_{1}^{\left(  0\right)  }}\right\}  _{i_{1}^{\left(  0\right)  }=1}%
^{\dim\mathcal{P}_{0}}$, $\left\{  X_{i}\right\}  _{i=1}^{\dim\mathcal{G}}$
and $\left\{  X_{i^{\left(  k\right)  }}\right\}  _{i^{\left(  k\right)  }%
=1}^{\dim\mathcal{G}_{k}}$ be the bases of $\mathcal{G}_{0}$, $\mathcal{T}%
_{0}$, $\mathcal{P}_{0}$, $\mathcal{G}$ and $\mathcal{G}_{k}$ respectively. If
the semigroup used in the expansion preserves the compactness ($\xi_{\alpha}$
are all positive) of the compact real form $\mathcal{G}_{k}$, i.e, if
\begin{equation}
\mathcal{G}_{k,S}=S\otimes\mathcal{G}_{k}=\left\{  X_{\left(  i^{\left(
k\right)  },\alpha\right)  }\right\}  \label{CC_1}%
\end{equation}
is compact then%
\begin{equation}
\mathcal{G}_{0,S}=\mathcal{T}_{0,S}+\mathcal{P}_{0,S}\text{,}\label{SCD0}%
\end{equation}
where
\begin{align}
\mathcal{T}_{0,S} &  =\left\{  X_{\left(  i_{0}^{\left(  0\right)  }%
,\alpha\right)  }\right\}  \label{SCD_01}\\
\mathcal{P}_{0,S} &  =\left\{  X_{\left(  i_{1}^{\left(  0\right)  }%
,\alpha\right)  }\right\}  \label{SCD_02}%
\end{align}
is the Cartan decomposition of the expanded algebra $\mathcal{G}_{0,S}$.
\end{theorem}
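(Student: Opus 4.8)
The plan is to exhibit $\mathcal{G}_{k,S}=S\otimes\mathcal{G}_{k}$ as the compact real form required by the definition of Cartan decomposition quoted above, and then to verify the three defining conditions (\ref{BCD3}) directly from the basis description of the expansion. First I would record the two facts that make the statement meaningful. By the discussion preceding the theorem, preservation of compactness forces the eigenvalues $\xi_{\alpha}$ of $\mathbf{g}^{S}$ to be positive, hence $\det\mathbf{g}^{S}\neq0$, and the determinant formula $\det\mathbf{g}^{E}=(\det\mathbf{g}^{S})^{\dim\mathcal{G}}(\det\mathbf{g})^{|S|}$ then shows that $\mathcal{G}_{0,S}$ is semisimple; thus a Cartan decomposition exists and it is legitimate to ask whether (\ref{SCD0}) is one. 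I would also note that, since the structure constants $C^{k}_{ij}$ of $\mathcal{G}_{0}$ and the $2$-selectors $K^{\gamma}_{\alpha\beta}$ are real, the expanded structure constants $K^{\gamma}_{\alpha\beta}C^{k}_{ij}$ are real in the basis $\{X_{(i^{(0)},\alpha)}\}$. Consequently $\mathcal{G}_{S}=S\otimes\mathcal{G}$ is the complexification of $\mathcal{G}_{0,S}=S\otimes\mathcal{G}_{0}$, and the conjugation $\sigma_{S}$ of $\mathcal{G}_{S}$ with respect to $\mathcal{G}_{0,S}$ is coordinate conjugation in this real basis, i.e. it acts as $\sigma$ on the $\mathcal{G}$ factor and fixes the semigroup factor $S$.

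The key step is the observation that the expansion commutes with the construction of the compact real form. Writing $\mathcal{G}_{k}=\mathcal{T}_{0}+i\mathcal{P}_{0}$ (eq. (\ref{mario_res0})) in the bases $\{X_{i_{0}^{(0)}}\}$ of $\mathcal{T}_{0}$ and $\{X_{i_{1}^{(0)}}\}$ of $\mathcal{P}_{0}$, and tensoring with $S$ (whose elements carry no complex structure), gives the exact analogue $\mathcal{G}_{k,S}=S\otimes\mathcal{G}_{k}=\mathcal{T}_{0,S}+i\mathcal{P}_{0,S}$ of the original identity, with $\mathcal{T}_{0,S}$ and $\mathcal{P}_{0,S}$ as in (\ref{SCD_01})--(\ref{SCD_02}). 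From here the three conditions follow by the same linear-algebra argument used for $\mathcal{G}_{0}$ itself. Since $\{X_{(i_{0}^{(0)},\alpha)}\}$, $\{X_{(i_{1}^{(0)},\alpha)}\}$ and $\{iX_{(i_{1}^{(0)},\alpha)}\}$ are $\mathbb{R}$-independent inside $\mathcal{G}_{S}$, one reads off $\mathcal{G}_{0,S}\cap\mathcal{G}_{k,S}=\mathcal{T}_{0,S}$ and $\mathcal{G}_{0,S}\cap(i\mathcal{G}_{k,S})=\mathcal{P}_{0,S}$; and $\sigma_{S}$ fixes $\mathcal{T}_{0,S}$ and $\mathcal{P}_{0,S}$ pointwise while sending $i\mathcal{P}_{0,S}$ to $-i\mathcal{P}_{0,S}$, so $\sigma_{S}(\mathcal{G}_{k,S})=\mathcal{G}_{k,S}$. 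Finally, $\mathcal{T}_{0,S}$ is a subalgebra because $[\mathcal{T}_{0},\mathcal{T}_{0}]\subset\mathcal{T}_{0}$ (first line of (\ref{mario_res})) together with the expansion bracket $K^{\gamma}_{\alpha\beta}C^{k}_{ij}$ forces $[\mathcal{T}_{0,S},\mathcal{T}_{0,S}]\subset\mathcal{T}_{0,S}$.

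I expect the only genuine obstacle to be the bookkeeping in the first paragraph: one must argue carefully that the complex structure of $\mathcal{G}_{S}$ sits entirely on the $\mathcal{G}$ factor, so that complexification, the choice of compact real form, and the conjugation all pass through the tensor product with the fixed real semigroup factor $S$. Once this compatibility is in place, the theorem reduces to the statement that the operation $S\otimes(-)$ carries the data $(\mathcal{G}_{0},\mathcal{G},\mathcal{G}_{k},\sigma,\mathcal{T}_{0},\mathcal{P}_{0})$ of a Cartan decomposition to the corresponding data for $\mathcal{G}_{0,S}$, and the verification of the defining conditions (\ref{BCD3}) is then routine, with compactness of $\mathcal{G}_{k,S}$ supplied by hypothesis and semisimplicity of $\mathcal{G}_{0,S}$ supplied by $\det\mathbf{g}^{S}\neq0$.
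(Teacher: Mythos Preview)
Your proposal is correct and follows essentially the same approach as the paper's proof in Appendix~C: exhibit $\mathcal{G}_{k,S}=S\otimes\mathcal{G}_{k}=\mathcal{T}_{0,S}+i\mathcal{P}_{0,S}$ as the candidate compact real form, define the conjugation $\sigma_{S}$ by letting $\sigma$ act on the $\mathcal{G}$ factor and the identity on the $S$ factor, and then verify the three defining conditions (\ref{BCD3}) from the basis description. The paper carries out the verification of $\sigma_{S}(\mathcal{G}_{k,S})\subset\mathcal{G}_{k,S}$ and the two intersection identities by explicit index computations rather than your more conceptual phrasing, but the substance is identical; your additional remarks that $\det\mathbf{g}^{S}\neq0$ guarantees semisimplicity of $\mathcal{G}_{0,S}$ and that $\mathcal{T}_{0,S}$ is a subalgebra are welcome pieces of housekeeping that the paper leaves implicit.
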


\begin{proof}
See appendix C.
\end{proof}

\bigskip

\textbf{The semisimple resonant subalgebra}

%\label{resCD}

The Cartan decomposition of a real semisimple Lie algebra%
\begin{equation}
\mathcal{G}_{0}=\mathcal{T}_{0}+\mathcal{P}_{0}\label{Cdres4}%
\end{equation}
has the subspace structure (\ref{lres8}) with $V_{0}=\mathcal{T}_{0}$ and
$V_{1}=\mathcal{P}_{0}$, i.e.,%
\begin{align}
\left[  \mathcal{T}_{0},\mathcal{T}_{0}\right]   &  \subset\mathcal{T}%
_{0}\label{CDres5}\\
\left[  \mathcal{T}_{0},\mathcal{P}_{0}\right]   &  \subset\mathcal{P}%
_{0}\nonumber\\
\left[  \mathcal{P}_{0},\mathcal{P}_{0}\right]   &  \subset\mathcal{T}%
_{0}\nonumber
\end{align}
which is a particular case of (\ref{CDres1}). So if we perform a special
expansion by choosing a semigroup having a decomposition $S=S_{0}\cup S_{1}$
satisfying the \textit{resonant condition} (\ref{lres9}), then the resonant
subalgebra of $\mathcal{G}_{0,S,R}$ is given by:%
\begin{align}
\mathcal{G}_{0,S,R} &  =\left(  S_{0}\otimes\mathcal{T}_{0}\right)  +\left(
S_{1}\otimes\mathcal{P}_{0}\right)  \label{CDreS1}\\
&  =\mathcal{T}_{0,S,R}+\mathcal{P}_{0,S,R}\nonumber
\end{align}
where%
\begin{align}
\mathcal{T}_{0,S,R} &  =\left(  S_{0}\otimes\mathcal{T}_{0}\right)  =\left\{
X_{\left(  i_{0}^{\left(  0\right)  },\alpha_{0}\right)  }\right\}
\label{CDreS2}\\
\mathcal{P}_{0,S,R} &  =\left(  S_{1}\otimes\mathcal{P}_{0}\right)  =\left\{
X_{\left(  i_{1}^{\left(  0\right)  },\alpha_{1}\right)  }\right\}
\label{CDreS3}%
\end{align}

Then we have the following

\begin{theorem}
If the semigroup $S$ used in the expansion is such that all the eigenvalues
$\xi_{\alpha}$ of the martrix $\mathbf{g}^{S}$ are positive then the resonant
subalgebra $\mathcal{G}_{0,S,R}$ (which is obviously semisimple because
$\det\mathbf{g}^{S}\neq0$) has a Cartan decomposition given precisely by
(\ref{CDreS1}-\ref{CDreS3}).
\end{theorem}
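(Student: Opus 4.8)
The plan is to verify directly that the decomposition $\mathcal{G}_{0,S,R}=\mathcal{T}_{0,S,R}+\mathcal{P}_{0,S,R}$ of (\ref{CDreS1})--(\ref{CDreS3}) meets the two defining requirements of a Cartan decomposition: first, that it is a symmetric (involutive) splitting, so that $\mathcal{T}_{0,S,R}$ is a subalgebra admitting an involution with $\mathcal{T}_{0,S,R},\mathcal{P}_{0,S,R}$ as its $\pm1$ eigenspaces; and second, that the Killing--Cartan form of $\mathcal{G}_{0,S,R}$ is negative definite on $\mathcal{T}_{0,S,R}$ and positive definite on $\mathcal{P}_{0,S,R}$. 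Once both hold, the real subalgebra $\mathcal{G}_{k,S,R}=\mathcal{T}_{0,S,R}+i\mathcal{P}_{0,S,R}$ is the compact real form required in the Definition (\ref{BCD2})--(\ref{BCD3}), being $\sigma$-invariant exactly as in (\ref{mario_res0}) and having negative definite Killing form.

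First I would establish the bracket structure. Using the resonance conditions (\ref{lres9}) on $S=S_0\cup S_1$ together with the Cartan brackets (\ref{CDres5}) of $\mathcal{G}_0$, one computes $[\,S_0\otimes\mathcal{T}_0,\,S_0\otimes\mathcal{T}_0\,]=(S_0\cdot S_0)\otimes[\mathcal{T}_0,\mathcal{T}_0]\subset S_0\otimes\mathcal{T}_0$, and likewise $[\,S_0\otimes\mathcal{T}_0,\,S_1\otimes\mathcal{P}_0\,]\subset S_1\otimes\mathcal{P}_0$ and $[\,S_1\otimes\mathcal{P}_0,\,S_1\otimes\mathcal{P}_0\,]\subset S_0\otimes\mathcal{T}_0$. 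These are precisely the three inclusions $[\mathcal{T}_{0,S,R},\mathcal{T}_{0,S,R}]\subset\mathcal{T}_{0,S,R}$, $[\mathcal{T}_{0,S,R},\mathcal{P}_{0,S,R}]\subset\mathcal{P}_{0,S,R}$ and $[\mathcal{P}_{0,S,R},\mathcal{P}_{0,S,R}]\subset\mathcal{T}_{0,S,R}$, so $\mathcal{T}_{0,S,R}$ is a subalgebra and the splitting is involutive.

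Next I would settle the definiteness of the Killing form, which is where the hypothesis $\xi_\alpha>0$ enters. Since $\mathcal{G}_{0,S}$ and $\mathcal{G}_{0,S,R}$ are semisimple ($\det\mathbf{g}^S\neq0$), the form $\mathbf{g}^{E,R}$ of the resonant subalgebra is block diagonal as in (\ref{res_final}), with blocks $\alpha\,\mathbf{g}^S(S_0)\otimes\mathbf{g}(\mathcal{T}_0)$ and $\alpha\,\mathbf{g}^S(S_1)\otimes\mathbf{g}(\mathcal{P}_0)$, where $\alpha>0$ by (\ref{sup})--(\ref{alfaa}). The key points are: (a) the matrices $\mathbf{g}(\mathcal{T}_0)$ and $\mathbf{g}(\mathcal{P}_0)$ of (\ref{zdef2})--(\ref{zdef2b}) are, thanks to the Cartan brackets (\ref{CDres5}), exactly the restrictions of the Killing form of $\mathcal{G}_0$ to $\mathcal{T}_0$ and $\mathcal{P}_0$, hence have eigenvalues $\mu_{i_0}<0$ and $\mu_{i_1}>0$ respectively; and (b) the same resonance conditions force the off-diagonal blocks $g^S_{\alpha_0\beta_1}$ of $\mathbf{g}^S$ to vanish, so $\mathbf{g}^S$ is block diagonal in the $S_0\cup S_1$ splitting and its spectrum is the union of those of $\mathbf{g}^S(S_0)$ and $\mathbf{g}^S(S_1)$. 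Therefore $\xi_\alpha>0$ for all $\alpha$ forces $\xi_{\alpha_0}>0$ and $\xi_{\alpha_1}>0$, so the eigenvalues $\alpha\,\xi_{\alpha_0}\mu_{i_0}$ of $\mathbf{g}^{E,R}$ on $\mathcal{T}_{0,S,R}$ are all negative while the eigenvalues $\alpha\,\xi_{\alpha_1}\mu_{i_1}$ on $\mathcal{P}_{0,S,R}$ are all positive. This is the required definiteness, completing the identification of (\ref{CDreS1})--(\ref{CDreS3}) as a Cartan decomposition.

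The main obstacle is point (b) combined with point (a): one must recognize that the resonant splitting of $S$ makes $\mathbf{g}^S$ block diagonal — exactly mirroring how the Cartan splitting of $\mathcal{G}_0$ kills the $\mathcal{T}_0$--$\mathcal{P}_0$ cross terms of its Killing form — so that positivity of the full spectrum transfers to each sector $S_0,S_1$ separately; without this, $\xi_\alpha>0$ would not obviously control the signs sector by sector. A cleaner but equivalent route avoids the block bookkeeping: $\mathcal{G}_{0,S,R}$ is a semisimple subalgebra of $\mathcal{G}_{0,S}$ invariant under the Cartan involution $\theta$ of $\mathcal{G}_{0,S}$ furnished by the preceding theorem (since $\theta=+1$ on $S\otimes\mathcal{T}_0\supset\mathcal{T}_{0,S,R}$ and $\theta=-1$ on $S\otimes\mathcal{P}_0\supset\mathcal{P}_{0,S,R}$); because by (\ref{sup}) the Killing form of $\mathcal{G}_{0,S,R}$ is the positive multiple $\alpha$ of the restriction of that of $\mathcal{G}_{0,S}$, the restriction of $\theta$ remains a Cartan involution, yielding the stated decomposition at once.
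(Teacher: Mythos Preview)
Your argument has a genuine gap in point (b). You claim that the resonance conditions force the off-diagonal blocks $g^S_{\alpha_0\beta_1}$ of $\mathbf{g}^S$ to vanish, making $\mathbf{g}^S$ block diagonal in the $S_0\cup S_1$ splitting. But the resonant decomposition $S=S_0\cup S_1$ is a \emph{union}, not a partition: in every example of the paper $S_0\cap S_1\neq\emptyset$ (they share at least the zero element). A direct computation in $S_{(5)}^{770}$, for instance, gives $g^S_{24}=1$ with $\lambda_2\in S_0\setminus S_1$ and $\lambda_4\in S_1\setminus S_0$, the nonzero contribution coming precisely from the common element $\lambda_1$. So $\mathbf{g}^S$ is \emph{not} block diagonal, and your transfer of positivity ``sector by sector'' via block diagonality does not go through. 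The fix is simpler than what you attempted: the matrices $\mathbf{g}^S(S_0)$ and $\mathbf{g}^S(S_1)$ appearing in (\ref{res_final}) are, by the derivation via (\ref{sup}), just the principal submatrices of $\mathbf{g}^S$ on the index sets $S_0$ and $S_1$; a principal submatrix of a positive definite matrix is positive definite, so $\xi_{\alpha_0},\xi_{\alpha_1}>0$ follows immediately from $\xi_\alpha>0$ with no block-diagonality needed.

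Your alternative route at the end --- restrict the Cartan involution of $\mathcal{G}_{0,S}$ furnished by the preceding theorem to the $\theta$-stable semisimple subalgebra $\mathcal{G}_{0,S,R}$, and use (\ref{sup}) to transport definiteness --- is correct and is in fact the cleanest way to run your signature-based strategy. This is different from the paper's proof, which does not work with the Killing form signature at all: the paper follows the Helgason-style Definition (\ref{BCD2})--(\ref{BCD3}) directly, constructing the compact real form $\mathcal{G}_{k,S,R}=(S_0\otimes\mathcal{T}_0)+(S_1\otimes i\mathcal{P}_0)$ as the resonant subalgebra of the compact $\mathcal{G}_{k,S}$, observing it is compact as a subalgebra of a compact algebra, and then verifying $\sigma_S$-invariance and the intersection conditions (\ref{CDreS4})--(\ref{CDreS6}). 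Your approach buys a more computational, metric picture that ties in naturally with the block formula (\ref{res_final}); the paper's approach avoids any eigenvalue bookkeeping and makes the role of the ambient compact form $\mathcal{G}_{k,S}$ explicit.
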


\begin{proof}
See Appendix C.
\end{proof}

These results can be presented in a more general form by relaxing the asumption
that $S$ preserves compactness, i.e., $\xi_{\alpha}>0$. In fact, one can
consider the broader class of semigroups preserving semisimplicity, i.e.,
$\det\mathbf{g}^{S}\neq0$. Nevertheless, under this condition the explicit
form of the Cartan decomposition is not necessarily given by eqs.
(\ref{CDreS1}-\ref{CDreS3}) (eqs. (\ref{SCD0}-\ref{SCD_02})). The reason is
that the expression for $\mathcal{G}_{k,S,R}$ given in eq. (\ref{f3}) (for
$\mathcal{G}_{k,S}$ given in eq. (\ref{CC_1})) is not necessarily compact:
compact generators may become non-compact and vice-versa depending on the
signature of $\mathbf{g}^{S}$. It is definitely true that to demand
$\det\mathbf{g}^{S}\neq0$ is enough for the expanded algebra to admit a Cartan
decomposition because under this condition $\mathcal{G}_{0,S,R}$
($\mathcal{G}_{0,S}$) is a real semisimple Lie algebra. To find it explicitly
one would have to diagonalize $\mathbf{g}^{E,R}$ ($\mathbf{g}^{E}$) to find
the new compact and non compact generators, which would not longer be
expressible as tensor products of elements of $S$ and $\mathcal{G}$, but rather
as combination of tensor products. For example, the diagonalization of
$\mathbf{g}^{E}$ is given by%
\[
\mathbf{g}^{E,D}=\tilde{O}^{T}\mathbf{g}^{E}\tilde{O}%
\]
where $\tilde{O}_{\ \ \left(  \alpha^{\prime},i^{\prime}\right)  }^{\left(
\alpha,i\right)  }=\mathcal{O}_{\ \alpha^{\prime}}^{\alpha}\hat{O}%
_{\ i^{\prime}}^{i}$ is an orthogonal matrix, $\mathcal{O}_{\ \alpha^{\prime}%
}^{\alpha}$ and $\hat{O}_{\ i^{\prime}}^{i}$ are the orthogonal matrices
diagonalizing $\mathbf{g}^{S}$ and $\mathbf{g}$ respectively. Then the new
basis of generators%
\[
X_{\left(  \alpha^{\prime},i^{\prime}\right)  }^{\prime}=X_{\left(
\alpha,i\right)  }\tilde{O}_{\ \ \left(  \alpha^{\prime},i^{\prime}\right)
}^{\left(  \alpha,i\right)  }%
\]
split into two subsets, corresponding to negative and positive eigenvalues,
generating the subspaces $\mathcal{T}_{0,S}$ and $\mathcal{P}_{0,S}$, so that
the Cartan decomposition will be given by $\mathcal{G}_{0,S}=\mathcal{T}%
_{0,S}+\mathcal{P}_{0S}$. A similar procedure can be done in the case of the
resonant subalgebra.

\bigskip

\textbf{The semisimple reduction of the resonant subalgebra}

The same results obtained above can be extended when the reduction of the resonant
subalgebra, $\mathcal{G}_{k,S,R}^{\text{red}}$, is compact. In this case we
have%
\begin{equation}
\mathcal{G}_{k,S,R}^{\text{red}}=\left(  S_{0}^{\text{red}}\otimes
\mathcal{T}_{0}\right)  +\left(  S_{1}^{\text{red}}\otimes i\mathcal{P}%
_{0}\right)  \label{f5}%
\end{equation}
where $S_{0}^{\text{red}}=S_{0}-\left\{  0_{S}\right\}  $ and $S_{1}%
^{\text{red}}=S_{1}-\left\{  0_{S}\right\}  $. The Cartan decomposition of
$\mathcal{G}_{0,S,R}^{\text{red}}$ is given by%
\begin{align}
\mathcal{G}_{0,S,R}^{\text{red}} &  =\left(  S_{0}^{\text{red}}\otimes
\mathcal{T}_{0}\right)  +\left(  S_{1}^{\text{red}}\otimes\mathcal{P}%
_{0}\right)  \label{f6}\\
&  =\mathcal{T}_{0,S,R}^{\text{red}}+\mathcal{P}_{0,S,R}^{\text{red}}\nonumber
\end{align}
with%
\begin{align}
\mathcal{T}_{0,S,R}^{\text{red}} &  =\mathcal{G}_{0,S,R}^{\text{red}}%
\cap\mathcal{G}_{k,S,R}^{\text{red}}=\left(  S_{0}^{\text{red}}\otimes
\mathcal{T}_{0}\right)  =\left\{  X_{\left(  i_{0}^{\left(  0\right)  }%
,\alpha_{0}^{\text{red}}\right)  }\right\}  \label{f7}\\
\mathcal{P}_{0,S,R}^{\text{red}} &  =\mathcal{G}_{0,S,R}^{\text{red}}%
\cap\left(  i\mathcal{G}_{k,S,R}^{\text{red}}\right)  =\left(  S_{1}%
^{\text{red}}\otimes\mathcal{P}_{0}\right)  =\left\{  X_{\left(
i_{1}^{\left(  0\right)  },\alpha_{1}^{\text{red}}\right)  }\right\}
\nonumber
\end{align}
where $\alpha_{0}^{\text{red}}$ and $\alpha_{1}^{\text{red}}$ are indices
running on $S_{0}^{\text{red}}$ and $S_{1}^{\text{red}}$ respectively.

\section{Expansions of $\mathfrak{sl}(2,\mathbb{R} )$, an instructive example}

\label{sl2}

Consider the algebra $\mathcal{G}_{0}=\mathfrak{sl}(2,\mathbb{R})$ and its
complex form $\mathcal{G=}\mathfrak{sl}(2,\mathbb{C})$. As known,
$\mathcal{G}_{k}=\mathfrak{su}(2)$ is the compact real form of $\mathfrak{sl}%
(2,\mathbb{C})$ so the Cartan decomposition of $\mathfrak{sl}(2,\mathbb{R})$,%
\begin{equation}
\mathcal{G}_{0}=\mathfrak{sl}(2,\mathbb{R})=\mathcal{T}_{0}+\mathcal{P}%
_{0}\text{,}\label{not00}%
\end{equation}%
\begin{align}
\mathcal{T}_{0} &  =\left\{  -i\sigma_{2}\right\}  \label{not01}\\
\mathcal{P}_{0} &  =\left\{  \sigma_{1},\sigma_{3}\right\}  \nonumber
\end{align}
is such that:%
\begin{align}
\mathcal{T}_{0} &  =\mathfrak{sl}(2,\mathbb{R})\cap\mathfrak{su}%
(2)\label{not02}\\
\mathcal{P}_{0} &  =\mathfrak{sl}(2,\mathbb{R})\cap i\left(  \mathfrak{su}%
(2)\right)  \nonumber
\end{align}
This can be checked directly by considering that the basis of $SU(2)\,$is
given by:%

\begin{equation}
\mathcal{G}_{k}=\mathfrak{su}(2)=\left\{  i\sigma_{2},i\sigma_{1},i\sigma
_{3}\right\}  =\mathcal{T}_{0}+i\mathcal{P}_{0}\label{not03}%
\end{equation}
where $\sigma_{1}$, $\sigma_{2}$ and $\sigma_{3}$ are the usual Pauli matrices.
It is also directly checked that $\mathfrak{sl}(2,\mathbb{R})$ has subspace
structure (\ref{lres8}) with $V_{0}=\mathcal{T}_{0}$ and $V_{1}=\mathcal{P}%
_{0}$ and that the Killing-Cartan metric for $\mathfrak{sl}(2,\mathbb{R})$ is given
by
\begin{align}
\mathbf{g} &  =\left(
\begin{array}
[c]{cc}%
\mathbf{g}\left(  V_{0}\right)   & 0\\
0 & \mathbf{g}\left(  V_{1}\right)
\end{array}
\right)  =diag\left(  -8,8,8\right)  \label{not05}\\
\mathbf{g}\left(  V_{0}\right)   &  =-8\ \ ;\ \ \mathbf{g}\left(
V_{1}\right)  =diag\left(  8,8\right)  \nonumber
\end{align}

According to our notation:
\begin{equation}
X_{1}=-i\sigma_{2}\text{, }X_{2}=\sigma_{1}\text{ and }X_{3}=\sigma
_{3},\label{not1}%
\end{equation}
so the elements of the expanded algebra will have the form
\begin{equation}
X_{\left(  i,\alpha\right)  }=\lambda_{\alpha}\otimes X_{i}\label{not2}%
\end{equation}
as in (\ref{z2}), with $i=1,2,3$ and $\alpha$ an index in a certain semigroup
$S$.

In what follows we study the expansions of $\mathfrak{sl}(2,\mathbb{R})$ with
the standard semigroups introduced in section \ref{history}. The results
obtained in section \ref{preserved} will also be checked with this example.

\subsection{Classification of the different kinds of expansions}

We are going to study the properties of all the possible expansions of
$\mathfrak{sl}(2,\mathbb{R})$ that can be made by using semigroups of order
$n=1,2,...,6$. These expansions can be classified as one of the following kinds:

\begin{itemize}
\item expansions with all the abelian semigroups

\item expansions with all the abelian semigroups with a zero element

\item expansions with all the abelian semigroups with a resonant decomposition
of the form (\ref{lres9})

\item expansions with all the abelian semigroups with a zero element and
simultaneously with a resonant decomposition of the form (\ref{lres9}%
)\footnote{It is interesting to note that a semigroup can have more than one
resonant decomposition leading then to different expanded algebras.}.
\end{itemize}

To perform all these possible expansions we use the following algorithm:

First we identify all the semigroups of a certain order $n$ (we have limited
this up to order $6$) satisfying the conditions enumerated above. For example,
for $n=3$ the results are given in figure \ref{fig:fig0}:

\begin{figure}[th]
\centering
\includegraphics[scale=0.75]{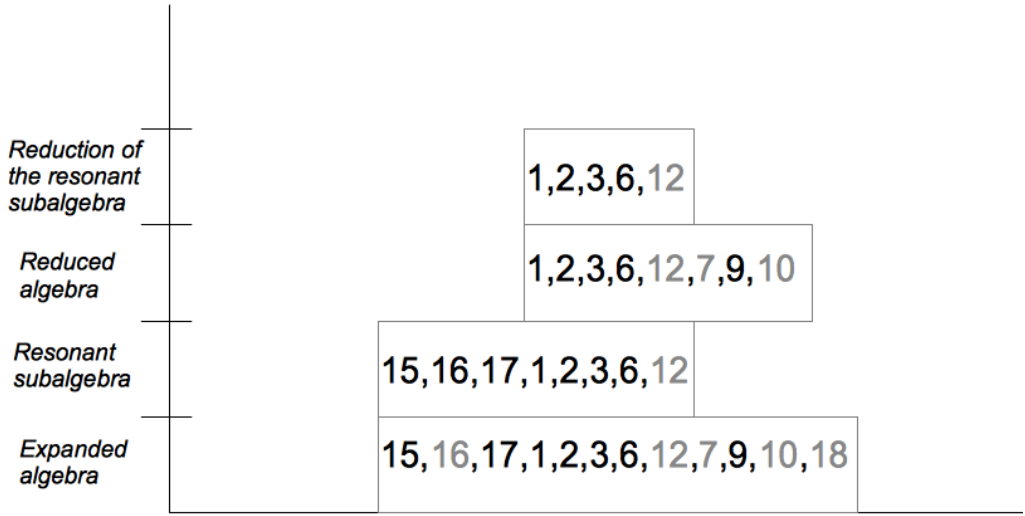}\caption{Expansions of
$\mathfrak{sl}(2,\mathbb{R})$ with abelian semigroups of order $3$}%
\label{fig:fig0}%
\end{figure}

We identified the different semigroups $S_{(3)}^{a} $ of order 3 with a label
`$a$' for reasons of space in figure \ref{fig:fig0} to name the different
semigroups. The horizontal axis represents the set of semigroups used in some
specific expansion while the vertical axis represents the different kinds of
expansions that can be performed. So,

\begin{itemize}
\item in the first level we list all the abelian semigroups that allow us to
perform a general expansion $S\otimes\mathcal{G}$. The abelian
semigroups are: $S_{\left(  3\right)  }^{1}$, $S_{\left(  3\right)  }^{2}$,
$S_{\left(  3\right)  }^{3}$, $S_{\left(  3\right)  }^{6}$, $S_{\left(
3\right)  }^{7}$, $S_{\left(  3\right)  }^{9}$, $S_{\left(  3\right)  }^{10}$,
$S_{\left(  3\right)  }^{12}$, $S_{\left(  3\right)  }^{15}$, $S_{\left(
3\right)  }^{16}$, $S_{\left(  3\right)  }^{17}$, $S_{\left(  3\right)  }%
^{18}$.

\item in the second level we find all the abelian semigroups that contain at
least one resonant decomposition so that a resonant subalgebra can be
extracted from the expanded one. The abelian semigroups are: $S_{\left(
3\right)  }^{1}$, $S_{\left(  3\right)  }^{2}$, $S_{\left(  3\right)  }^{3}$,
$S_{\left(  3\right)  }^{6}$, $S_{\left(  3\right)  }^{12}$, $S_{\left(
3\right)  }^{15}$, $S_{\left(  3\right)  }^{16}$, $S_{\left(  3\right)  }%
^{17}$.

\item in the third level we see all the abelian semigroups that contain a zero
element so that a reduced algebra can be extracted from the expanded one.
Here are the semigroups: $S_{\left(  3\right)  }^{1}$, $S_{\left(  3\right)
}^{2}$, $S_{\left(  3\right)  }^{3}$, $S_{\left(  3\right)  }^{6}$,
$S_{\left(  3\right)  }^{7}$, $S_{\left(  3\right)  }^{9}$, $S_{\left(
3\right)  }^{10}$, $S_{\left(  3\right)  }^{12}$.

\item in the fourth level we find all the abelian semigroups that contain at
least one resonant decomposition and also a zero element. So a reduced algebra
can be obtained from the resonant subalgebra. The semigroups that allow us to
do that are: $S_{\left(  3\right)  }^{1}$, $S_{\left(  3\right)  }^{2}$,
$S_{\left(  3\right)  }^{3}$, $S_{\left(  3\right)  }^{6}$, $S_{\left(
3\right)  }^{12}$.
\end{itemize}

Then, for all these expansions, we identify the semigroups that preserve the
semisimplicity of the original algebra. In the graphic of figure
\ref{fig:fig0} those semigroups are labeled with a \textit{gray number}.

As we know, the $S$-expansion method \cite{irs} contains as a particular case
the expansions by means of a parameter \cite{aipv1} (and at the same time this
method includes all the types of contractions that are known in the
literature). In fact, this kind of expansion is recovered when one specific semigroup\footnote{In the
case of expansions with semigroups of order 3 this happens for the semigroup
$S_{\left(  3\right)  }^{6}$\ given by
\[%
\begin{tabular}
[c]{l|lll}%
$S_{\left(  3\right)  }^{6}$ & $\lambda_{1}$ & $\lambda_{2}$ & $\lambda_{3}%
$\\\hline
$\lambda_{1}$ & $\lambda_{1}$ & $\lambda_{1}$ & $\lambda_{1}$\\
$\lambda_{2}$ & $\lambda_{1}$ & $\lambda_{1}$ & $\lambda_{2}$\\
$\lambda_{3}$ & $\lambda_{1}$ & $\lambda_{2}$ & $\lambda_{3}$%
\end{tabular}
\ \ \ \ \ \ \ \
\]
which is isomorphic to the semigroup $S_{E}^{\left(  n\right)  }$ with $n=1$
that is used in the original article \cite{irs} where $S$-expansions where
introduced.}, in the fourth level of figure \ref{fig:fig0}, is used. So, it is
evident that the use of semigroups leads to more general kinds of expansions,
i.e., to expansions that are not only on the fourth level of figure
\ref{fig:fig0}, but also in the other three lower levels. Besides, in each of
them there are cases where semisimplicity can or cannot be preserved.

\subsection{General properties of the expansions with $n=3,...,6$}

For higher orders $n\geq4$ it is not possible to show a graphic like that
given in figure \ref{fig:fig0}. Instead we give a table listing the number of
semigroups that leads to the different kinds of expansions we have mentioned
in the previous section. We also give in each case the number of semigroups
preserving semisimplicity:%

\begin{equation}%
\begin{tabular}
[c]{|l|l|l|l|l|}\hline
order & 3 & 4 & 5 & 6\\\hline%
\begin{tabular}
[c]{|l|}\hline%
\begin{tabular}
[c]{l}%
{\small expanded}\\
${\small S\otimes}\mathcal{G}$%
\end{tabular}
\\\hline%
\begin{tabular}
[c]{l}%
{\small preserving}\\
{\small semisimplicity}%
\end{tabular}
\\\hline
\end{tabular}
&
\begin{tabular}
[c]{|l|}\hline
\#12\\\hline
\#5\\\hline
\end{tabular}
&
\begin{tabular}
[c]{|l|}\hline
\#58\\\hline
\#16\\\hline
\end{tabular}
&
\begin{tabular}
[c]{|l|}\hline
\#325\\\hline
\#51\\\hline
\end{tabular}
&
\begin{tabular}
[c]{|l|}\hline
\#2,143\\\hline
\#201\\\hline
\end{tabular}
\\\hline%
\begin{tabular}
[c]{|l|}\hline%
\begin{tabular}
[c]{l}%
{\small expanded}\\
{\small and reduced}%
\end{tabular}
\\\hline%
\begin{tabular}
[c]{l}%
{\small preserving}\\
{\small semisimplicity}%
\end{tabular}
\\\hline
\end{tabular}
&
\begin{tabular}
[c]{|l|}\hline
\#8\\\hline
\#3\\\hline
\end{tabular}
&
\begin{tabular}
[c]{|l|}\hline
\#39\\\hline
\#9\\\hline
\end{tabular}
&
\begin{tabular}
[c]{|l|}\hline
\#226\\\hline
\#34\\\hline
\end{tabular}
&
\begin{tabular}
[c]{|l|}\hline
\#1,538\\\hline
\#135\\\hline
\end{tabular}
\\\hline%
\begin{tabular}
[c]{|l|}\hline%
\begin{tabular}
[c]{l}%
{\small resonant}\\
{\small subalgebra}%
\end{tabular}
\\\hline%
\begin{tabular}
[c]{l}%
{\small preserving}\\
{\small semisimplicity}%
\end{tabular}
\\\hline
\end{tabular}
&
\begin{tabular}
[c]{|l|}\hline
\#8\\\hline
\#1\\\hline
\end{tabular}
&
\begin{tabular}
[c]{|l|}\hline
\#48\\\hline
\#4\\\hline
\end{tabular}
&
\begin{tabular}
[c]{|l|}\hline
\#299\\\hline
\#7\\\hline
\end{tabular}
&
\begin{tabular}
[c]{|l|}\hline
\#2,059\\\hline
\#23\\\hline
\end{tabular}
\\\hline%
\begin{tabular}
[c]{|l|}\hline%
\begin{tabular}
[c]{l}%
{\small reduction of}\\
{\small resonant subalg.}%
\end{tabular}
\\\hline%
\begin{tabular}
[c]{l}%
{\small preserving}\\
{\small semisimplicity}%
\end{tabular}
\\\hline
\end{tabular}
&
\begin{tabular}
[c]{|l|}\hline
\#5\\\hline
\#1\\\hline
\end{tabular}
&
\begin{tabular}
[c]{|l|}\hline
\#32\\\hline
\#1\\\hline
\end{tabular}
&
\begin{tabular}
[c]{|l|}\hline
\#204\\\hline
\#6\\\hline
\end{tabular}
&
\begin{tabular}
[c]{|l|}\hline
\#1,465\\\hline
\#12\\\hline
\end{tabular}
\\\hline
\end{tabular}
\ \ \ \label{cp_table1}%
\end{equation}

To construct this table we have written a set of computer programs (a Java Library) that take as an input all the abelian semigroups given by the the second program in \cite{Plemmons}, \textit{com.f}, and then permit to classify those ones with a zero elements and also to study all the resonant decompositions they have. A deeper description about those programs submitted soon (see \cite{Nadal}).

In the different rows we see the various kinds of expansions that can be done
(the expanded algebra, the resonant subalgebra, the reduced algebra and the
reduction of the resonant subalgebra) for each order of the semigroups. In
each case the number of semigroups with which the expansions can be performed
is specified. The number of semigroups preserving semi-simplicity is also
given. However, it may happen that a certain semigroup has more than one
resonance, which then leads to different expanded algebras. In what follows we
summarize this information:

\textbf{For the order }$n=3$\textbf{:}

\begin{itemize}
\item There are $8$ semigroups of order $3$ with at least one resonant decomposition,

\item with $9$ different resonant decompositions (so this is the number of
different kinds of expansions that can be made for $n=3$)

\item and $1$ expansion that gives a semisimple Lie algebra with one of its resonances.
\end{itemize}

\textbf{For the order }$n=4$\textbf{:}

\begin{itemize}
\item There are $48$ semigroups of order $4$ with at least one resonant decomposition,

\item with $124$ different resonant decompositions (so this is the number of
different kinds of expansions that can be made for $n=4$)

\item and $4$ expansions that give a semi-simple Lie algebra with one of its resonances.
\end{itemize}

\textbf{For the order }$n=5$\textbf{:}

\begin{itemize}
\item There are $299$ semigroups of order $5$ with at least one resonant decomposition,

\item with $1,653$ different resonant decompositions (so this is the number of
different kinds of expansions that can be made for $n=5$)

\item and $7$ expansions that give a semi-simple Lie algebra with one of its resonances.
\end{itemize}

\textbf{For the order }$n=6$\textbf{:}

\begin{itemize}
\item There are $2,059$ semigroups of order $6$ with at least one resonant decomposition,

\item with $25,512$ different resonant decompositions (so this is the number
of different kinds of expansions that can be made for $n=6$)

\item and $23$ expansions that give a semi-simple Lie algebra with one of its resonances.
\end{itemize}

In general all these expansions with $n=3,4,5,6$ share the following property:

\begin{remark}
Consider a semigroup of order $n=3,4,5,6$ having more than one resonance. Then
if it preserves semisimplicity, this happens just for one of its resonances.
There is no semigroup preserving semisimplicity with more than one of its resonances.
\end{remark}

In this way it is explicitly verified that starting from a semi-simple algebra
the expanded algebras are not necesarily semi-simple, as was suggested in
section \ref{preserved}. In fact the major part of the expansions do not
preserve semi-simplicity.

%%%%%%%%%%%%%%%%%%%%%%%%%%%%%%%%%%%%%%%%%%%%%%%%%%%%%%%%%%%%%%%%%%%%%%%%%%%%%%%%%%%%%%%%%%%%%%%%%%%%%%%%%%%%%%%%

\section{Semigroups preserving semisimplicity}

\label{sclassification}

We have seen that semi-simplicity and compactness are preserved only in
expansions performed with semigroups satisfying certain conditions related to
the signature (i.e., number of positive, negative and zero eigenvalues) of
the symmetric matrices $\mathbf{g}^{S}$, $\mathbf{g}^{S}\left(  S_{0}\right)
$, $\mathbf{g}^{S}\left(  S_{1}\right)  $, $\mathbf{g}^{\text{red}}\left(
S_{0}\right)  $ and $\mathbf{g}^{\text{red}}\left(  S_{0}\right)  $ defined in
section \ref{semis}. It is therefore relevant to classify the semigroups also
in relation to that property, which is done here for semigroups of order $3,4$
and $5$ preserving semisimplicity.

Using the standard classification and notation of the semigroups given in
section \ref{history} we have found that the semigroups $S_{\left(  n\right)
}^{a}=\left\{  \lambda_{1},...,\lambda_{n}\right\}  $ of order $n=3,4,5$\ with
a zero element and a resonant decomposition (\ref{lres9}) that preserve
semisimplicity are: $S_{\left(  3\right)  }^{12}$, $S_{\left(  4\right)
}^{88}$, $S_{\left(  5\right)  }^{770}$, $S_{\left(  5\right)  }^{779}$,
$S_{\left(  5\right)  }^{922}$, $S_{\left(  5\right)  }^{968}$, $S_{\left(
5\right)  }^{990}$, $S_{\left(  5\right)  }^{991}$. In what follow we list
their main properties and explicit expansions of $\mathfrak{sl}(2,\mathbb{R})$
will be performed with those semigroups:

The semigroup $S_{\left(  3\right)  }^{12}=\left\{  \lambda_{1},\lambda
_{2},\lambda_{3}\right\}  $ is defined by%

\[%
\begin{tabular}
[c]{l|lll}%
$S_{\left(  3\right)  }^{12}$ & $\lambda_{1}$ & $\lambda_{2}$ & $\lambda_{3}%
$\\\hline
$\lambda_{1}$ & $\lambda_{1}$ & $\lambda_{1}$ & $\lambda_{1}$\\
$\lambda_{2}$ & $\lambda_{1}$ & $\lambda_{2}$ & $\lambda_{3}$\\
$\lambda_{3}$ & $\lambda_{1}$ & $\lambda_{3}$ & $\lambda_{2}$%
\end{tabular}
\ \ \ \ \ \text{ \
\begin{tabular}
[c]{l}%
with the resonant decomposition\\
\multicolumn{1}{c}{$S_{0}=\left\{  \lambda_{1},\lambda_{2}\right\}  $,
$S_{1}=\left\{  \lambda_{1},\lambda_{3}\right\}  $}%
\end{tabular}
}%
\]
The eigenvalues of the matrices $\mathbf{g}^{S}$, $\mathbf{g}^{S}\left(
S_{0}\right)  $, $\mathbf{g}^{S}\left(  S_{1}\right)  $, $\mathbf{g}%
^{\text{red}}\left(  S_{0}\right)  $, $\mathbf{g}^{\text{red}}\left(
S_{0}\right)  $ for this semigroup are all positive so, besides
semisimplicity, compactness properties are also preserved. However, an
expansion with this semigroup is trivial in the sense that the reduction of
the resonant subalgebra is equal to the original algebra. Something similar
happens for the semigroup $S^{88}=\left\{  \lambda_{1},\lambda_{2},\lambda
_{3},\lambda_{4}\right\}  $ defined by%
\[%
\begin{tabular}
[c]{l|llll}%
$S_{\left(  4\right)  }^{88}$ & $\lambda_{1}$ & $\lambda_{2}$ & $\lambda_{3}$
& $\lambda_{4}$\\\hline
$\lambda_{1}$ & $\lambda_{1}$ & $\lambda_{1}$ & $\lambda_{1}$ & $\lambda_{1}%
$\\
$\lambda_{2}$ & $\lambda_{1}$ & $\lambda_{2}$ & $\lambda_{2}$ & $\lambda_{2}%
$\\
$\lambda_{3}$ & $\lambda_{1}$ & $\lambda_{2}$ & $\lambda_{3}$ & $\lambda_{4}%
$\\
$\lambda_{4}$ & $\lambda_{1}$ & $\lambda_{2}$ & $\lambda_{4}$ & $\lambda_{4}$%
\end{tabular}
\ \ \ \ \ \text{ \
\begin{tabular}
[c]{l}%
with the resonant decomposition\\
\multicolumn{1}{c}{$S_{0}=\left\{  \lambda_{1},\lambda_{2},\lambda
_{3}\right\}  $, $S_{1}=\left\{  \lambda_{1},\lambda_{2},\lambda_{4}\right\}
$.}%
\end{tabular}
}%
\]
It also preserves compactness properties, but two consecutive
reductions\footnote{In general more than one reduction can be performed when a
semigroup (here $S_{\left(  4\right)  }^{88}$) with a certain zero element
($\lambda_{1}$) contains a sub-semigroup (the set $\left\{  \lambda
_{2},\lambda_{3},\lambda_{4}\right\}  $) having its own zero element (in this
case $\lambda_{2}$).} of the resonant subalgebra lead finally to the original
algebra, so it is not interesting either.

More interesting cases can be found among the $6$ semigroups of order $5$ that
preserve semisimplicity of the reduction of the resonant subalgebra (see table
\ref{cp_table1}). Those semigroups are:%

\begin{equation}%
\begin{tabular}
[c]{l}%
$%
\begin{tabular}
[c]{c|ccccc}%
$S_{\left(  5\right)  }^{770}$ & $\lambda_{1}$ & $\lambda_{2}$ & $\lambda_{3}$
& $\lambda_{4}$ & $\lambda_{5}$\\\hline
$\lambda_{1}$ & $\lambda_{1}$ & $\lambda_{1}$ & $\lambda_{1}$ & $\lambda_{1}$
& $\lambda_{1}$\\
$\lambda_{2}$ & $\lambda_{1}$ & $\lambda_{2}$ & $\lambda_{1}$ & $\lambda_{1}$
& $\lambda_{5}$\\
$\lambda_{3}$ & $\lambda_{1}$ & $\lambda_{1}$ & $\lambda_{3}$ & $\lambda_{4}$
& $\lambda_{1}$\\
$\lambda_{4}$ & $\lambda_{1}$ & $\lambda_{1}$ & $\lambda_{4}$ & $\lambda_{3}$
& $\lambda_{1}$\\
$\lambda_{5}$ & $\lambda_{1}$ & $\lambda_{5}$ & $\lambda_{1}$ & $\lambda_{1}$
& $\lambda_{2}$%
\end{tabular}
\ \ \ \ $\\
with resonant decomposition:\\
\multicolumn{1}{c}{$S_{0}=\left\{  \lambda_{1},\lambda_{2},\lambda
_{3}\right\}  $}\\
\multicolumn{1}{c}{$S_{1}=\left\{  \lambda_{1},\lambda_{4},\lambda
_{5}\right\}  $}%
\end{tabular}
\ \ \ \ \ \ \ \
\begin{tabular}
[c]{l}%
$%
\begin{tabular}
[c]{c|ccccc}%
$S_{\left(  5\right)  }^{779}$ & $\lambda_{1}$ & $\lambda_{2}$ & $\lambda_{3}$
& $\lambda_{4}$ & $\lambda_{5}$\\\hline
$\lambda_{1}$ & $\lambda_{1}$ & $\lambda_{1}$ & $\lambda_{1}$ & $\lambda_{1}$
& $\lambda_{1}$\\
$\lambda_{2}$ & $\lambda_{1}$ & $\lambda_{2}$ & $\lambda_{1}$ & $\lambda_{2}$
& $\lambda_{2}$\\
$\lambda_{3}$ & $\lambda_{1}$ & $\lambda_{1}$ & $\lambda_{3}$ & $\lambda_{3}$
& $\lambda_{3}$\\
$\lambda_{4}$ & $\lambda_{1}$ & $\lambda_{2}$ & $\lambda_{3}$ & $\lambda_{4}$
& $\lambda_{5}$\\
$\lambda_{5}$ & $\lambda_{1}$ & $\lambda_{2}$ & $\lambda_{3}$ & $\lambda_{5}$
& $\lambda_{4}$%
\end{tabular}
\ \ \ \ $\\
with resonant decomposition:\\
\multicolumn{1}{c}{$S_{0}=\left\{  \lambda_{1},\lambda_{2},\lambda_{3}%
,\lambda_{4}\right\}  $}\\
\multicolumn{1}{c}{$S_{1}=\left\{  \lambda_{1},\lambda_{2},\lambda_{3}%
,\lambda_{5}\right\}  $}%
\end{tabular}
\ \ \ \ \label{y1}%
\end{equation}

\begin{equation}%
\begin{tabular}
[c]{l}%
$%
\begin{tabular}
[c]{c|ccccc}%
$S_{\left(  5\right)  }^{922}$ & $\lambda_{1}$ & $\lambda_{2}$ & $\lambda_{3}$
& $\lambda_{4}$ & $\lambda_{5}$\\\hline
$\lambda_{1}$ & $\lambda_{1}$ & $\lambda_{1}$ & $\lambda_{1}$ & $\lambda_{1}$
& $\lambda_{1}$\\
$\lambda_{2}$ & $\lambda_{1}$ & $\lambda_{2}$ & $\lambda_{2}$ & $\lambda_{2}$
& $\lambda_{2}$\\
$\lambda_{3}$ & $\lambda_{1}$ & $\lambda_{2}$ & $\lambda_{3}$ & $\lambda_{3}$
& $\lambda_{3}$\\
$\lambda_{4}$ & $\lambda_{1}$ & $\lambda_{2}$ & $\lambda_{3}$ & $\lambda_{4}$
& $\lambda_{5}$\\
$\lambda_{5}$ & $\lambda_{1}$ & $\lambda_{2}$ & $\lambda_{3}$ & $\lambda_{5}$
& $\lambda_{4}$%
\end{tabular}
\ \ \ $\\
with resonant decomposition:\\
\multicolumn{1}{c}{$S_{0}=\left\{  \lambda_{1},\lambda_{2},\lambda_{3}%
,\lambda_{4}\right\}  $}\\
\multicolumn{1}{c}{$S_{1}=\left\{  \lambda_{1},\lambda_{2},\lambda_{3}%
,\lambda_{5}\right\}  $}%
\end{tabular}
\ \ \ \ \ \ \
\begin{tabular}
[c]{l}%
$%
\begin{tabular}
[c]{c|ccccc}%
$S_{\left(  5\right)  }^{968}$ & $\lambda_{1}$ & $\lambda_{2}$ & $\lambda_{3}$
& $\lambda_{4}$ & $\lambda_{5}$\\\hline
$\lambda_{1}$ & $\lambda_{1}$ & $\lambda_{1}$ & $\lambda_{1}$ & $\lambda_{1}$
& $\lambda_{1}$\\
$\lambda_{2}$ & $\lambda_{1}$ & $\lambda_{2}$ & $\lambda_{2}$ & $\lambda_{4}$
& $\lambda_{4}$\\
$\lambda_{3}$ & $\lambda_{1}$ & $\lambda_{2}$ & $\lambda_{3}$ & $\lambda_{4}$
& $\lambda_{5}$\\
$\lambda_{4}$ & $\lambda_{1}$ & $\lambda_{4}$ & $\lambda_{4}$ & $\lambda_{2}$
& $\lambda_{2}$\\
$\lambda_{5}$ & $\lambda_{1}$ & $\lambda_{4}$ & $\lambda_{5}$ & $\lambda_{2}$
& $\lambda_{3}$%
\end{tabular}
\ \ \ $\\
with resonant decomposition:\\
\multicolumn{1}{c}{$S_{0}=\left\{  \lambda_{1},\lambda_{2},\lambda
_{3}\right\}  $}\\
\multicolumn{1}{c}{$S_{1}=\left\{  \lambda_{1},\lambda_{4},\lambda
_{5}\right\}  $}%
\end{tabular}
\ \ \ \label{y2}%
\end{equation}

\begin{equation}%
\begin{tabular}
[c]{l}%
$%
\begin{tabular}
[c]{c|ccccc}%
$S_{\left(  5\right)  }^{990}$ & $\lambda_{1}$ & $\lambda_{2}$ & $\lambda_{3}$
& $\lambda_{4}$ & $\lambda_{5}$\\\hline
$\lambda_{1}$ & $\lambda_{1}$ & $\lambda_{1}$ & $\lambda_{1}$ & $\lambda_{1}$
& $\lambda_{1}$\\
$\lambda_{2}$ & $\lambda_{1}$ & $\lambda_{2}$ & $\lambda_{3}$ & $\lambda_{4}$
& $\lambda_{5}$\\
$\lambda_{3}$ & $\lambda_{1}$ & $\lambda_{3}$ & $\lambda_{2}$ & $\lambda_{5}$
& $\lambda_{4}$\\
$\lambda_{4}$ & $\lambda_{1}$ & $\lambda_{4}$ & $\lambda_{5}$ & $\lambda_{2}$
& $\lambda_{3}$\\
$\lambda_{5}$ & $\lambda_{1}$ & $\lambda_{5}$ & $\lambda_{4}$ & $\lambda_{3}$
& $\lambda_{2}$%
\end{tabular}
\ \ \ \ \ $\\
with resonant decomposition:\\
\multicolumn{1}{c}{$S_{0}=\left\{  \lambda_{1},\lambda_{2},\lambda
_{5}\right\}  $}\\
\multicolumn{1}{c}{$S_{1}=\left\{  \lambda_{1},\lambda_{3},\lambda
_{4}\right\}  $}%
\end{tabular}
\ \ \ \ \ \ \ \ \
\begin{tabular}
[c]{l}%
$%
\begin{tabular}
[c]{c|ccccc}%
$S_{\left(  5\right)  }^{991}$ & $\lambda_{1}$ & $\lambda_{2}$ & $\lambda_{3}$
& $\lambda_{4}$ & $\lambda_{5}$\\\hline
$\lambda_{1}$ & $\lambda_{1}$ & $\lambda_{1}$ & $\lambda_{1}$ & $\lambda_{1}$
& $\lambda_{1}$\\
$\lambda_{2}$ & $\lambda_{1}$ & $\lambda_{2}$ & $\lambda_{3}$ & $\lambda_{4}$
& $\lambda_{5}$\\
$\lambda_{3}$ & $\lambda_{1}$ & $\lambda_{3}$ & $\lambda_{2}$ & $\lambda_{5}$
& $\lambda_{4}$\\
$\lambda_{4}$ & $\lambda_{1}$ & $\lambda_{4}$ & $\lambda_{5}$ & $\lambda_{3}$
& $\lambda_{2}$\\
$\lambda_{5}$ & $\lambda_{1}$ & $\lambda_{5}$ & $\lambda_{4}$ & $\lambda_{2}$
& $\lambda_{3}$%
\end{tabular}
\ \ \ \ \ $\\
with resonant decomposition:\\
\multicolumn{1}{c}{$S_{0}=\left\{  \lambda_{1},\lambda_{2},\lambda
_{3}\right\}  $}\\
\multicolumn{1}{c}{$S_{1}=\left\{  \lambda_{1},\lambda_{4},\lambda
_{5}\right\}  $}%
\end{tabular}
\ \ \ \ \ \label{y3}%
\end{equation}
\bigskip

The semigroups $S_{\left(  5\right)  }^{779}$ and $S_{\left(  5\right)
}^{922}$ also lead to trivial expansions: three reductions can be performed on
the resonant subalgebra leading finally to the original algebra\footnote{See
the structure of their multiplication tables and remember the comment made about
semigroups containing sub-semigroups with its own zero elements. In some sense
these semigroups can be considered as an enlargement of $S_{\left(  3\right)
}^{12}$ by new zero elements.}. However, more interesting structures are
obtained with the other semigroups.

\subsection{The semigroup $S_{\left(  5\right)  }^{770}$}

Expanding $\mathfrak{sl}\left(  2,\mathbb{R}\right)  $ with the semigroup
$S_{\left(  5\right)  }^{770}$, the reduction of the resonant subalgebra is
given by%
\begin{align}
\mathcal{G}_{0,S,R}^{\text{red}} &  =\left(  S_{0}^{\text{red}}\otimes
\mathcal{T}_{0}\right)  +\left(  S_{1}^{\text{red}}\otimes\mathcal{P}%
_{0}\right)  \label{GS770}\\
&  =\left\{  X_{\left(  1,2\right)  },X_{\left(  1,3\right)  }\right\}
+\left\{  X_{\left(  2,4\right)  },X_{\left(  2,5\right)  },X_{\left(
3,4\right)  },X_{\left(  3,5\right)  },\right\}  \nonumber\\
&  =\left\{  \lambda_{2}\otimes\left(  -i\sigma_{2}\right)  ,\ \left(
\lambda_{3}\otimes-i\sigma_{2}\right)  \right\}  +\left\{  \lambda_{4}%
\otimes\sigma_{1},\ \lambda_{5}\otimes\sigma_{1},\ \lambda_{4}\otimes
\sigma_{3},\ \lambda_{5}\otimes\sigma_{3}\right\}  \nonumber
\end{align}
where we have used the notation given in (\ref{not1}), (\ref{not2}). Renaming the generators as%
\begin{equation}%
\begin{tabular}
[c]{lll}%
$Y_{1}=X_{\left(  1,2\right)  }$ &  & $Y_{4}=X_{\left(  2,5\right)  }$\\
$Y_{2}=X_{\left(  1,3\right)  }$ &  & $Y_{5}=X_{\left(  3,4\right)  }$\\
$Y_{3}=X_{\left(  2,4\right)  }$ &  & $Y_{6}=X_{\left(  3,5\right)  }$.
\end{tabular}
\label{GS770__1}%
\end{equation}

Using the equation (\ref{z3}), the commutation relations of $\mathfrak{sl}%
(2,\mathbb{R})$ and the multiplication table of $S_{\left(  5\right)  }^{770}$
(given in (\ref{y1})) we obtain the following commutation relations for
$\mathcal{G}_{0,S,R}^{\text{red}}$%
\begin{equation}%
\begin{tabular}
[c]{lll}%
$\left[  Y_{1},Y_{3}\right]  =0$ &  & $\left[  Y_{2},Y_{5}\right]  =2Y_{3}$\\
$\left[  Y_{1},Y_{4}\right]  =-2Y_{6}$ &  & $\left[  Y_{2},Y_{6}\right]  =0$\\
$\left[  Y_{1},Y_{5}\right]  =0$ &  & $\left[  Y_{3},Y_{5}\right]  =2Y_{2}$\\
$\left[  Y_{1},Y_{6}\right]  =2Y_{4}$ &  & $\left[  Y_{3},Y_{6}\right]  =0$\\
$\left[  Y_{2},Y_{3}\right]  =-2Y_{5}$ &  & $\left[  Y_{4},Y_{5}\right]  =0$\\
$\left[  Y_{2},Y_{4}\right]  =0$ &  & $\left[  Y_{4},Y_{6}\right]  =2Y_{1}$%
\end{tabular}
\label{GS770__2}%
\end{equation}

It is direct to see that this algebra is just $\mathfrak{sl}(2,\mathbb{R}%
)\oplus\mathfrak{sl}(2,\mathbb{R})$, which is obviously semisimple. The two
commuting $\mathfrak{sl}\left(  2,\mathbb{R}\right)  $ algebras are
respectively generated by $\left\{  Y_{1},Y_{4},Y_{6}\right\}  $ and $\left\{
Y_{2},Y_{3},Y_{5}\right\}  $. The Killing-Cartan metric, calculated with those
structure constants, is given by:%
\begin{equation}
\mathbf{g}^{E,R,\text{red}}=diag\left(  -8,-8,8,8,8,8\right)  \label{n1_GS770}%
\end{equation}

As shown in equation (\ref{res_red_final}) of section \ref{semis} the
semisimplicity and compactness properties of the expanded algebra can be
directly predicted from intrinsic properties of the semigroup used in the
expansion. In fact, the matrices $\mathbf{g}^{\text{red}}\left(  S_{0}\right)
$, $\mathbf{g}^{\text{red}}\left(  S_{1}\right)  $ can be directly calculated
from the multiplication table of the semigroup $S_{\left(  5\right)  }^{770}$%
\begin{equation}
\mathbf{g}^{\text{red}}\left(  S_{0}\right)  =\mathbf{g}^{\text{red}}\left(
S_{1}\right)  =\left(
\begin{array}
[c]{cc}%
2 & 0\\
0 & 2
\end{array}
\right)  \text{.} \label{n2_GS770}%
\end{equation}
From eq. (\ref{res_red_final}) and since the eigenvalues $\xi_{\alpha_{0}}$
and $\xi_{\alpha_{1}}$ of $\mathbf{g}^{\text{red}}\left(  S_{0}\right)  $ and
$\mathbf{g}^{\text{red}}\left(  S_{1}\right)  $ are all positive we conclude
that the expansion of whatever algebra, with a decomposition $\mathcal{G}%
=V_{0}\oplus V_{1}$ and having the subspace structure (\ref{lres8}), will
preserve semisimplicity and compactness properties. In particular, if
$\mathcal{G}=\mathfrak{sl}\left(  2,\mathbb{R}\right)  $ (with $V_{0}%
=\mathcal{T}_{0}$ and $V_{1}=\mathcal{P}_{0}$) then from (\ref{res_red_final},
\ref{not05}, \ref{n1_GS770}) we obtain $\alpha=1/2$. Besides%
\begin{equation}
\mathcal{G}_{0,S,R}^{\text{red}}=\mathcal{T}_{0,S,R}^{\text{red}}%
+\mathcal{P}_{0,S,R}^{\text{red}} \label{n5GS770}%
\end{equation}
where%
\begin{align}
\mathcal{T}_{0,S,R}^{\text{red}}  &  =\mathcal{T}_{0}\otimes S_{0}%
^{\text{red}}\label{n6GS770}\\
\mathcal{P}_{0,S,R}^{\text{red}}  &  =\mathcal{P}_{0}\otimes S_{1}%
^{\text{red}}\nonumber
\end{align}
is the Cartan decomposition for the algebra $\mathcal{G}_{0,S,R}^{\text{red}}%
$, while the real compact form of $\mathcal{G}_{S,R}^{\text{red}}$ (the
complex form of $\mathcal{G}_{0,S,R}^{\text{red}}$) is given by
\begin{align}
\mathcal{G}_{k,S,R}^{\text{red}}  &  =\left(  S_{0}^{\text{red}}%
\otimes\mathcal{T}_{0}\right)  +\left(  S_{1}^{\text{red}}\otimes
i\mathcal{P}_{0}\right) \label{GS700_1}\\
&  =\left\{  \lambda_{2}\otimes i\sigma_{2},\ \lambda_{3}\otimes i\sigma
_{2}\right\}  +\left\{  \lambda_{4}\otimes i\sigma_{1},\ \lambda_{5}\otimes
i\sigma_{1},\ \lambda_{4}\otimes i\sigma_{3},\ \lambda_{5}\otimes i\sigma
_{3}\right\}  \text{.}\nonumber
\end{align}
Finally, relations (\ref{f7}) can also be directly checked,%
\begin{align}
\mathcal{T}_{0,S,R}^{\text{red}}  &  =\mathcal{G}_{0,S,R}^{\text{red}}%
\cap\mathcal{G}_{k,S,R}^{\text{red}}=\left\{  \lambda_{2}\otimes\left(
-i\sigma_{2}\right)  ,\ \lambda_{3}\otimes\left(  -i\sigma_{2}\right)
\right\} \label{GS770_2}\\
\mathcal{P}_{0,S,R}^{\text{red}}  &  =\mathcal{G}_{0,S,R}^{\text{red}}%
\cap\left(  i\mathcal{G}_{k,S,R}^{\text{red}}\right)  =\left\{  \lambda
_{4}\otimes\sigma_{1},\ \lambda_{5}\otimes\sigma_{1},\ \lambda_{4}%
\otimes\sigma_{3},\ \lambda_{5}\otimes\sigma_{3}\right\}  \text{.}\nonumber
\end{align}

\subsection{The semigroup $S_{\left(  5\right)  }^{968}$}

For this semigroup the matrices $\mathbf{g}^{\text{red}}\left(  S_{0}\right)
$, $\mathbf{g}^{\text{red}}\left(  S_{1}\right)  $ are given by:%
\begin{equation}
\mathbf{g}^{\text{red}}\left(  S_{0}\right)  =\mathbf{g}^{\text{red}}\left(
S_{1}\right)  =\left(
\begin{array}
[c]{cc}%
2 & 2\\
2 & 4
\end{array}
\right)  \text{.} \label{n1GS968}%
\end{equation}
Since their eigenvalues are all positive we conclude that semisimplicity and
compactness properties should be preserved. Considering $\mathcal{G}%
=\mathfrak{sl}\left(  2,\mathbb{R}\right)  $ equation (\ref{res_red_final})
takes the form%
\begin{equation}
\mathbf{g}^{E,R,\text{red}}=\alpha\left(
\begin{array}
[c]{cc}%
\left(
\begin{array}
[c]{cc}%
2 & 2\\
2 & 4
\end{array}
\right)  \otimes\left(  -8\right)  & 0\\
0 & \left(
\begin{array}
[c]{cc}%
2 & 2\\
2 & 4
\end{array}
\right)  \otimes\left(
\begin{array}
[c]{cc}%
8 & 0\\
0 & 8
\end{array}
\right)
\end{array}
\right)  \text{,} \label{n2GS968}%
\end{equation}
from where we can predict that the expanded algebra will have $2$ compact
generators and $4$ non-compact generators. In what follows, we are going to show that the expanded algebra is $\mathfrak{sl}%
\left(  2,\mathbb{R}\right)  \oplus\mathfrak{sl}\left(  2,\mathbb{R}\right)
$.

Expanding with the semigroup $S_{\left(  5\right)  }^{968}$ we have,%
\begin{align}
\mathcal{G}_{0,S,R}^{\text{red}} &  =\left(  S_{0}^{\text{red}}\otimes
\mathcal{T}_{0}\right)  +\left(  S_{1}^{\text{red}}\otimes\mathcal{P}%
_{0}\right)  \label{GS968}\\
&  =\left\{  X_{\left(  1,2\right)  },X_{\left(  1,3\right)  }\right\}
+\left\{  X_{\left(  2,4\right)  },X_{\left(  2,5\right)  },X_{\left(
3,4\right)  },X_{\left(  3,5\right)  },\right\}  \nonumber\\
&  =\left\{  \lambda_{2}\otimes\left(  -i\sigma_{2}\right)  ,\ \lambda
_{3}\otimes\left(  -i\sigma_{2}\right)  \right\}  +\left\{  \lambda_{4}%
\otimes\sigma_{1},\ \lambda_{5}\otimes\sigma_{1},\ \lambda_{4}\otimes
\sigma_{3},\ \lambda_{5}\otimes\sigma_{3}\right\}  \nonumber
\end{align}
Note that the structure is very similar to the algebra (\ref{GS770}). What
makes the difference between those algebras is that the elements of the
semigroups obey different tables of multiplication. Let us also define
$Y_{1},Y_{2},...,Y_{6}$ as in equation (\ref{GS770__1}). Using the equation
(\ref{z3}), the commutation relations of $\mathfrak{sl}(2,\mathbb{R})$ and the
multiplication table of $S_{\left(  5\right)  }^{968}$ (given in (\ref{y2}))
we obtain the following commutation relations for $\mathcal{G}_{0,S,R}%
^{\text{red}}$
\begin{equation}%
\begin{tabular}
[c]{lll}%
$\left[  Y_{1},Y_{3}\right]  =-2Y_{5}$ &  & $\left[  Y_{2},Y_{5}\right]
=2Y_{3}$\\
$\left[  Y_{1},Y_{4}\right]  =-2Y_{5}$ &  & $\left[  Y_{2},Y_{6}\right]
=2Y_{4}$\\
$\left[  Y_{1},Y_{5}\right]  =2Y_{3}$ &  & $\left[  Y_{3},Y_{5}\right]
=2Y_{1}$\\
$\left[  Y_{1},Y_{6}\right]  =2Y_{3}$ &  & $\left[  Y_{3},Y_{6}\right]
=2Y_{1}$\\
$\left[  Y_{2},Y_{3}\right]  =-2Y_{5}$ &  & $\left[  Y_{4},Y_{5}\right]
=2Y_{1}$\\
$\left[  Y_{2},Y_{4}\right]  =-2Y_{6}$ &  & $\left[  Y_{4},Y_{6}\right]
=2Y_{2}$%
\end{tabular}
\ \ \ \label{GS968_2}%
\end{equation}
The Killing-Cartan metric for this algebra is given by:%
\begin{equation}
\mathbf{g}^{E,R,\text{red}}=\left(
\begin{array}
[c]{cccccc}%
-8 & -8 & 0 & 0 & 0 & 0\\
-8 & -16 & 0 & 0 & 0 & 0\\
0 & 0 & 8 & 8 & 0 & 0\\
0 & 0 & 8 & 16 & 0 & 0\\
0 & 0 & 0 & 0 & 8 & 8\\
0 & 0 & 0 & 0 & 8 & 16
\end{array}
\right)  \label{n3GS968}%
\end{equation}
which, in fact, posesses $2$ negative and $4$ positive eigenvalues
corresponding to $2$ compact and $4$ non-compact generators. The
expanded algebra is $\mathfrak{sl}\left(  2,\mathbb{R}\right)  \oplus
\mathfrak{sl}\left(  2,\mathbb{R}\right)  $\footnote{In this case, the identification of the algebra with an $\mathfrak{sl}(2,\mathbb{R})\oplus\mathfrak{sl}(2,\mathbb{R})$ directly  follows from inspection of the Cartan-Killing matrix (its being non-singular and having just two negative eigenvalues).
} and the two commuting $\mathfrak{sl}%
\left(  2,\mathbb{R}\right)  $ subalgebras are generated respectively by%
\begin{align}
\mathfrak{sl}\left(  2,\mathbb{R}\right)  _{1} &  =\left\{  Y_{2}-Y_{1}%
,Y_{3}-Y_{4},Y_{5}-Y_{6}\right\}  \label{n4GS968}\\
\mathfrak{sl}\left(  2,\mathbb{R}\right)  _{2} &  =\left\{  Y_{1},Y_{3}%
,Y_{5}\right\}  \nonumber
\end{align}
Furthermore, from (\ref{not05}, \ref{n2GS968}, \ref{n3GS968}) it can be directly
obtained that $\alpha=1/2$ and relations similar to (\ref{n5GS770},
\ref{n6GS770}) hold for the Cartan decomposition of this algebra.

Finally, the real compact form of $\mathcal{G}_{S,R}^{\text{red}}$ (the
complex form of $\mathcal{G}_{0,S,R}^{\text{red}}$) is given by
\begin{align}
\mathcal{G}_{k,S,R}^{\text{red}}  &  =\left(  S_{0}^{\text{red}}%
\otimes\mathcal{T}_{0}\right)  +\left(  S_{1}^{\text{red}}\otimes
i\mathcal{P}_{0}\right) \label{GS968_3}\\
&  =\left\{  \lambda_{2}\otimes i\sigma_{2},\ \lambda_{3}\otimes i\sigma
_{2}\right\}  +\left\{  \lambda_{4}\otimes i\sigma_{1},\ \lambda_{5}\otimes
i\sigma_{1},\ \lambda_{4}\otimes i\sigma_{3},\ \lambda_{5}\otimes i\sigma
_{3}\right\} \nonumber
\end{align}
It is also easy to check the relations (\ref{f7})%
\begin{align}
\mathcal{T}_{0,S,R}^{\text{red}}  &  =\mathcal{G}_{0,S,R}^{\text{red}}%
\cap\mathcal{G}_{k,S,R}^{\text{red}}=\left\{  \lambda_{2}\otimes\left(
-i\sigma_{2}\right)  ,\ \lambda_{3}\otimes\left(  -i\sigma_{2}\right)
\right\} \label{GS968_4}\\
\mathcal{P}_{0,S,R}^{\text{red}}  &  =\mathcal{G}_{0,S,R}^{\text{red}}%
\cap\left(  i\mathcal{G}_{k,S,R}^{\text{red}}\right)  =\left\{  \lambda
_{4}\otimes\sigma_{1},\ \lambda_{5}\otimes\sigma_{1},\ \lambda_{4}%
\otimes\sigma_{3},\ \lambda_{5}\otimes\sigma_{3}\right\}  \text{.}\nonumber
\end{align}

\subsection{The semigroup $S_{\left(  5\right)  }^{990}$}

For this semigroup the matrices $\mathbf{g}^{\text{red}}\left(  S_{0}\right)
$, $\mathbf{g}^{\text{red}}\left(  S_{1}\right)  $ are given by:%
\begin{equation}
\mathbf{g}^{\text{red}}\left(  S_{0}\right)  =\mathbf{g}^{\text{red}}\left(
S_{1}\right)  =\left(
\begin{array}
[c]{cc}%
4 & 0\\
0 & 4
\end{array}
\right)  \text{.} \label{n1GS990}%
\end{equation}
Again, since their eigenvalues are all positive we conclude that
semisimplicity and compactness properties will be preserved. Taking
$\mathcal{G}=\mathfrak{sl}\left(  2,\mathbb{R}\right)  $ we can predict, using
equation (\ref{res_red_final}), that the expanded algebra will have $2$
compact generators and $4$ non-compact generators and it will be then
$\mathfrak{sl}\left(  2,\mathbb{R}\right)  \oplus\mathfrak{sl}\left(
2,\mathbb{R}\right)  $. Let us see this explicitly.

Expanding with the semigroup $S_{\left(  5\right)  }^{968}$ we have,%

\begin{align}
\mathcal{G}_{0,S,R}^{\text{red}} &  =\left(  S_{0}^{\text{red}}\otimes
\mathcal{T}_{0}\right)  +\left(  S_{1}^{\text{red}}\otimes\mathcal{P}%
_{0}\right)  \label{GS990}\\
&  =\left\{  X_{\left(  1,2\right)  },X_{\left(  1,5\right)  }\right\}
+\left\{  X_{\left(  2,3\right)  },X_{\left(  2,4\right)  },X_{\left(
3,3\right)  },X_{\left(  3,4\right)  },\right\}  \nonumber\\
&  =\left\{  \lambda_{2}\otimes\left(  -i\sigma_{2}\right)  ,\ \lambda
_{5}\otimes\left(  -i\sigma_{2}\right)  \right\}  +\left\{  \lambda_{3}%
\otimes\sigma_{1},\ \lambda_{4}\otimes\sigma_{1},\ \lambda_{3}\otimes
\sigma_{3},\ \lambda_{4}\otimes\sigma_{3}\right\}  \nonumber
\end{align}
Renaming the generators as%
\begin{equation}%
\begin{tabular}
[c]{lll}%
$Y_{1}=X_{\left(  1,2\right)  }$ &  & $Y_{4}=X_{\left(  2,4\right)  }$\\
$Y_{2}=X_{\left(  1,5\right)  }$ &  & $Y_{5}=X_{\left(  3,3\right)  }$\\
$Y_{3}=X_{\left(  2,3\right)  }$ &  & $Y_{6}=X_{\left(  3,4\right)  }$.
\end{tabular}
\ \ \label{GS990_1}%
\end{equation}
and using (\ref{z3}), the commutation relations of $\mathfrak{sl}%
(2,\mathbb{R})$ and the multiplication table of $S_{\left(  5\right)  }^{990}$
(given in (\ref{y3})) we obtain:
\begin{equation}%
\begin{tabular}
[c]{lll}%
$\left[  Y_{1},Y_{3}\right]  =-2Y_{5}$ &  & $\left[  Y_{2},Y_{5}\right]
=2Y_{4}$\\
$\left[  Y_{1},Y_{4}\right]  =-2Y_{6}$ &  & $\left[  Y_{2},Y_{6}\right]
=2Y_{3}$\\
$\left[  Y_{1},Y_{5}\right]  =2Y_{3}$ &  & $\left[  Y_{3},Y_{5}\right]
=2Y_{1}$\\
$\left[  Y_{1},Y_{6}\right]  =2Y_{4}$ &  & $\left[  Y_{3},Y_{6}\right]
=2Y_{2}$\\
$\left[  Y_{2},Y_{3}\right]  =-2Y_{6}$ &  & $\left[  Y_{4},Y_{5}\right]
=2Y_{2}$\\
$\left[  Y_{2},Y_{4}\right]  =-2Y_{5}$ &  & $\left[  Y_{4},Y_{6}\right]
=2Y_{1}$%
\end{tabular}
\ \ \label{GS990_2}%
\end{equation}
whose Killing-Cartan metric is given by:%
\begin{equation}
\mathbf{g}^{E,R,\text{red}}=diag\left(  -16,-16,16,16,16,16\right)
\label{n2GS990}%
\end{equation}
which again has $2$ negative and $4$ positive eigenvalues. Then the expanded
algebra is $\mathfrak{sl}\left(  2,\mathbb{R}\right)  \oplus\mathfrak{sl}%
\left(  2,\mathbb{R}\right)  $, the $2$ commuting $\mathfrak{sl}\left(
2,\mathbb{R}\right)  $ algebras being generated respectively by%
\begin{align}
\mathfrak{sl}\left(  2,\mathbb{R}\right)  _{1} &  =\left\{  \frac{1}{2}\left(
Y_{1}-Y_{2}\right)  ,\frac{1}{2}\left(  Y_{3}-Y_{4}\right)  ,\frac{1}%
{2}\left(  Y_{5}-Y_{6}\right)  \right\}  \label{n3GS990}\\
\mathfrak{sl}\left(  2,\mathbb{R}\right)  _{2} &  =\left\{  \frac{1}{2}\left(
Y_{1}+Y_{2}\right)  ,\frac{1}{2}\left(  Y_{3}+Y_{4}\right)  ,\frac{1}%
{2}\left(  Y_{5}+Y_{6}\right)  \right\}  \nonumber
\end{align}
Besides, from (\ref{res_red_final}, \ref{not05}, \ref{n2GS990}) it is also
obtained $\alpha=1/2$ and similar relations to (\ref{n5GS770}, \ref{n6GS770})
holds for the Cartan decomposition of this algebra.

Finally, the real compact form of $\mathcal{G}_{S,R}^{\text{red}}$ is given by%
\begin{align}
\mathcal{G}_{k,S,R}^{\text{red}} &  =\left(  S_{0}^{\text{red}}\otimes
\mathcal{T}_{0}\right)  +\left(  S_{1}^{\text{red}}\otimes i\mathcal{P}%
_{0}\right)  \label{GS990_3}\\
&  =\left\{  \lambda_{2}\otimes i\sigma_{2},\ \lambda_{5}\otimes i\sigma
_{2}\right\}  +\left\{  \lambda_{3}\otimes i\sigma_{1},\ \lambda_{4}\otimes
i\sigma_{1},\ \lambda_{3}\otimes i\sigma_{3},\ \lambda_{4}\otimes i\sigma
_{3}\right\}  \text{,}\nonumber
\end{align}
so we can check again the relations (\ref{f7})%
\begin{align}
\mathcal{T}_{0,S,R}^{\text{red}} &  =\mathcal{G}_{0,S,R}^{\text{red}}%
\cap\mathcal{G}_{k,S,R}^{\text{red}}=\left\{  \lambda_{2}\otimes\left(
-i\sigma_{2}\right)  ,\ \lambda_{5}\otimes\left(  -i\sigma_{2}\right)
\right\}  \label{GS990_4}\\
\mathcal{P}_{0,S,R}^{\text{red}} &  =\mathcal{G}_{0,S,R}^{\text{red}}%
\cap\left(  i\mathcal{G}_{k,S,R}^{\text{red}}\right)  =\left\{  \lambda
_{3}\otimes\sigma_{1},\ \lambda_{4}\otimes\sigma_{1},\ \lambda_{3}%
\otimes\sigma_{3},\ \lambda_{4}\otimes\sigma_{3}\right\}  \text{.}\nonumber
\end{align}

\subsection{The semigroup $S_{\left(  5\right)  }^{991}$}

This case is different from the ones above. Indeed, the matrices
$\mathbf{g}^{\text{red}}\left(  S_{0}\right)  $, $\mathbf{g}^{\text{red}%
}\left(  S_{1}\right)  $ are given by:%
\begin{equation}
\mathbf{g}^{\text{red}}\left(  S_{0}\right)  =\left(
\begin{array}
[c]{cc}%
4 & 0\\
0 & 4
\end{array}
\right)  \ ;\ \ \mathbf{g}^{\text{red}}\left(  S_{1}\right)  =\left(
\begin{array}
[c]{cc}%
0 & 4\\
4 & 0
\end{array}
\right)  \label{n1GS991}%
\end{equation}
so $\mathbf{g}^{\text{red}}\left(  S_{0}\right)  $ has again positive
eigenvalues, while $\mathbf{g}^{\text{red}}\left(  S_{1}\right)  $ have a
positive and negative one. Then in this case the structure of the expanded
algebra is different from the cases above and indeed, taking $\mathcal{G}%
=\mathfrak{sl}\left(  2,\mathbb{R}\right)  $, the Killing-Cartan form of the
expanded algebra is now:%
\begin{equation}
\mathbf{g}^{E,R,\text{red}}=\alpha\left(
\begin{array}
[c]{cc}%
\left(
\begin{array}
[c]{cc}%
4 & 0\\
0 & 4
\end{array}
\right)  \otimes\left(  -8\right)  & 0\\
0 & \left(
\begin{array}
[c]{cc}%
0 & 4\\
4 & 0
\end{array}
\right)  \otimes\left(
\begin{array}
[c]{cc}%
8 & 0\\
0 & 8
\end{array}
\right)
\end{array}
\right)  \text{,} \label{n2GS991}%
\end{equation}
which has $4$ negative and $2$ positive eigenvalues (taking in account that
$\alpha$ is a positive number) so while semisimplicity will again be preserved
compactness properties will not. Let us see this explicitly.

Expanding explicitly with the semigroup $S_{\left(  5\right)  }^{968}$ we
have,%
\begin{align}
\mathcal{G}_{0,S,R}^{\text{red}} &  =\left(  S_{0}^{\text{red}}\otimes
\mathcal{T}_{0}\right)  +\left(  S_{1}^{\text{red}}\otimes\mathcal{P}%
_{0}\right)  \label{GS991}\\
&  =\left\{  X_{\left(  1,2\right)  },X_{\left(  1,3\right)  }\right\}
+\left\{  X_{\left(  2,4\right)  },X_{\left(  2,5\right)  },X_{\left(
3,4\right)  },X_{\left(  3,5\right)  },\right\}  \nonumber\\
&  =\left\{  \lambda_{2}\otimes\left(  -i\sigma_{2}\right)  ,\ \lambda
_{3}\otimes\left(  -i\sigma_{2}\right)  \right\}  +\left\{  \lambda_{4}%
\otimes\sigma_{1},\ \lambda_{5}\otimes\sigma_{1},\ \lambda_{4}\otimes
\sigma_{3},\ \lambda_{5}\otimes\sigma_{3}\right\}  \nonumber
\end{align}
Again the structure is very similar to the algebra (\ref{GS770}) and
(\ref{GS968}), but obviously the algebra is completly different since the
multiplication rules of the corresponding semigroups are different. Let us
define%
\begin{equation}%
\begin{tabular}
[c]{lll}%
$Y_{1}=X_{\left(  1,2\right)  }$ &  & $Y_{4}=X_{\left(  2,5\right)  }$\\
$Y_{2}=X_{\left(  1,3\right)  }$ &  & $Y_{5}=X_{\left(  3,4\right)  }$\\
$Y_{3}=X_{\left(  2,4\right)  }$ &  & $Y_{6}=X_{\left(  3,5\right)  }$.
\end{tabular}
\ \ \ \ \ \label{GS991_1}%
\end{equation}
Using the equation (\ref{z3}), the commutation relations of $\mathfrak{sl}%
(2,\mathbb{R})$ and the multiplication table of $S_{\left(  5\right)  }^{991}$
(given in (\ref{y3})) we obtain the following algebra:
\begin{equation}%
\begin{tabular}
[c]{lll}%
$\left[  Y_{1},Y_{3}\right]  =-2Y_{5}$ &  & $\left[  Y_{2},Y_{5}\right]
=2Y_{4}$\\
$\left[  Y_{1},Y_{4}\right]  =-2Y_{6}$ &  & $\left[  Y_{2},Y_{6}\right]
=2Y_{3}$\\
$\left[  Y_{1},Y_{5}\right]  =2Y_{3}$ &  & $\left[  Y_{3},Y_{5}\right]
=2Y_{2}$\\
$\left[  Y_{1},Y_{6}\right]  =2Y_{4}$ &  & $\left[  Y_{3},Y_{6}\right]
=2Y_{1}$\\
$\left[  Y_{2},Y_{3}\right]  =-2Y_{6}$ &  & $\left[  Y_{4},Y_{5}\right]
=2Y_{1}$\\
$\left[  Y_{2},Y_{4}\right]  =-2Y_{5}$ &  & $\left[  Y_{4},Y_{6}\right]
=2Y_{2}$%
\end{tabular}
\ \ \ \ \ \label{GS991_2}%
\end{equation}
whose Killing-Cartan metric is given by%
\begin{equation}
\mathbf{g}^{E,R,\text{red}}=\left(
\begin{array}
[c]{cccccc}%
-16 & 0 & 0 & 0 & 0 & 0\\
0 & -16 & 0 & 0 & 0 & 0\\
0 & 0 & 0 & 16 & 0 & 0\\
0 & 0 & 16 & 0 & 0 & 0\\
0 & 0 & 0 & 0 & 0 & 16\\
0 & 0 & 0 & 0 & 16 & 0
\end{array}
\right)  \label{n3GS991}%
\end{equation}
with eigenvalues $\left(  -16,-16,-16,-16,16,16\right)  $ corresponding to $4$
compact generators $\left\{  Y_{1},Y_{2},Y_{3}-Y_{4},Y_{5}-Y_{6}\right\}  $
and $2$ non-compact ones $\left\{  Y_{3}+Y_{4},Y_{5}+Y_{6}\right\}  $. The
expanded algebra is in this case $\mathfrak{su}\left(  2,\mathbb{R}\right)
\oplus\mathfrak{sl}\left(  2,\mathbb{R}\right)  $, the 2 commuting algebras
being generated respectively by:%
\begin{align}
\mathfrak{su}\left(  2,\mathbb{R}\right)   &  =\left\{  \frac{1}{2}\left(
Y_{1}-Y_{2}\right)  ,\frac{1}{2}\left(  Y_{3}-Y_{4}\right)  ,\frac{1}%
{2}\left(  Y_{5}-Y_{6}\right)  \right\}  \label{n4GS991}\\
\mathfrak{sl}\left(  2,\mathbb{R}\right)   &  =\left\{  \frac{1}{2}\left(
Y_{1}+Y_{2}\right)  ,\frac{1}{2}\left(  Y_{3}+Y_{4}\right)  ,\frac{1}%
{2}\left(  Y_{5}+Y_{6}\right)  \right\}  \nonumber
\end{align}
Besides, from (\ref{not05}, \ref{n2GS991}, \ref{n3GS991}) it is obtained again
$\alpha=1/2$.

\section{Summary}

\label{resum}

It was found that, under the action of the $S$-expansion procedure, some
properties of the Lie algebras (like commutativity, solvability and
nilpotency) are preserved while others (like semi-simplicity and compactness)
are not in general. This was analyzed for all kinds of expansions that can
be performed: for the direct product $\mathcal{G}_{S}=S\otimes\mathcal{G}$,
for the resonant subalgebra $\mathcal{G}_{S,R}$ and for the reduced
algebras\footnote{Remember that the existence of the resonant subalgebra and
of the reduced algebra are independent facts. As shown in picture
\ref{fig:fig0}, there are semigroups with resonant decomposition and with no
zero element and vice versa. Also it can happen that both exist
simultaneously.} $\mathcal{G}_{S,\text{red}}$ and $\mathcal{G}_{S,R,\text{red}%
}$. The same analysis was done for the expansion of a general Lie algebra
$\mathcal{G}$ on its Levi-Malcev decomposition, $\mathcal{G}=N\uplus S$. These
results are summarized in the following table:%
\begin{align*}
&  \text{\textbf{Properties preserved by the action of the} }%
S\text{\textbf{-expansion process}}\\
&
\begin{tabular}
[c]{|l|l|l|l|}\hline
\textbf{Original }$\mathcal{G}$ & \textbf{Expanded}$\mathcal{G}_{S}$ &
\textbf{Resonant }$\mathcal{G}_{S,R}$ & \textbf{Reduced }$\mathcal{G}%
_{S,R}^{\text{red}}$\\\hline
abelian & abelian & abelian & abelian\\\hline
solvable & solvable & solvable & solvable\\\hline
nilpotent & nilpotent & nilpotent & nilpotent\\\hline
compact & $\text{arbitrary}$ & $\text{arbitrary}$ & $\text{arbitrary}$\\\hline
$%
\begin{array}
[c]{c}%
\text{semisimple}\\
\mathcal{G=}S
\end{array}
$ & $%
\begin{array}
[c]{c}%
\text{arbitrary}\\
\mathcal{G}_{S}\mathcal{=}N_{\text{exp}}\uplus S_{\text{exp}}%
\end{array}
$ & $%
\begin{array}
[c]{c}%
\text{arbitrary}\\
\mathcal{G}_{S,R}\mathcal{=}N_{\text{exp,}R}\uplus S_{\text{exp,}R}%
\end{array}
$ & $%
\begin{array}
[c]{c}%
\text{arbitrary}\\
\mathcal{G}_{S,R}^{\text{red}}\mathcal{=}N_{\text{exp,}R}^{\text{red}}\uplus
S_{\text{exp,}R}^{\text{red}}%
\end{array}
$\\\hline
$%
\begin{array}
[c]{c}%
\text{arbitrary}\\
\mathcal{G=}N\uplus S
\end{array}
$ & $%
\begin{array}
[c]{c}%
\text{arbitrary}\\
\mathcal{G}_{S}\mathcal{=}N_{\text{exp}}\uplus S_{\text{exp}}%
\end{array}
$ & $%
\begin{array}
[c]{c}%
\text{arbitrary}\\
\mathcal{G}_{S,R}\mathcal{=}N_{\text{exp,}R}\uplus S_{\text{exp,}R}%
\end{array}
$ & $%
\begin{array}
[c]{c}%
\text{arbitrary}\\
\mathcal{G}_{S,R}^{\text{red}}\mathcal{=}N_{\text{exp,}R}^{\text{red}}\uplus
S_{\text{exp,}R}^{\text{red}}%
\end{array}
$\\\hline
\end{tabular}
\end{align*}

Figure \ref{fig:fig1} illustrates the general scheme of the classification
theory of Lie algebras and the arrows represent the action of the expansion
method on this classification. \textit{Grey arrows} are used when the
expansion method maps algebras of one set on to the same set, i.e., they
preserve some specific property. On the other hand, \textit{black arrows}
indicate that the expansion methods can map algebras of one specific set on to
the same set and also can lead us outside the set, to an algebra that will
have a Levi-Malcev decomposition.

\begin{figure}[th]
\centering
\includegraphics[scale=0.40]{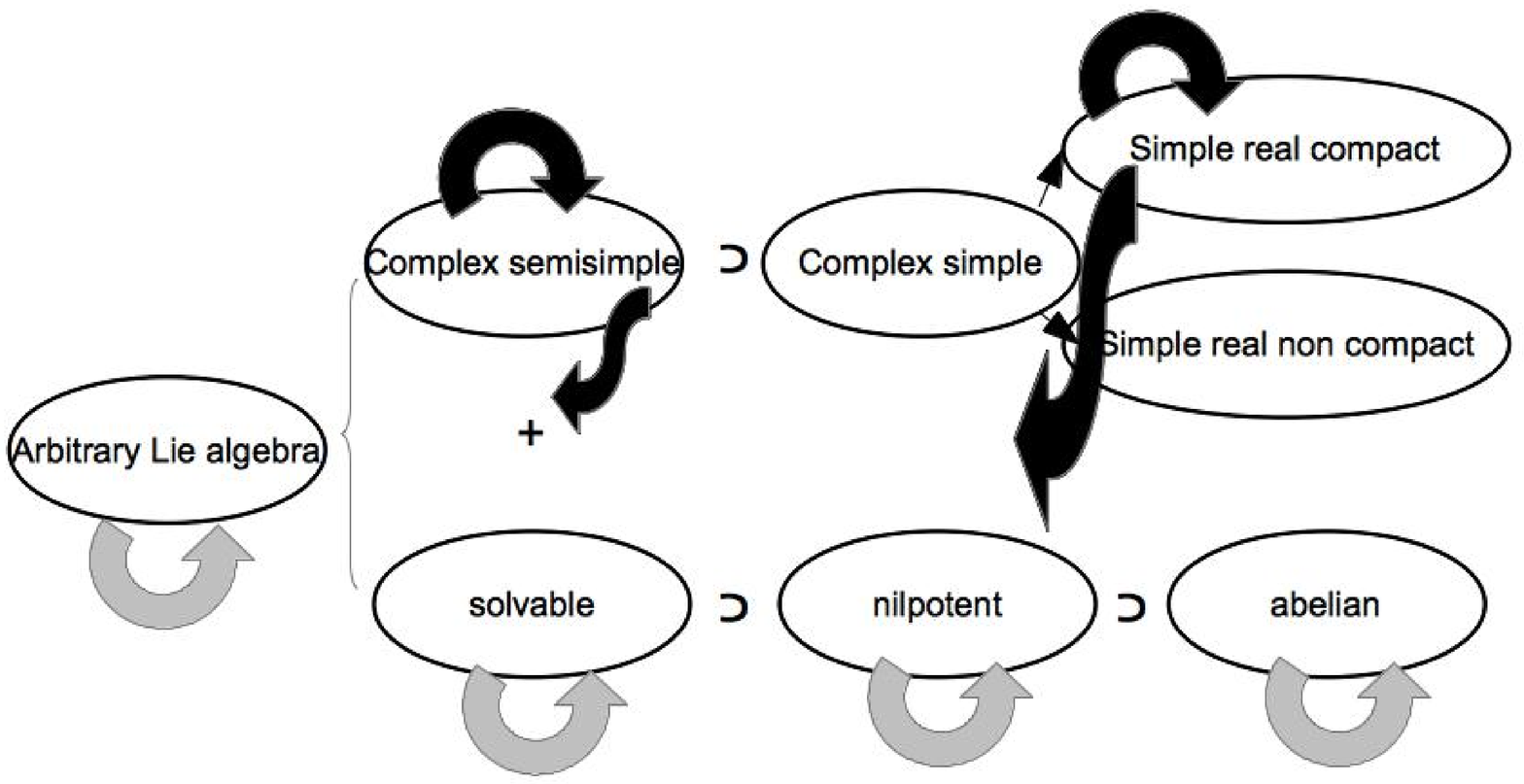}\caption{Action of the
expansion methods on the scheme of the Theory of the Classification of Lie
Algebras.}%
\label{fig:fig1}%
\end{figure}

The Cartan decomposition of the expanded algebra was also obtained when the
expansion preserves compactness. To check all these theoretical results, an
example is provided by studying all the possible expansions of the semisimple
algebra $\mathfrak{sl}(2,\mathbb{R})$ with semigroups of order up to $6$. Finally it was shown that a
classification of semigroups in terms of the eigenvalues of its associated
matrices $\mathbf{g}^{S}$, $\mathbf{g}^{S}\left(  S_{0}\right)  $,
$\mathbf{g}^{S}\left(  S_{1}\right)  $, $\mathbf{g}^{\text{red}}\left(
S_{0}\right)  $, $\mathbf{g}^{\text{red}}\left(  S_{0}\right)  $ is a useful
tool to predict semisimplicity and compactness properties of the expanded algebras.

Therefore, a procedure to answer the question: \textit{given two
Lie algebras, can they be related by some expansion procedure?} have mainly two steps:

$a)$ First we apply the theorems we have found in section \ref{preserved}. They are necessary conditions for the existence of the relation. If
the relation is forbidden by some of those therorems, then the recipe finishes here.

$b)$ If the relation is not forbidden, then the relation can still exist or
not. Then we have to study if one of the semigroups existing on each order
$n=2,...,6$ (those mentioned in section \ref{history}) with an appropiate
\textit{resonant decomposition} and maybe with a \textit{zero element} can
connect those algebras.

In this paper we have mainly developed the step $a)$. On the other hand, part
$b)$ can be performed by hand (as for example in \cite{Bianchi}) but in more
complicated cases it is necessary to use computer programs. A report with a
set of computer programs (some of them used here to study expansions of
$\mathfrak{sl}(2,\mathbb{R})$), more specifically a Java library, to study all
the possible relations by means of an S-expansion between two arbitrary Lie
algebras is a work in progress \cite{Nadal}. This library will give
\textit{resonant subalgebras} and \textit{reduced algebras} for expansions
performed with any semigroup and will also contain algorithms to classify the
expansions through comparing, by means of isomorphisms or anti-isomorphisms,
with those semigroups enumerated and characterized in section \ref{history}.

\section{Comments and Possible Developments}

\label{Com}

The results used in this paper can be extended to study what happens to the Gauss and Iwasawa decompositions under the expansion
procedure. Those decompositions play a crucial role in the theory of
representation of Lie algebras so this study could be very useful for
obtaining a general picture of the possible relations between arbitrary
algebras, mainly in those used in physics.

As shown in the table of section 4.2 the fraction of semigroups preserving semisimplicity is very small (on each level of the expansion procedure) and expansions performed with those semisgroups are in general they are direct sums of the original one. Therefore those expansions can be not so much interesing from the point of view of obtaining  fundamentally new objects. However, there is an interesting observation that can be made here:

\textit{For expansions with semigroup of order }$n=3,4,5$\textit{ if
semisimplicity is preserved, then the expanded algebra always contains the
original algebra as a subalgebra. }

This made us conjecture that there is no expansion procedure (at least an
expansion procedure that just uses semigroups with the conditions considered here) that permits to relate the
simple Lie algebras (the classical Lie algebras $A_{n}$, $B_{n}$, $C_{n}$,
$D_{n}$ and the special ones) among each other. A study to prove this conjecture
and some new physical applications with these results (particularly in gauge
Chern-Simons theories of gravity) is a work in progress (see \cite{Nadal}).

Appart from this fact, for physical applications it seems to be more interesting to use those semigroups that broke semisimplicity in the sense that they permit us to obtain fundamentally new objects. This is the case of two examples already mentioned in the introduction: $a)$ the non semisimple M super algebra, which is obtained as an expansion of the semisimple osp(32/1) superalgebra and $b)$ the non semisimple $\mathfrak{B}$ algebra obtained as an expansion of the AdS algebra. In both cases the original algebra is not a subalgebra of the expanded one, and their structure is richer than the expanded algebras obtained with semigroups preserving semisimplicity. In ref. \cite{Inos} propose a method to obtain all the semigroups, that in the same way for the $\mathfrak{B}$ algebra, permit us to obtain standard general relativity as a special limit of a CS theory in five dimensions.

On the other hand, as mentioned in \cite{Nesterenko}, it would be interesting
to know whether $S$-expansions fit to the classification of solvable Lie
algebras of a fixed dimension by means of $S$-expansions of simple
(semisimple) Lie algebras of the same dimension. The theoretical results proposed in this work could be useful to solve that problem.

\section*{Acknowledgements}

L. A. \& M.T. were supported by the Italian MIUR-PRIN contract 2009KHZKRX-007 Symmetries of the Universe and of the Fundamental Interactions. N. M. \& F.N. wish to thank
L. Andrianopoli, R. D'Auria and M. Trigiante for their kind hospitality at
Dipartimento di Scienza Applicata e Tecnologia (DISAT) of Politecnico di
Torino, where part of this work was done. N.M. was supported by FONDECYT
(Chile) grant 3130445; he is also grateful to O. Mi\v{s}kovi\v{c} and P. Salgado
for many valuable discussions. F. N. wants to thank CSIC for a JAE-Predoc
grant cofunded by the European Social Fund.

\section*{A. Reducible representations}

Let us consider a semi-simple Lie algebra $\mathcal{G}$ with a decomposition
$\mathcal{G=H+K}$ where $\mathcal{H}$ is a subalgebra. Then $\mathcal{G}%
/\mathcal{H}$ is a representation of $\mathcal{H}$. If besides $\mathcal{H}$
is semisimple, then $\mathcal{G}$ is a completly reducible representation with
respect to the adjoint action of $\mathcal{H}$ and therefore%
\begin{equation}
\left[  \mathcal{H},\mathcal{K}\right]  \subset\mathcal{K} \label{prorpe}%
\end{equation}
If we use the indices $i,j$ for $\mathcal{H}$ and $a,b\,\ $for $\mathcal{K}$
then we have that the Killing-Cartan metric of $\mathcal{G}$ restricted to $\mathcal{H}$ is
given by%
\begin{equation}
g_{ij}^{\mathcal{G}}=C_{ik}^{l}C_{jl}^{k}+C_{ia}^{b}C_{jb}^{a}\text{,}
\label{prope2}%
\end{equation}
where (\ref{prorpe}) was used, while the Killing-Cartan metric for $\mathcal{H}$ is given by%
\begin{equation}
g_{ij}^{\mathcal{H}}=C_{ik}^{l}C_{jl}^{k}\text{.} \label{prope3}%
\end{equation}
Considering the adjoint action in $\mathcal{K}$ of a generator in
$\mathcal{H}$, $T_{i}^{\mathcal{K}}=\left(  C_{ia}^{b}\right)  $,we have that%
\[
C_{ia}^{b}C_{jb}^{a}=Tr\left(  T_{i}^{\mathcal{K}}T_{j}^{\mathcal{K}}\right)
=\alpha^{\mathcal{K}}g_{ij}^{\mathcal{H}}%
\]
where the last equality comes from the fact that for whatever matrix
representation of the generators of $\mathcal{H}$, $T_{i}^{\mathcal{R}}$, the
symmetric invariant form constructed as $Tr\left(  T_{i}^{\mathcal{R}}%
T_{j}^{\mathcal{R}}\right)  $ must be proportional to $g_{ij}^{\mathcal{H}}$.
The reason is that $g_{ij}^{\mathcal{H}}$ is the only symmetric 2-index invariant form
that exist for a semisimple Lie algebra $\mathcal{H}$. The constant
$\alpha^{\mathcal{K}}$ depends on the dimensions $d_{\mathcal{H}}$,
$d_{\mathcal{K}}$ of $\mathcal{H}$ and of the representation $\mathcal{K}$ and
is given by%
\begin{equation}
\alpha^{\mathcal{K}}=\frac{d_{\mathcal{K}}c_{\mathcal{K}}}{d_{\mathcal{H}}}
\label{prope5}%
\end{equation}
where $c_{\mathcal{K}}$ is the positive eigenvalue of the second Casimir
operator, relative to the representation $\mathcal{K}$ of $\mathcal{H}$ (see
Ref. \cite{L3}: Gilmore, ex. 13 pag. 276). Therefore it is direct to see
that:
\[
g_{ij}^{\mathcal{G}}=\left(  1+\frac{d_{\mathcal{K}}c_{\mathcal{K}}%
}{d_{\mathcal{H}}}\right)  g_{ij}^{\mathcal{H}}\text{.}%
\]

\section*{B. Expansion of a general Lie algebra}

Here we give the proof of theorem $4$ of section \ref{gener} by showing that
$N_{\text{exp}}=E_{N}\uplus N^{\prime}$ \ is the radical of the expanded
algebra $\mathcal{G}_{\mathcal{S}}$. Let's prove first that $N_{\text{exp}}$
is an ideal:%
\begin{align*}
\left[  N_{\text{exp}},N_{\text{exp}}\right]   &  =\left[  E_{N}\uplus
N^{\prime},E_{N}\uplus N^{\prime}\right] \\
&  \subset\left[  E_{N},E_{N}\right]  +\left[  E_{N},N^{\prime}\right]
+\left[  N^{\prime},N^{\prime}\right] \\
&  \subset E_{N}+E_{N}+N^{\prime}=N_{\text{exp}}%
\end{align*}%
\begin{align*}
\left[  N_{\text{exp}},S_{\text{exp}}\right]   &  =\left[  N_{\text{exp}%
},S_{\text{exp}}\right]  =\left[  E_{N}\uplus N^{\prime},S_{\text{exp}}\right]
\\
&  =\left[  E_{N},S_{\text{exp}}\right]  +\left[  N^{\prime},S_{\text{exp}%
}\right] \\
&  \subset E_{N}+N^{\prime}=N_{\text{exp}}%
\end{align*}
so $N_{\text{exp}}$ it is an ideal because,
\begin{align*}
\left[  N_{\text{exp}},N_{\text{exp}}\right]   &  \subset N_{\text{exp}}\\
\left[  N_{\text{exp}},S_{\text{exp}}\right]   &  \subset N_{\text{exp}%
}\text{.}%
\end{align*}
Now let us prove that $N_{\text{exp}}$ is solvable. We see that%
\[
N_{\text{exp}}=E_{N}\uplus N^{\prime}=\left\{  X_{\left(  i_{N},\alpha\right)
},X_{\left(  i_{S},\alpha\right)  _{N^{\prime}}}\right\}
\]
where $X_{\left(  i_{N},\alpha\right)  }\in E_{N}$ and $X_{\left(
i_{S},\alpha\right)  _{N^{\prime}}}\in N^{\prime}$. \ We have to study the
behavior of $N_{\text{exp}}^{\left(  n\right)  }$ defined by%
\[
N_{\text{exp}}^{\left(  n\right)  }=\left[  N_{\text{exp}}^{\left(
n-1\right)  }\text{,}N_{\text{exp}}^{\left(  n-1\right)  }\right]
\]
where the set
\[
N_{\text{exp}}^{\left(  n\right)  }=\left\{  X_{\left(  i_{N}^{\left(
n\right)  },\alpha^{\left(  n\right)  }\right)  },X_{\left(  i_{S}^{\left(
n\right)  },\alpha^{\left(  n\right)  }\right)  _{N^{\prime}}}\right\}
\]
have the following commutation relations:%
\begin{align}
\left[  X_{\left(  i_{N}^{\left(  n-1\right)  },\alpha^{\left(  n-1\right)
}\right)  },X_{\left(  j_{N}^{\left(  n-1\right)  },\beta^{\left(  n-1\right)
}\right)  }\right]   &  =C_{i_{N}^{\left(  n-1\right)  }j_{N}^{\left(
n-1\right)  }}^{k_{N}^{\left(  n\right)  }}K_{\alpha^{\left(  n-1\right)
}\beta^{\left(  n-1\right)  }}^{\gamma^{\left(  n\right)  }}X_{\left(
k_{N}^{\left(  n\right)  },\gamma^{\left(  n\right)  }\right)  }\label{aa1}\\
\left[  X_{\left(  i_{N}^{\left(  n-1\right)  },\alpha^{\left(  n-1\right)
}\right)  },X_{\left(  j_{S}^{\left(  n-1\right)  },\beta^{\left(  n-1\right)
}\right)  _{N^{\prime}}}\right]   &  =C_{i_{N}^{\left(  n-1\right)  }\left(
j_{S}\right)  _{N^{\prime}}^{\left(  n-1\right)  }}^{k_{N}^{\left(  n\right)
}}K_{\alpha^{\left(  n-1\right)  }\beta_{N^{\prime}}^{\left(  n-1\right)  }%
}^{\gamma^{\left(  n\right)  }}X_{\left(  k_{N}^{\left(  n\right)  }%
,\gamma^{\left(  n\right)  }\right)  }\label{aa2}\\
\left[  X_{\left(  i_{S}^{\left(  n-1\right)  },\alpha^{\left(  n-1\right)
}\right)  _{N^{\prime}}},X_{\left(  j_{S}^{\left(  n-1\right)  }%
,\beta^{\left(  n-1\right)  }\right)  _{N^{\prime}}}\right]   &  =C_{\left(
i_{S}\right)  _{N^{\prime}}^{\left(  n-1\right)  }\left(  j_{S}\right)
_{N^{\prime}}^{\left(  n-1\right)  }}^{\left(  k_{S}\right)  _{N^{\prime}%
}^{\left(  n\right)  }}K_{\alpha_{N^{\prime}}^{\left(  n-1\right)  }%
\beta_{N^{\prime}}^{\left(  n-1\right)  }}^{\gamma_{N^{\prime}}^{\left(
n\right)  }}X_{\left(  k_{S}^{\left(  n\right)  },\gamma^{\left(  n\right)
}\right)  _{N^{\prime}}} \label{aa3}%
\end{align}
For is $N_{\text{exp}}^{\left(  1\right)  }$ we have%
\begin{align*}
N_{\text{exp}}^{\left(  1\right)  }  &  =\left[  N_{\text{exp}}^{\ \left(
0\right)  },N_{\text{exp}}^{\left(  0\right)  }\right] \\
&  =\left[  E_{N},E_{N}\right]  +\left[  E_{N},N^{\prime}\right]  +\left[
N^{\prime},N^{\prime}\right] \\
&  \subset E_{N}^{\left(  1\right)  }+N^{\prime\left(  1\right)  }%
+E_{N}=\left\{  E_{N}^{\left(  1\right)  },N^{\prime\left(  1\right)  }%
,E_{N}^{\left(  0\right)  }\right\}
\end{align*}
where $N_{\text{exp}}^{\ \left(  0\right)  }=N_{\text{exp}}$, $E_{N}^{\left(
1\right)  }=\left[  E_{N},E_{N}\right]  $, $N^{\prime\left(  1\right)
}=\left[  N^{\prime},N^{\prime}\right]  $ and we have used the fact that
$\left[  E_{N},N^{\prime}\right]  \subset E_{N}$ as it can be seen in equation
(\ref{aa2}). Then for $N_{\text{exp}}^{\left(  2\right)  }$ we have%
\begin{align*}
N_{\text{exp}}^{\left(  2\right)  }  &  =\left[  N_{\text{exp}}^{\ \left(
1\right)  },N_{\text{exp}}^{\left(  1\right)  }\right] \\
&  =\left[  E_{N}^{\left(  1\right)  }+N^{\prime\left(  1\right)  }%
+E_{N}^{\left(  0\right)  },E_{N}^{\left(  1\right)  }+N^{\prime\left(
1\right)  }+E_{N}^{\left(  0\right)  }\right] \\
&  \subset\left\{  E_{N}^{\left(  2\right)  },N^{\prime\left(  2\right)
},E_{N}^{\left(  1\right)  },E_{N}^{\left(  0,1\right)  }\right\}
\end{align*}
where $E_{N}^{\left(  2\right)  }=\left[  E_{N}^{\left(  1\right)  }%
,E_{N}^{\left(  1\right)  }\right]  $, $N^{\prime\left(  2\right)  }=\left[
N^{\prime\left(  1\right)  },N^{\prime\left(  1\right)  }\right]  $ and
$E_{N}^{\left(  0,1\right)  }=\left[  E_{N}^{\left(  0\right)  }%
,N^{\prime\left(  1\right)  }\right]  $. Thus, as $N^{\prime}$ is solvable,
there exist som $m$ such that%
\[
N_{\text{exp}}^{\left(  m\right)  }=\left\{  E_{N}^{\left(  m\right)  }%
,E_{N}^{\left(  m-1\right)  },E_{N}^{\left(  m-2,m-1\right)  },...,E_{N}%
^{\left(  ...\right)  }\right\}  \subset E_{N}%
\]
On the other hand, as $E_{N}$ is solvable, there exist some $n$ such that
$E_{N}^{\left(  m\right)  }=0$. Therefore $N_{\text{exp}}$ is solvable too,
because there exist a value $p=m+n$ such that
\[
N_{\text{exp}}^{\left(  p\right)  }=N_{\text{exp}}^{\left(  m+n\right)
}=\left(  N_{\text{exp}}^{\left(  m\right)  }\right)  ^{\left(  n\right)
}=E_{N}^{\left(  n\right)  }=0\text{.}%
\]

Finally, suppose $N_{exp}=E_N\oplus N'$ is not maximal. This means that there exists a generator $X\in S_{exp}$ such that $N_{exp}'=N_{exp}\oplus \{X\}$ is a solvable ideal of $\mathcal{G}_{\mathcal{S}}$. Define $N''=N'\oplus \{X\}\subset E_S$. We have that
\begin{equation}
N''=N_{exp}'\cap E_S\,,
\end{equation}
so that $N''$ is a solvable Lie subalgebra of $E_S$. Moreover, if $S'=S_{exp}\ominus \{X\}$,
\begin{equation}
\mathcal{G}_{\mathcal{S}}=N_{exp}'\biguplus S'\,,
\end{equation}
so that, being $N_{exp}'$ an ideal by assumption, $[S',\,N_{exp}']\subset N_{exp}'$. Restricting both sides to $E_S$, we then have:
\begin{equation}
[S',\,N'']\subset N''\,,
\end{equation}
that is $N''=N'\oplus \{X\}$ is a solvable ideal of $E_S$, which cannot be since $N'$ is maximal in $E_S$.

\section*{C. Cartan decomposition under the S-expansion}

\textbf{The expanded semisimple algebra:}

Here is given the proof of theorem $5$ of section \ref{Cartandec}, i.e.
that (\ref{CC_1}-\ref{SCD_02}) is the Cartan decomposition of $\mathcal{G}%
_{0,S}$ when $\mathcal{G}_{k,S}$ is compact. This is done by providing a
conjugation $\sigma_{S}$ of $\mathcal{G}$ with respect to $\mathcal{G}_{0,S}$
such that:
\begin{equation}
\sigma_{S}\left(  \mathcal{G}_{k,S}\right)  \subset\mathcal{G}_{k,S}\text{
}\ \label{CD1}%
\end{equation}
and then by showing the following relations:%
\begin{align}
\mathcal{T}_{0,S} &  =\mathcal{G}_{0,S}\cap\mathcal{G}_{k,S}\label{CD2}\\
\mathcal{P}_{0,S} &  =\mathcal{G}_{0,S}\cap\left(  i\mathcal{G}_{k,S}\right)
\label{CD3}%
\end{align}

Let us find how the explicit form of $\sigma_{S}$ is found. Consider the elements
$A=a^{i}X_{i}$ $\in\mathcal{G}$ and $B=b^{\left(  i,\alpha\right)  }X_{\left(
i,\alpha\right)  }\in\mathcal{G}_{S}$ and let $\left\{  X_{i}\right\}
=\left\{  X_{i^{\left(  0\right)  }},iX_{i^{\left(  0\right)  }}\right\}
_{i^{\left(  0\right)  }=1}^{\dim\mathcal{G}_{0}}$ and $\left\{  X_{\left(
i,\alpha\right)  }\right\}  =\left\{  X_{\left(  i^{\left(  0\right)  }%
,\alpha\right)  },iX_{\left(  i^{\left(  0\right)  },\alpha\right)  }\right\}
_{i^{\left(  0\right)  }=1}^{\dim\mathcal{G}_{0}}$ be respectively the bases
of $\mathcal{G}$ and $\mathcal{G}_{S}$. Then, we can write%
\begin{align}
A &  =a^{i}X_{i}=a^{i^{\left(  0\right)  }}X_{i^{\left(  0\right)  }}%
+i\tilde{a}^{i^{\left(  0\right)  }}X_{i^{\left(  0\right)  }}\label{CD4_0}\\
B &  =b^{\left(  i,\alpha\right)  }X_{\left(  i,\alpha\right)  }=b^{\left(
i^{\left(  0\right)  },\alpha\right)  }X_{\left(  i^{\left(  0\right)
},\alpha\right)  }+i\tilde{b}^{\left(  i^{\left(  0\right)  },\alpha\right)
}X_{\left(  i^{\left(  0\right)  },\alpha\right)  }\nonumber
\end{align}
where $a^{i^{\left(  0\right)  }},\tilde{a}^{i^{\left(  0\right)  }%
},b^{\left(  i^{\left(  0\right)  },\alpha\right)  }$ and $\tilde{b}^{\left(
i^{\left(  0\right)  },\alpha\right)  }$ are real constants. Let us also
define a mapping $\sigma_{S}:\mathcal{G}_{S}\rightarrow\mathcal{G}_{S}$ such
that
\begin{align}
\sigma_{S}\left(  \alpha B_{1}+\beta B_{2}\right)    & =\bar{\alpha}\sigma
_{S}\left(  B_{1}\right)  +\bar{\beta}\sigma_{S}\left(  B_{2}\right)  \text{,
}\forall B_{1},B_{2}\in\mathcal{G}_{S}\text{ , }\forall\alpha,\beta\in%
%TCIMACRO{\U{2102} }%
%BeginExpansion
\mathbb{C}
%EndExpansion
\label{CD4_1}\\
\sigma_{S}\left(  \lambda_{\alpha}\otimes A\right)    & =\lambda_{\alpha
}\otimes\sigma\left(  A\right)  \text{,}\ \forall\lambda_{\alpha}\in S\text{,
}\forall A\in\mathcal{G}\nonumber
\end{align}
where $\sigma$ is the conjugation of $\mathcal{G}$ with respect to
$\mathcal{G}_{0}$ and where $\bar{\alpha}$ denote the complex conjutate of
$\alpha$. Then the mapping $\sigma_{S}$ of an arbitrary element $B\in
\mathcal{G}_{S}$ can be expressed in terms of $\sigma$ as follows:%
\begin{align}
\sigma_{S}\left(  B\right)    & =\sigma_{S}\left(  b^{\left(  i,\alpha\right)
}X_{\left(  i,\alpha\right)  }\right)  \label{CD4}\\
& =\sigma_{S}\left(  b^{\left(  i^{\left(  0\right)  },\alpha\right)
}X_{\left(  i^{\left(  0\right)  },\alpha\right)  }+i\tilde{b}^{\left(
i^{\left(  0\right)  },\alpha\right)  }X_{\left(  i^{\left(  0\right)
},\alpha\right)  }\right)  \nonumber\\
& =\lambda_{\alpha}\otimes\sigma\left(  b^{\left(  i^{\left(  0\right)
},\alpha\right)  }X_{i^{\left(  0\right)  }}+i\tilde{b}^{\left(  i^{\left(
0\right)  },\alpha\right)  }X_{i^{\left(  0\right)  }}\right)  \nonumber
\end{align}
Then it is straightforward to show that $\sigma_{S}$ is a conjugation of $\mathcal{G}%
_{S}$ with respect to $\mathcal{G}_{0,S}$, i.e., that also satisfies
\[
\sigma_{S}\left[  B_{1},B_{2}\right]  =\left[  \sigma_{S}\left(  B_{1}\right)
,\sigma_{S}\left(  B_{2}\right)  \right]  \text{ \  }\forall B_{1},B_{2}\in
\mathcal{G}_{S}\text{ \ \ and \ }\left(  \sigma_{S}\right)  ^{2}=I\text{.}%
\]

Now let us prove (\ref{CD1}), i.e., that $\mathcal{G}_{k,S}$ is invariant
under the conjugation (\ref{CD4}). In fact, consider the action of $\sigma
_{S}$ on an abitrary element $K=k^{\left(  i^{\left(  k\right)  }%
,\alpha\right)  }X_{\left(  i^{\left(  k\right)  },\alpha\right)  }%
\in\mathcal{G}_{k,S}$:%
\begin{align*}
\sigma_{S}\left(  k^{\left(  i^{\left(  k\right)  },\alpha\right)  }X_{\left(
i^{\left(  k\right)  },\alpha\right)  }\right)    & =k^{\left(  i^{\left(
k\right)  },\alpha\right)  }\sigma_{S}\left(  \lambda_{\alpha}\otimes
X_{i^{\left(  k\right)  }}\right)  =k^{\left(  i^{\left(  k\right)  }%
,\alpha\right)  }\lambda_{\alpha}\otimes\sigma\left(  X_{i^{\left(  k\right)
}}\right)  \\
& =k^{\left(  i^{\left(  k\right)  },\alpha\right)  }\lambda_{\alpha}\otimes
X_{i^{\left(  k\right)  }}=k^{\left(  i^{\left(  k\right)  },\alpha\right)
}\lambda_{\alpha}\otimes X_{i^{\left(  k\right)  }}\\
& =k^{\left(  i^{\left(  k\right)  },\alpha\right)  }X_{\left(  i^{\left(
k\right)  },\alpha\right)  }\in\mathcal{G}_{k,S}%
\end{align*}
where we have used (\ref{CD4_1}) and $\sigma\left(  \mathcal{G}_{k}\right)
\subset\mathcal{G}_{k}$. In this way (\ref{CD1}) is satisfied.

Now proving (\ref{CD2}) is easy because%
\[
\mathcal{T}_{0,S}=\left\{  X_{\left(  i_{0}^{\left(  0\right)  }%
,\alpha\right)  }\right\}  =\left\{  \lambda_{\alpha}\otimes X_{i_{0}^{\left(
0\right)  }}\right\}
\]
where by hypothesis $X_{i_{0}^{\left(  0\right)  }}\in\mathcal{G}_{0}%
\cap\mathcal{G}_{k}$. Then $X_{\left(  i_{0}^{\left(  0\right)  }%
,\alpha\right)  }$ is in $\mathcal{G}_{0,S}$ and also in $\mathcal{G}_{k,S}$,
i.e., $X_{\left(  i_{0}^{\left(  0\right)  },\alpha\right)  }\in
\mathcal{G}_{0,S}\cap\mathcal{G}_{k,S}$ therefore (\ref{CD2}) is true.

On the other hand to prove (\ref{CD3}) we just have to note that%
\[
\mathcal{P}_{0,S}=\left\{  X_{\left(  i_{1}^{\left(  0\right)  }%
,\alpha\right)  }\right\}  =\left\{  \lambda_{\alpha}\otimes X_{i_{1}^{\left(
0\right)  }}\right\}
\]
where $X_{i_{1}^{\left(  0\right)  }}\in\mathcal{G}_{0}\cap\left(
i\mathcal{G}_{k}\right)  $. Then $X_{\left(  i_{1}^{\left(  0\right)  }%
,\alpha\right)  }\in\mathcal{G}_{0,S}$ and also in $i\mathcal{G}_{k,S}$, i.e.,
$X_{\left(  i_{1}^{\left(  0\right)  },\alpha\right)  }\in\mathcal{G}%
_{0,S}\cap\left(  i\mathcal{G}_{k,S}\right)  $ so we have also proven
(\ref{CD3}).

\bigskip

\textbf{The semisimple resonant subalgebra}

Here we give the proof of theorem $6$ of section \ref{Cartandec}. We have to find
a compact real form, $\mathcal{G}_{k,S,R}$, of $\mathcal{G}_{S,R}$ (the
complex form of $\mathcal{G}_{0,S,R}$) satisfying the conditions (\ref{BCD3}),
that in this case read as:%
\begin{equation}
\sigma_{S,R}\left(  \mathcal{G}_{k,S,R}\right)  \subset\mathcal{G}%
_{k,S,R}\text{ and }\ \label{CDreS4}%
\end{equation}%
\begin{align}
\mathcal{T}_{0,S,R} &  =\mathcal{G}_{0,S,R}\cap\mathcal{G}_{k,S,R}%
\label{CDreS5}\\
\mathcal{P}_{0,S,R} &  =\mathcal{G}_{0,S,R}\cap\left(  i\mathcal{G}%
_{k,S,R}\right)  \label{CDreS6}%
\end{align}
where $\sigma_{S,R}$ is a conjugation in $\mathcal{G}_{S,R}$ with respect to
$\mathcal{G}_{0,S,R}$.

As we saw before, the expansion of the compact algebra $\mathcal{G}_{k}$,%
\begin{equation}
\mathcal{G}_{k,S}=S\otimes\mathcal{G}_{k}\text{,}\label{f1}%
\end{equation}
with%
\begin{equation}
\mathcal{G}_{k}=\mathcal{T}_{0}+i\mathcal{P}_{0}\text{.}\label{f2}%
\end{equation}
is compact when $\xi_{\alpha}>0$. Besides $\mathcal{G}_{k}$ satisfies the
resonant condition, as can be seen in (\ref{mario_res0}-\ref{mario_res}), so%
\begin{equation}
\mathcal{G}_{k,S,R}=\left(  S_{0}\otimes\mathcal{T}_{0}\right)  +\left(
S_{1}\otimes i\mathcal{P}_{0}\right)  \text{,}\label{f3}%
\end{equation}
is the resonant subalgebra of $\mathcal{G}_{k,S}$ and it is compact because it
is a subalgebra of a compact Lie algebra, $\mathcal{G}_{k,S}$.

Let us prove now that (\ref{CDreS4}-\ref{CDreS6}) are satisfied. Considering
that
\[
\left\{  X_{i^{\left(  0\right)  }}\right\}  _{i^{\left(  0\right)  }=1}%
^{\dim\mathcal{G}_{0}}\text{,\ \ \ }\left\{  X_{i_{0}^{\left(  0\right)  }%
}\right\}  _{i_{0}^{\left(  0\right)  }=1}^{\dim\mathcal{T}_{0}}\text{,
\ \ }\left\{  X_{i_{1}^{\left(  0\right)  }}\right\}  _{i_{0}^{\left(
0\right)  }=1}^{\dim\mathcal{P}_{0}}%
\]
are respectively bases of $\mathcal{G}$, $\mathcal{T}_{0}$ and $\mathcal{P}%
_{0}$ we have that an arbitrary element in $\mathcal{G=T}_{0}+\mathcal{P}_{0}$
can be written as follows%
\begin{align}
A &  =a^{i}X_{i}=a^{i^{\left(  0\right)  }}X_{i^{\left(  0\right)  }}%
+i\tilde{a}^{i^{\left(  0\right)  }}X_{i^{\left(  0\right)  }}\label{f3_2}\\
&  =a^{i_{r}^{\left(  0\right)  }}X_{i_{r}^{\left(  0\right)  }}+i\tilde
{a}^{i_{r}^{\left(  0\right)  }}X_{i_{r}^{\left(  0\right)  }}\text{, with
}r=0,1\text{,}\nonumber
\end{align}
where a sum on $r=0,1$ is also assumed. Then an arbitrary element on
$\mathcal{G}_{S,R}$ can be written as%
\[
B=%
%TCIMACRO{\dsum \limits_{r=0,1}}%
%BeginExpansion
{\displaystyle\sum\limits_{r=0,1}}
%EndExpansion
\left(  b^{\left(  i_{r}^{\left(  0\right)  },\alpha_{r}\right)  }X_{\left(
i_{r}^{\left(  0\right)  },\alpha_{r}\right)  }+i\tilde{b}^{\left(
i_{r}^{\left(  0\right)  },\alpha_{r}\right)  }X_{\left(  i_{r}^{\left(
0\right)  },\alpha_{r}\right)  }\right)
\]
so the conjugation $\sigma_{S}$, defined before, acting on this element gives%
\begin{equation}
\sigma_{S}\left(  B\right)  =%
%TCIMACRO{\dsum \limits_{r=0,1}}%
%BeginExpansion
{\displaystyle\sum\limits_{r=0,1}}
%EndExpansion
\lambda_{\alpha_{r}}\otimes\sigma\left(  b^{\left(  i_{r}^{\left(  0\right)
},\alpha_{r}\right)  }X_{i_{r}^{\left(  0\right)  }}+i\tilde{b}^{\left(
i_{r}^{\left(  0\right)  },\alpha_{r}\right)  }X_{i_{r}^{\left(  0\right)  }%
}\right)  \text{ ,\ \ }\forall B\in\mathcal{G}_{S,R}\label{f4__}%
\end{equation}
for $r=0,1$. Now let's consider an abitrary element $K\in\mathcal{G}%
_{k,S,R}\,$, i.e.,
\[
K=k^{\left(  i_{0}^{\left(  k\right)  },\alpha_{0}\right)  }X_{\left(
i_{0}^{\left(  k\right)  },\alpha_{0}\right)  }+\tilde{k}^{\left(  i_{\bar{0}%
}^{\left(  k\right)  },\alpha_{1}\right)  }X_{\left(  i_{\bar{0}}^{\left(
k\right)  },\alpha_{1}\right)  }%
\]
where $i_{0}^{\left(  k\right)  }$ and $i_{\bar{0}}^{\left(  k\right)  }$ are
indices living on $\mathcal{G}_{k,\mathcal{T}_{0}}=\mathcal{G}_{k}%
\cap\mathcal{T}_{0}$ and on $\mathcal{G}_{k}/\mathcal{G}_{k,\mathcal{T}_{0}}$
respectively. Then,%

\begin{align*}
\sigma_{S}\left(  K\right)   &  =k^{\left(  i_{0}^{\left(  k\right)  }%
,\alpha_{0}\right)  }\sigma_{S}\left(  X_{\left(  i_{0}^{\left(  k\right)
},\alpha_{0}\right)  }\right)  +\tilde{k}^{\left(  i_{\bar{0}}^{\left(
k\right)  },\alpha_{1}\right)  }\sigma_{S}\left(  X_{\left(  i_{\bar{0}%
}^{\left(  k\right)  },\alpha_{1}\right)  }\right)  \\
&  =k^{\left(  i_{0}^{\left(  k\right)  },\alpha_{0}\right)  }\lambda
_{\alpha_{0}}\otimes\sigma_{S}\left(  X_{i_{0}^{\left(  k\right)  }}\right)
+\tilde{k}^{\left(  i_{\bar{0}}^{\left(  k\right)  },\alpha_{1}\right)
}\lambda_{\alpha_{1}}\otimes\sigma_{S}\left(  X_{i_{\bar{0}}^{\left(
k\right)  }}\right)
\end{align*}
and as
\begin{align*}
\lambda_{\alpha_{0}}\otimes\sigma_{S}\left(  X_{i_{0}^{\left(  k\right)  }%
}\right)   &  \subset S_{0}\otimes\mathcal{T}_{0}\\
\lambda_{\alpha_{1}}\otimes\sigma_{S}\left(  X_{i_{\bar{0}}^{\left(  k\right)
}}\right)   &  \subset S_{1}\otimes\left(  i\mathcal{P}_{0}\right)
\end{align*}
we have that%
\[
\sigma_{S}\left(  K\right)  \subset\left(  S_{0}\otimes\mathcal{T}_{0}\right)
+\left(  S_{1}\otimes\left(  i\mathcal{P}_{0}\right)  \right)  =\mathcal{G}%
_{k,S,R}%
\]
so (\ref{CDreS4}) is proved to be true. In the same way is it possible to show
(\ref{CDreS5}) and (\ref{CDreS6}).

\end{document}